\documentclass[12pt]{amsart}
\usepackage{latexsym,amssymb,amsfonts,amsmath,mathrsfs,float, hyperref}
\usepackage{thmtools,thm-restate}
\usepackage{xcolor}
\usepackage{tikz}
\hypersetup{
    colorlinks,
    linkcolor={black},
    citecolor={black},
}


\setlength{\textwidth}{\paperwidth}
\addtolength{\textwidth}{-7.5cm}
\calclayout

\newcommand{\parset}{
	\setlength{\parskip}{3mm}
  	\setlength{\parindent}{0mm}}
\parset
	
  
  \renewcommand{\Pr}{\mbox{\rm Pr}}
  \newcommand{\Exp}{{\mathbb{E}}}
  \newcommand{\E}{\mathbb{E}}

  \DeclareMathOperator{\Stab}{Stab}\DeclareMathOperator{\Spec}{Spec}
  
  \newcommand{\C}{\mathbb{C}} 
  \newcommand{\D}{\mathbb{D}} 
  \newcommand{\Z}{\mathbb{Z}} 
  \newcommand{\F}{\mathbb{F}} 
  \newcommand{\Lcal}{\mathcal{L}}
  \newcommand{\pmset}[1]{\{-1,1\}^{#1}} 
  \newcommand{\bset}[1]{\{0,1\}^{#1}} 
  \DeclareMathOperator{\vspan}{Span} 


  \newcommand{\one}{\mathbf{1}}

  \DeclareMathOperator{\supp}{supp}


  \newcommand{\eps}{\varepsilon}

  \newcommand{\poly}{\mbox{\rm poly}}

  \DeclareMathOperator{\Diag}{Diag}
  \DeclareMathOperator{\spec}{Spec}
  \DeclareMathOperator{\Est}{Est}
  \DeclareMathOperator{\FourEst}{FourEst}
  \DeclareMathOperator{\dist}{dist}

  \newcommand{\A}{\mathcal A}

  \newcommand{\beq}{\begin{equation}}
  \newcommand{\eeq}{\end{equation}}
  \newcommand{\beqn}{\begin{equation*}}
  \newcommand{\eeqn}{\end{equation*}}
  \newcommand{\beqr}{\begin{eqnarray}}
  \newcommand{\eeqr}{\end{eqnarray}}
  \newcommand{\beqrn}{\begin{eqnarray*}}
  \newcommand{\eeqrn}{\end{eqnarray*}}
  \newcommand{\bmline}{\begin{multline}}
  \newcommand{\emline}{\end{multline}}
  \newcommand{\bmlinen}{\begin{multline*}}
  \newcommand{\emlinen}{\end{multline*}}
  

  \theoremstyle{plain}
  \newtheorem{theorem}{Theorem}[section]
  \newtheorem{lemma}[theorem]{Lemma}
  \newtheorem{proposition}[theorem]{Proposition}
  
  \newtheorem{corollary}[theorem]{Corollary}
  
  \theoremstyle{definition}
  \newtheorem{definition}[theorem]{Definition}
  \newtheorem{algorithm}{Algorithm}[section]

  \theoremstyle{remark}

  \renewenvironment{proof}[1][]{
    	\begin{trivlist}
     	\item[\hspace{\labelsep}{\em\noindent Proof#1:\/}]}
     	{{\hfill$\Box$}
    	\end{trivlist}
  }

\begin{document}
\title{A near-optimal Quadratic Goldreich--Levin algorithm}
\author{Jop Bri\"{e}t}
\address{CWI \& QuSoft, Amsterdam, Netherlands}
\email{j.briet@cwi.nl}

\author{Davi Castro-Silva}
\address{University of Cambridge, Cambridge, UK}
\email{dd654@cam.ac.uk}

\begin{abstract}
    In this paper, we give a quadratic Goldreich-Levin algorithm that is close to optimal in the following ways.
    Given a bounded function~$f$ on the Boolean hypercube~$\F_2^n$ and any~$\eps>0$, the algorithm returns a quadratic polynomial $q: \F_2^n \to \F_2$ so that the correlation of~$f$ with the function~$(-1)^q$ is within an additive~$\eps$ of the maximum possible correlation with a quadratic phase function.
    The algorithm runs in~$O_\eps(n^3)$ time and makes $O_\eps(n^2\log n)$ queries to~$f$, which matches the information-theoretic lower bound of~$\Omega(n^2)$ queries up to a logarithmic factor.

    As a result, we obtain a number of corollaries:
    \begin{itemize}
        \item A near-optimal self-corrector of quadratic Reed-Muller codes, which makes $O_\eps(n^2\log n)$ queries to a Boolean function~$f$ and returns a quadratic polynomial~$q$ whose relative Hamming distance to~$f$ is within~$\eps$ of the minimum distance.
        \item An algorithmic polynomial inverse theorem for the order-3 Gowers uniformity norm.
        \item An algorithm that makes a polynomial number of queries to a bounded function~$f$ and decomposes~$f$ as a sum of~$\poly(1/\eps)$ quadratic phase functions and error terms of order~$\eps$.
    \end{itemize}

    Our algorithm is obtained using ideas from recent work on quantum learning theory.
    Its construction deviates from previous approaches based on algorithmic proofs of the inverse theorem for the order-3 uniformity norm (and in particular does not rely on the recent resolution of the polynomial Fre\u{\i}man-Ruzsa conjecture).
\end{abstract}

\maketitle

\section{Introduction}

Fourier analysis plays an indispensable role in theoretical computer science.
A celebrated example of its use is in the analysis of the property-testing algorithm of Blum, Luby and Rubinfeld~\cite{blum1993self}, which uses a constant number of queries to a function $f:\F_2^n\to\F_2$ to decide with good probability whether~$f$ is close to some linear function or far from all linear functions.
In effect, this algorithm tests whether the phase function~$(-1)^{f(x)}$ has a Fourier coefficient of large magnitude.
The famous Goldreich-Levin algorithm goes a step further and allows one to \emph{learn} with high precision the entire collection of large Fourier coefficients, using~$O(n\log n)$ queries~\cite{GoldreichLevin1989}.
Many important properties of Boolean functions are effectively captured by this set of coefficients~\cite{ODonnell2014}.

A fundamental application of these results concerns the theory of error correcting codes.
The \emph{Hadamard code} consists of the $2^n$-bit strings given by the complete evaluations of the set of linear functions over~$\F_2^n$.
The BLR-test then gives a constant-query algorithm that decides whether or not a given string is close to a Hadamard codeword.
In turn, the Goldreich-Levin algorithm allows one to efficiently decode a corrupted Hadamard codeword, and in fact it represents the first instance of a list-decoding algorithm~\cite[Section~3.5]{ODonnell2014}.

Fourier analysis also plays an important role in additive combinatorics, where it is used to study additive patters in dense sets of integers or subsets of finite vector spaces~\cite{TaoVu2006, Zhao2023}.
Szemer\'edi's theorem, a centerpiece of the area, asserts that any dense set of the integers contains arithmetic progressions of arbitrary finite length~\cite{Szemeredi1975}.
While standard tools from Fourier analysis are sufficient to prove this result for 3-term arithmetic progressions~\cite{Roth1953}, they are not enough to control the count of higher-order progressions.
A new proof of Szemer\'edi's theorem due to Gowers introduced many additional ideas that resulted in a new field of higher-order Fourier analysis~\cite{Gowers1998,Gowers2001, Tao2012}.
Much of the area revolves around the so-called uniformity norms. 
Roughly speaking, the uniformity norm~$\|f\|_{U^k}$ of order~$k$ measures the degree of oscillation of a (complex-valued) function~$f$ after it has been derived~$k$ times in random directions.
Intuitively, if such derivatives are approximately constant, then the function in some sense resembles a polynomial of degree less than~$k$.
Over finite vector spaces, this gives rise to the analysis of functions in terms of polynomial phases such as~$(-1)^{P(x)}$, where~$P:\F_2^n\to\F_2$ is a low-degree polynomial.
Deep inverse theorems show that a bounded function has non-negligible order-$k$ uniformity norm if and only if it correlates with a polynomial phase of degree at most~$k-1$~\cite{BergelsonTaoZiegler2010, TaoZiegler2012}.\footnote{In fields of small characteristic, the set of polynomial phases must be generalized to \emph{non-classical polynomial phases}, which will be relevant in this work as well.}

Connections between theoretical computer science and additive combinatorics where recognized early on and facilitated a steady exchange of ideas back and forth~\cite{Trevisan2009, Lovett2017,HatamiHatamiLovett2019}.
The inverse theorems for the uniformity norms are strongly connected to property-testing algorithms for low-degree polynomials.
In the context of testing closeness to quadratic Reed-Muller codes, building on ideas from~\cite{Gowers1998}, Samorodnitsky proved a $U^3$-inverse theorem for bounded functions over~$\F_2^n$, giving a quadratic analogue of the BLR linearity test~\cite{Samorodnitsky2007}; similar results were proved by Green and Tao for fields of larger characteristic and for cyclic groups~\cite{GreenTao2008}.
Define
\begin{equation*}
    \|f\|_{u^3} = \max \big|\Exp_{x\in \F_2^n}f(x)(-1)^{q(x)}\big|,
\end{equation*}
where the maximum is taken over quadratic polynomials~$q:\F_2^n\to \F_2$.
Samorodnitsky's result may be expressed as the inequality 
\begin{equation*}
\|f\|_{u^3} \geq \exp\big(-\poly(1/\|f\|_{U^3})\big),    
\end{equation*} 
valid over all functions $f: \F_2^n \to [-1, 1]$.

By giving algorithmic versions of the ideas used in the proof of this result, Tulsiani and Wolf obtained a quadratic version of the Goldreich-Levin algorithm~\cite{TulsianiW2014}.
Given a function~$f:\F_2^n\to[-1,1]$ such that~$\|f\|_{U^3} \geq \eps$, their algorithm makes~$\exp(-\poly(1/\eps)) n^3\log n$ queries to~$f$ and, with good probability, returns a quadratic polynomial~$q:\F_2^n\to\F$ such that 
\begin{equation*}
    \big|\Exp_{x\in \F_2^n}f(x)(-1)^{q(x)}\big|
    \geq
    \exp\big(-\poly(\tfrac{1}{\eps})\big).
\end{equation*}

The quantitative aspects of this algorithm are closely related to those of the inverse theorems of the uniformity norms.
Following work of Sanders that made substantial progress towards establishing the so-called \emph{polynomial Fre\u{\i}man-Ruzsa conjecture} (PFR) ~\cite{Sanders2012}, the exponential dependence on~$\eps$ in the algorithm was improved to quasi-polynomial by Ben-Sasson et al.~\cite{Ben-Sasson2014}.
A recent breakthrough of Gowers, Green, Manners and Tao resolved the PFR conjecture~\cite{GowersGMT2025}. But while this result implies a polynomial inverse theorem for the~$U^3$-norm, since there is no known algorithmic version of it, it has not yet led to an improved quadratic Goldreich-Levin algorithm.

The main contribution of this paper (Theorem~\ref{thm:main}) immediately gives such a polynomial quadratic Goldreich-Levin algorithm.
Perhaps surprisingly, our algorithm is not obtained via an algorithmic version of PFR but instead uses a completely new approach inspired by recent work on quantum computing.
In fact, our algorithm does not rely on the assumption that the $U^3$-norm is large, but instead simply returns a near-maximal quadratic correlator.

\subsection{Our results}

One perspective on the Goldreich-Levin algorithm is that it learns an $\eps$-maximal linear phase correlator for~$f$, that is, a linear phase $x\mapsto (-1)^{a\cdot x}$ such that
$$\big|\Exp_{x\in \F_2^n} f(x) (-1)^{a\cdot x}\big| > \max_{b\in \F_2^n} \big|\Exp_{x\in \F_2^n} f(x) (-1)^{b\cdot x}\big| - \eps,$$
while making only $n\log n\, \poly(1/\eps)$ queries to~$f$ and taking $n\log n\, \poly(1/\eps)$ running time.
This algorithm can be easily transformed into its usual ``list-decoding'' variant where one learns \emph{all} linear phases that correlate at least~$\eps$ with the given function~$f$, due to Parseval's identity.

In the quadratic setting, we no longer have an analogue of Parseval's identity, and in fact there can be $\exp(n)$-many quadratic phases that have nonnegligible correlation with~$f$.
However, it might still be possible to efficiently obtain an $\eps$-maximal quadratic correlator, and this is what a \emph{quadratic Goldreich-Levin} algorithm should obtain.

Our main result is the following:

\begin{theorem}[Quadratic Goldreich-Levin] \label{thm:main}
    Let ~$f: \F_2^n \to [-1, 1]$ be a $1$-bounded function and let $\eps,\, \delta>0$.
    There is a randomized algorithm~$\A$ that makes $n^2 \log n\, \log(1/\delta) (1/\eps)^{O(\log(1/\eps))}$ queries to~$f$ and, with probability at least $1-\delta$, outputs a quadratic polynomial $p: \F_2^n \to \F_2$ such that
    $$\big|\Exp_{x\in \F_2^n} f(x) (-1)^{p(x)}\big| > \max_{q \text{ quadratic}} \big|\Exp_{x\in \F_2^n} f(x) (-1)^{q(x)}\big| - \eps.$$
    In addition to the~$\tilde{O}(n^2)$ queries, the algorithm~$\A$ runs in time $O(n^3)$.
\end{theorem}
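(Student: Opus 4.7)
The plan is to write any quadratic polynomial as $q(x)=x^\top M x+b\cdot x+c$ with $M$ a symmetric $\F_2$-matrix, and to first recover the quadratic form $M$, then recover $b$ and $c$ by twisting $f$ by $(-1)^{x^\top M x}$ and running one more (linear) Goldreich-Levin step. Once $M$ is known, the algorithm only needs $\tilde O(n)$ additional queries via classical Goldreich-Levin on $\widetilde f(x):=f(x)(-1)^{x^\top M x}$, so the central and more delicate task is the recovery of $M$.

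For that, I would exploit the fact that the first-order derivative $\Delta_h f(x):=f(x)f(x+h)$ of a quadratic phase $(-1)^{q(x)}$ is a \emph{linear} phase $(-1)^{(Mh)\cdot x+c_h}$ in $x$ for each fixed $h$. A Cauchy-Schwarz argument of the kind used in the $U^3$-inverse theorem shows that if $f$ correlates at level $\eps$ with $(-1)^q$, then for a $\poly(\eps)$-fraction of shifts $h$ the function $\Delta_h f$ has a linear Fourier coefficient of magnitude at least $\poly(\eps)$ at the character $x\mapsto(-1)^{(Mh)\cdot x}$. Running a standard linear Goldreich-Levin on $\Delta_h f$ for random $h$ then yields, at $O(n\log n)$ queries per shift, a noisy sample $(h_i,a_i)$ with $a_i\approx Mh_i$; after $m=(1/\eps)^{O(\log(1/\eps))}$ shifts one has in principle enough column observations to reconstruct $M$.

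The hard part is to actually assemble these noisy, only-partially-correct column samples into a single symmetric matrix while keeping the total query budget near the $\Omega(n^2)$ information-theoretic lower bound. Earlier classical algorithms performed this assembly by invoking Fre\u{\i}man-Ruzsa-type structure theorems for the ``almost-homomorphism'' $h\mapsto a_h$, incurring at least quasi-polynomial losses. Instead I would adopt the quantum-learning strategy alluded to in the introduction: iteratively grow a short candidate list of partial quadratic forms, each round testing the candidates against freshly drawn derivative samples and branching into refinements that certify or reject each guess. If, as in related quantum phase-state learners, each round at most doubles the list size while halving the remaining error in $M$, then $O(\log(1/\eps))$ rounds drive the error below $\eps$ while keeping the list of size $(1/\eps)^{O(\log(1/\eps))}$; the best candidate is selected by directly estimating the correlation of $f$ with each resulting quadratic phase $(-1)^{q(x)}$.

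The main obstacle is exactly this noise-tolerant reconstruction step: one must certify a symmetric $n\times n$ $\F_2$-matrix from $(1/\eps)^{O(\log(1/\eps))}$ noisy column oracles without paying an extra factor of $n$ in the query count and without the clean combinatorial handle that PFR provides. The quantum-inspired boosting is what makes this possible, and verifying that a classical implementation of this boosting can indeed be carried out with $\tilde O(n^2)$ queries and $O(n^3)$ time is the heart of the argument.
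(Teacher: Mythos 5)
There is a genuine gap, and it sits exactly where you locate "the heart of the argument." Your setup is fine: the Cauchy--Schwarz identity $\tau^2 \leq \E_h |\widehat{\Delta_h f}(Mh)|$ does show that for a $\poly(\tau)$-fraction of shifts $h$ the derivative $\Delta_h f$ has a large Fourier coefficient at $Mh$, so linear Goldreich--Levin on $\Delta_h f$ yields, for each sampled $h$, a short list that contains $Mh$ with probability $\poly(\tau)$. But the step from these noisy, anonymous list-elements to the symmetric matrix $M$ is precisely the problem that forced Tulsiani--Wolf and Ben-Sasson et al.\ through Balog--Szemer\'edi--Gowers and Fre\u{\i}man--Ruzsa: you cannot tell \emph{which} element of each list equals $Mh$, the number of consistent selections across $\Theta(n)$ lists is exponential in $n$, and your proposed fix --- ``iteratively grow a candidate list, each round doubling its size while halving the remaining error in $M$'' --- is not backed by any mechanism. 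There is no meaningful notion of ``halving the error in $M$'' here (a single wrong row already changes the quadratic phase by a product of linear phases and can cost a constant factor of correlation), no test is described that certifies or rejects a \emph{partial} quadratic form from derivative samples, and no argument is given that the true $M^*$ survives the pruning. As written, the reconstruction step is a restatement of the problem rather than a solution.

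The paper avoids this reconstruction problem entirely, by a different mechanism than the one you gesture at. The key points are: (i) an uncertainty principle (Lemma~\ref{lem:uncertainty1}) showing that the set of pairs $(a,b)$ with $|\widehat{\Delta_a f}(b)|^2 > \tfrac{1}{2}\|f\|_2^4$ is \emph{isotropic} for the symplectic form --- so membership of a sampled pair in the target Lagrangian subspace $\Lcal(\phi)$ can be certified locally, just by estimating one Fourier coefficient, with no need to match columns; (ii) sampling pairs from the convolved distribution $Q_f = P_f * P_f$ (implemented approximately via Goldreich--Levin on $\Delta_a f$ for uniform $a$), whose restriction to the spectral set spans the Lagrangian after $O(n)$ samples in the ``robust'' case; and (iii) an energy-increment argument (Lemma~\ref{lem:boost}): when robust generation fails, one projects $f$ onto the eigenspace of a sampled pair, boosting the normalized correlation by a factor $1.08$, which can happen only $O(\log(1/\tau))$ times --- this, together with the approximate-local-maximizer bookkeeping and the conversion from affine-supported, non-classical stabilizer states back to classical quadratic phases, is what yields $\eps$-optimality within the stated query budget. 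None of this machinery is present or replaceable by the boosting heuristic in your proposal, so the proposal does not establish the theorem.
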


This result is essentially optimal in two different ways.
First, it obtains an $\eps$-optimal quadratic correlator for any given $\eps>0$;
this is in contrast to previous approaches, which, when given a function~$f$ satisfying $\|f\|_{u^3} \geq \tau$, could only generate a quadratic phase that has correlation either $\exp(-\poly(1/\tau))$ \cite{TulsianiW2014} or $\exp(-\poly\log(1/\tau))$ \cite{Ben-Sasson2014}.
(Note that it would be impossible to guarantee the exact optimal correlator using only a polynomial number of queries to~$f$.)
Moreover, it is easy to see that $\Omega(n^2)$ queries to~$f$ are \emph{necessary} even to obtain a (say) $0.1$-optimal quadratic correlator with probability at least $0.1$;
our algorithm matches this trivial lower bound in query complexity up to a logarithm factor.

As an easy corollary of our main result, we obtain an efficient (and essentially optimal) self-corrector algorithm for quadratic Reed-Muller codes that is agnostic to the error rate:

\begin{corollary}[Optimal self-correction of quadratic Reed-Muller codes] \label{cor:ReedMuller}
    There is a query algorithm~$\A$ with the following guarantees.
    Given $\eps>0$ and query access to a Boolean function $f: \F_2^n\to \F_2$, $\A$ makes $n^2 \log n/\eps^{O(\log(1/\eps))}$ queries to~$f$ and, with probability at least~$2/3$, outputs a quadratic polynomial $p: \F_2^n \to \F_2$ satisfying
    $$\dist(f,\, p) < \min_{q \text{ quadratic}} \dist(f,\, q) + \eps,$$
    where $\dist$ denotes the normalized Hamming distance.
    In addition to the $\tilde{O}_{\eps}(n^2)$ queries, algorithm~$\A$ has runtime $O_{\eps}(n^3)$.
\end{corollary}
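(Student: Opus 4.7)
The plan is to reduce Corollary~\ref{cor:ReedMuller} directly to Theorem~\ref{thm:main} via the standard translation between Hamming distance and $\pm 1$-correlation. First I would set $F(x) := (-1)^{f(x)}$, which is $1$-bounded and so falls within the scope of Theorem~\ref{thm:main}. For any quadratic polynomial $q$, the identity
\[
\Exp_{x \in \F_2^n} F(x)(-1)^{q(x)} \;=\; 1 - 2\,\dist(f, q)
\]
combined with the fact that the class of quadratic polynomials is closed under adding the constant $1$ (which negates the correlation) yields $\min_q \dist(f, q) \le 1/2$ and
\[
\max_q \big|\Exp_x F(x)(-1)^{q(x)}\big| \;=\; 1 - 2\min_q \dist(f, q).
\]
Thus maximizing absolute correlation is the same as minimizing Hamming distance.

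Next I would invoke Theorem~\ref{thm:main} on $F$ with error parameter $\eps/2$ and failure probability $1/6$, at a cost of $n^2 \log n / \eps^{O(\log(1/\eps))}$ queries and $O(n^3)$ time. With probability at least $5/6$, this returns a quadratic polynomial $p$ such that $|1 - 2\,\dist(f, p)| > 1 - 2\min_q \dist(f, q) - \eps$. Splitting into the two possible signs of $1 - 2\,\dist(f, p)$ shows that
\[
\min\bigl\{\dist(f, p),\, \dist(f, p+1)\bigr\} \;<\; \min_q \dist(f, q) + \tfrac{\eps}{2},
\]
so at least one of the two quadratic candidates $p, p+1$ already meets the bound required by the corollary.

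Finally I would use $O(1/\eps^2)$ independent random samples to estimate $\dist(f, p)$ to additive precision $\eps/4$ by a Hoeffding bound with confidence at least $5/6$, and then output whichever of $p$ and $p+1$ has the smaller empirical distance. A short calculation shows that this output lies within $\eps$ of $\min_q \dist(f, q)$, and a union bound over the two randomized steps gives total success probability at least $2/3$. Both the query and time costs of this postprocessing are negligible compared to the bounds coming from Theorem~\ref{thm:main}, so the stated complexities are preserved. The reduction is entirely elementary and I do not foresee any substantive obstacle; the only subtlety worth flagging is that Theorem~\ref{thm:main} optimizes the \emph{absolute value} of the correlation rather than the signed quantity, which is why we invoke it with error parameter $\eps/2$ and perform the final sign-checking postprocessing.
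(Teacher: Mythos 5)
Your proposal is correct and follows essentially the same route as the paper: pass to $(-1)^{f}$, apply Theorem~\ref{thm:main} with a rescaled error parameter and constant failure probability, and then use $O(1/\eps^2)$ extra queries to resolve the sign ambiguity (choosing between $p$ and $p+\one$) before concluding via the identity $\Exp_x (-1)^{f(x)+q(x)} = 1-2\dist(f,q)$. The only differences are in the choice of constants and in whether one estimates the signed correlation or the empirical distance directly, which are immaterial.
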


Combining our main theorem with the recent resolution of Marton's conjecture by Gowers, Green, Manners and Tao \cite{GowersGMT2025}, we easily obtain an algorithmic inverse theorem for the Gowers $U^3$-norm with polynomial bounds:

\begin{corollary}[Algorithmic polynomial Gowers inverse theorem] \label{cor:PGI}
    Let $\gamma>0$ and let $f: \F_2^n \to [-1,\, 1]$ be a function with $\|f\|_{U^3} \geq \gamma$.
    There is a randomized algorithm~$\A$ that makes $n^2 \log n/\gamma^{O(\log(1/\gamma))}$ queries to~$f$, runs in time $O(n^3)$ and that, with probability at least~$2/3$, outputs a quadratic polynomial $p: \F_2^n \to \F_2$ satisfying
    $\big|\Exp_{x\in \F_2^n} f(x) (-1)^{p(x)}\big| \geq (\gamma/2)^C$
    where $C>0$ is some absolute constant.
\end{corollary}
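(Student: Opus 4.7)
The plan is to derive Corollary~\ref{cor:PGI} as an almost immediate consequence of Theorem~\ref{thm:main}, once the polynomial $U^3$-inverse theorem implied by the Gowers--Green--Manners--Tao resolution of the polynomial Fre\u{\i}man--Ruzsa conjecture~\cite{GowersGMT2025} is invoked. Specifically, combining that resolution with the standard arguments that link Marton's conjecture to the $U^3$-inverse problem yields an absolute constant $C_0 \geq 1$ such that every $1$-bounded $f: \F_2^n \to [-1,1]$ satisfies
\[
\|f\|_{u^3} \;\geq\; \|f\|_{U^3}^{C_0}.
\]
Thus the hypothesis $\|f\|_{U^3} \geq \gamma$ already guarantees the \emph{existence} of a quadratic phase correlating with $f$ by at least $\gamma^{C_0}$; the remaining task of the corollary is purely to produce one algorithmically.

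With this ingredient in hand, I would invoke Theorem~\ref{thm:main} on $f$ with the parameter choices $\eps := \gamma^{C_0}/2$ and $\delta := 1/3$. Since the optimal quadratic correlator is known to achieve correlation at least $\gamma^{C_0}$, the theorem returns, with probability at least $2/3$, a quadratic polynomial $p: \F_2^n \to \F_2$ with
\[
\big|\E_{x \in \F_2^n} f(x)(-1)^{p(x)}\big| \;>\; \gamma^{C_0} - \gamma^{C_0}/2 \;=\; \gamma^{C_0}/2.
\]
Choosing the constant $C$ in the corollary to be any integer with $C \geq C_0 + 1$, one checks that $\gamma^{C_0}/2 \geq (\gamma/2)^C$ for all $\gamma \in (0,1]$, which yields the stated bound.

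The resource counts come directly from Theorem~\ref{thm:main}: the runtime is $O(n^3)$, and the query complexity evaluates to
\[
n^2 \log n \cdot \log 3 \cdot \big(2\gamma^{-C_0}\big)^{O(\log(2\gamma^{-C_0}))} \;=\; n^2 \log n \,/\, \gamma^{O(\log(1/\gamma))},
\]
with the fixed constant $C_0$ and the extra factor of $2$ absorbed into the implicit $O(\cdot)$ in the exponent.

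The main (and arguably only) obstacle lies outside this paper: the polynomial $u^3$-inverse bound on which everything rests is a nontrivial consequence of~\cite{GowersGMT2025}, rather than of any independent argument produced here. Modulo invoking that black-box ingredient, the derivation is essentially a one-line application of our main theorem, and no substantive technical difficulties arise.
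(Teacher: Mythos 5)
Your proposal is correct and follows essentially the same route as the paper: invoke the polynomial $U^3$-inverse theorem implied by~\cite{GowersGMT2025} to guarantee a quadratic phase with correlation polynomial in $\gamma$, then run Theorem~\ref{thm:main} with $\eps$ equal to half that guaranteed correlation and $\delta=1/3$. The only difference is cosmetic — the paper cites the inverse bound in the form $(\gamma/2)^{c}$ rather than $\gamma^{C_0}$ and so sets $\eps=\gamma^{c}/2^{c+1}$, ending with $C=c+1$ — but the arithmetic and resource accounting are the same.
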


Finally, combining our results with the framework developed by Tulsiani and Wolf \cite{TulsianiW2014}, we obtain an algorithmic quadratic decomposition theorem which efficiently decomposes a bounded function~$f$ into a sum of $\poly(1/\eps)$-many quadratic phase function, plus errors of $U^3$-norm and $L_1$-norm at most~$\eps$.

\begin{corollary}[Efficient quadratic decomposition] \label{cor:decomposition}
    There is a randomized algorithm that, given $\eps>0$ and any $1$-bounded function $f: \F_2^n \to [-1,\, 1]$, outputs with probability at least $2/3$ a decomposition
    $$f = c_1 (-1)^{p_1(\cdot)} + \dots + c_r (-1)^{p_r(\cdot)} + g + h$$
    where the~$c_i$ are constants, the~$p_i$ are quadratic polynomials, $r \leq \poly(1/\eps)$, $\|g\|_{U^3} \leq \eps$ and $\|h\|_1 \leq \eps$.
    The algorithm makes $n^2 \log n/\eps^{O(\log(1/\eps))}$ queries to~$f$ and runs in time $O_{\eps}(n^3)$.
\end{corollary}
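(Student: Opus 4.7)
The plan is to iteratively invoke Theorem~\ref{thm:main} as a ``quadratic weak learner'' that peels off correlated quadratic phases from $f$, following the decomposition framework of Tulsiani and Wolf \cite{TulsianiW2014}. The polynomial Gowers inverse theorem (Corollary~\ref{cor:PGI}) is what drives the $\poly(1/\eps)$-rate convergence, replacing the previous quasipolynomial bounds that earlier quadratic Goldreich--Levin algorithms would have produced in this framework.

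The main loop is the following. Initialize $F_0 \equiv 0$, and at each step $i$ form the residual $r_i = f - F_i$ together with its pointwise truncation $\tilde r_i \in [-1,1]$. Apply Theorem~\ref{thm:main} to $\tilde r_i$ with accuracy $\eps/4$, obtaining a quadratic polynomial $p_i$ and its coefficient $c_i := \Exp_x \tilde r_i(x)(-1)^{p_i(x)}$. If $|c_i| \geq \eps/2$, update $F_{i+1} = F_i + c_i (-1)^{p_i}$ and repeat; otherwise halt. The halting criterion together with the near-optimality guarantee in Theorem~\ref{thm:main} gives $\|\tilde r_r\|_{u^3} < \eps$, from which Corollary~\ref{cor:PGI} (contrapositively, after a polynomial rescaling of $\eps$) yields $\|\tilde r_r\|_{U^3} \leq \eps$. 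A standard energy-decrement inequality $\|f - F_{i+1}\|_2^2 \leq \|f-F_i\|_2^2 - \Omega(c_i^2)$ then caps the number of iterations at $r = O(1/\eps^2)$.

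Setting $g := \tilde r_r$ and $h := r_r - \tilde r_r$ yields the desired decomposition: $\|g\|_{U^3} \leq \eps$ by the above discussion, while $h$ is supported on $\{x : |f(x) - F_r(x)| > 1\}$ with $|h(x)| = (|r_r(x)| - 1)_+$. A Chebyshev/Cauchy--Schwarz bound yields $\|h\|_1 \leq \|f - F_r\|_2^2$, which can be forced below $\eps$ by continuing to iterate (in an outer loop if necessary) so long as the untruncated $L^2$-gap $\|f - F_r\|_2^2$ is still large.

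The main technical obstacle is reconciling the two halting criteria: a small $u^3$ norm on the \emph{truncated} residual $\tilde r_r$ and a small $L^2$ norm on the \emph{untruncated} residual $r_r$. Handling this either requires a two-level outer loop or a slightly more careful choice of update coefficient $c_i$ (closer to the genuine $L^2$-projection $\langle r_i, (-1)^{p_i}\rangle$) so that the energy decrement applies to $r_i$ directly. The query and runtime bounds then follow immediately from $\poly(1/\eps)$ applications of Theorem~\ref{thm:main} at accuracy $\eps/4$, giving $n^2 \log n \cdot \eps^{-O(\log(1/\eps))}$ total queries and $O_\eps(n^3)$ runtime, as claimed.
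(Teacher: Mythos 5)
Your proposal follows essentially the same route as the paper: use the near-optimal quadratic correlator of Theorem~\ref{thm:main} (equivalently, the algorithmic inverse theorem of Corollary~\ref{cor:PGI}) as a weak learner inside the Tulsiani--Wolf decomposition framework. The paper's proof is exactly this, except that it does not re-derive the iteration: it rescales so that Corollary~\ref{cor:PGI} applies to $B$-bounded functions with $B = 1/(2\eps)$ and then invokes \cite[Theorem~3.1]{TulsianiW2014} as a black box. The ``main technical obstacle'' you flag --- making the energy decrement for the untruncated residual compatible with running the correlator on the truncated residual, which is exactly where the $L_1$ error term $h$ comes from --- is precisely the content of that cited theorem, so as written your argument has a gap at the one place where the paper has a citation. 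Either cite \cite[Theorem~3.1]{TulsianiW2014} or carry out their boosting argument in full; the rest of your sketch (halting once the $u^3$-correlation of the residual drops below $(\eps/2)^{C}$ so that the inverse theorem forces $\|g\|_{U^3}\le\eps$, the resulting $\poly(1/\eps)$ bound on $r$, and the query/runtime accounting) matches the intended proof.
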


\subsection{QGL and stabilizer learning}
Quadratic Goldreich-Levin algorithms were previously obtained by giving algorithmic proofs of the inverse theorem for the Gowers 3-uniformity norm ($U^3$-norm).
A basic fact that is used in these proofs is that the $U^3$-norm of a function is bounded from above by its sup-norm (its largest absolute value).
It follows from Young's inequality, however, that the stronger inequality $\|f\|_{U^3} \leq \|f\|_2$ holds true, where both norms are based on the uniform probability measure; moreover,~2 is the least value for which this inequality holds~\cite{eisner2012large}.

The extremizers of the $U^3$-norm relative to the $L_2$-norm turn out to have an important role in quantum information theory:
in that setting, they are known as \emph{stabilizer states}.
Quadratic phase functions are examples of stabilizer states, but this latter class is significantly richer, being furnished by the functions that can be written as a (non-classical) quadratic phase function restricted to an affine subspace of~$\F_2^n$.

Closely related to quadratic Goldreich-Levin algorithms is the \emph{stabilizer learning problem}, where one is given copies of an unknown quantum state that has correlation at least~$\tau$ with some stabilizer state and the goal is to find a stabilizer state with correlation at least~$\tau - \eps$.
Recent work of Chen, Gong, Ye and Zhang~\cite{chen2025stabilizer} gave computationally-efficient protocols for stabilizer learning with running time $\poly(n, 1/\eps) (1/\tau)^{O(\log(1/\tau))}$.
The first step of our algorithm is a suitable \emph{dequantization} of their quantum protocol.

Assuming that~$f$ is 1-bounded in sup-norm allows us to view it as a (possibly sub-normalized) quantum state, and having correlation at least~$\tau$ with a classical quadratic phase~$(-1)^{q(x)}$ makes it meet the right criteria for the input
(albeit with the different access form of \emph{classical queries} rather than \emph{quantum samples}).
This perspective allows us to draw from the ideas of~\cite{chen2025stabilizer}.
We give a slightly simplified version of their arguments when translated to our setting, so that no knowledge of quantum information theory or quantum computing is necessary to follow this paper.

A major source of difficulty when trying to implement the stabilizer learning algorithm of Chen et al.\ in our setting is that it is unclear how to implement their quantum sampling procedure (\emph{Bell difference sampling}) using few queries to the function~$f$.
Bell difference sampling can be done exactly using only~6 copies of a quantum state, and it is a crucial component of all known stabilizer learning algorithms.

A further discrepancy between the two settings is that the desired outputs in the stabilizer learning problem form a significantly larger set than the set of classical quadratic phases.
A general stabilizer state is an $L_2$-normalized non-classical quadratic phase that is supported on an affine subspace of arbitrary dimension.

To find a near-maximal quadratic correlator, our algorithm first finds a complete list of all stabilizer states that nearly attain the maximal correlation with~$f$ among all their ``neighbors.''
For each one of these approximate local maximizers of correlation, we use an algorithmic procedure to stretch its domain to the full space and remove its non-classical component.
With some care, one can show that one of the quadratic phase functions thus obtained is a near-optimal maximizer of correlation with~$f$.
Testing each of these correlations, we can find the near-optimal one.

Below, we give a short summary of our main algorithm with slightly more details.

\subsection{Outline of the algorithm}

An idea that goes back to the weak inverse theorem for the $U^3$-norm due to Gowers is to consider the large Fourier coefficients of the multiplicative derivatives of~$f$,
\begin{equation*}
    \widehat{\Delta_af}(b) = \Exp_{x\in \F_2^n}f(x+a)\overline{f(x)}(-1)^{ b\cdot x}.
\end{equation*}
It follows from Parseval's identity that the values 
\begin{equation*}
    P_f(a,b) = \frac{|\widehat{\Delta_af}(b)|^2}{2^n \|f\|_2^4}
\end{equation*} 
form a probability distribution over~$\F_2^n\times\F_2^n$.

Recent work on stabilizer testing, which is closely related to a $U^3$-inverse theorem relative to the~$L_2$-norm, highlights the role of isotropy in this context~\cite{arunachalam2025polynomial, bao2025tolerant}.
A subset~$V\subseteq \F_2^n\times\F_2^n$ is isotropic if all pairs $(a,b),(c,d)\in V$ have zero symplectic inner product:
\begin{equation*}
    [(a,b), (c,d)] =  a\cdot d + b\cdot c.
\end{equation*}
A slight variation of a standard integration argument from the $U^3$-inverse theorem shows that, if~$V$ is an isotropic subspace, then there exists a stabilizer state~$\phi$ with correlation $|\langle f,\phi\rangle|^2 \geq P_f(V)$~\cite{vanDordrecht2025}.

A standard fact from quantum computing is that the quadratic component of a stabilizer state is uniquely characterized by a 
\emph{Lagrangian subspace}: an isotropic subspace of maximal dimension~$n$.
This means that our goal of finding a correlating stabilizer state~$\phi$ may be shifted to finding its associated Lagrangian subspace~$L$.
Once the quadratic component has been found, the Goldreich-Levin algorithm gives a means to learn its  linear component.

Parseval's identity and the Cauchy-Schwarz inequality show that sampling from~$P_f$ is likely to yield elements from~$L$, since they give that
\begin{equation*}
    P_f(L) \geq |\langle f,\phi\rangle|^4.
\end{equation*}
A complication is that, since we do not know~$\phi$, we have no way to certify that a given~$P_f$ sample belongs to~$L$.

A key insight of~\cite{chen2025stabilizer} is that a particular uncertainty principle shows that the \emph{spectral set} --- the set of elements~$(a,b)$ for which $P_f(a,b)$ exceeds $1/2$  --- is isotropic.
In the 100\% setting, where~$f$ is itself a stabilizer state, the isotropic set equals~$L$ and~$P_f$ is the uniform probability distribution over~$L$.
One important aspect of this is that the Goldreich-Levin algorithm can be used to check  to a high degree of certainty whether a given $(a,b)$ pair belongs to the spectral set.

A problem that arises now is how to efficiently generate the spectral set.
Intuition from the proof of the $U^3$-inverse theorem suggests that this set may have some linear structure resembling a subspace.\footnote{This would follow from the Balog-Szemer\'edi-Gowers theorem and the Fre\u{\i}man-Ruzsa theorem if the spectral set had positive~$P_f$-measure, but this needs not be the case;  these details are not important, however, as they will not be explicitly used here.}
In this setting, approximate linear structure appears more straightforwardly in the following form. 
Consider a random set~$F\subseteq \F_2^n\times\F_2^n$ of~$\Theta(n)$ elements sampled independently from the conditional distribution~$P_f$, normalized over the spectral set.
Then with good probability,~$\vspan(F)$ will cover all but a tiny fraction (more precisely, $P_f$-measure) of the whole spectral set.
In this case, we refer to~$\vspan(F)$ as an approximate spectral space.

If the correlation between~$f$ and~$\phi$ is sufficiently close to~1, then the approximate spectral space equals~$L$ with good probability.
In general, however, this probability can be arbitrarily small even if the correlation is bounded away from zero.
This can be dealt with using the following observation from~\cite{chen2025stabilizer}:
If $(a,b)\in L$, then $\Delta_a\phi(x)=\sigma(-1)^{ b\cdot x}$ for some sign~$\sigma\in \pmset{}$.
If moreover~$(a,b)$ is not in the spectral set, then by projecting~$f$ to the subspace of functions satisfying this same identity, we obtain a new function~$f'$ with a correlation with~$\phi$ that is larger by a factor of at least~$1.08$.
Since the maximum correlation is capped by~1, we can only iterate this a bounded number of times.
If~$\phi$ is a local maximizer of correlation for~$f$, then within these iterations we will necessarily have generated a projected version of~$f$ from which we can obtain the desired Lagrangian~$\Lcal(\phi)$ with good probability by generating an approximate spectral space.
The fact that this works with constant probability immediately implies that there can only be a constant number of local maximizers of correlation.
Hence, by repeating only a constant number of times, we can with high probability obtain the entirely list of such correlators.

From this list of stabilizer states, we wish to obtain a list of correlating quadratic phase functions.
Since the input function~$f$ is assumed to be bounded, it follows that any stabilizer state with significant correlation with~$f$ must be supported on an affine subspace of constant codimension.
This implies that, for each correlating stabilizer state~$\phi$, we can find a constant-sized list of full-support stabilizer states containing at least one element~$\psi$ having high correlation with~$f$.
Having thus ``streched'' their domain, we next replace the non-classical quadratic component of each full-support stabilizer state with a classical quadratic component so as to obtain our list of classical quadratic phases.
This is done using the fact that any non-classical quadratic phase function will behave classically in some hyperplane, and incurs in only a slight loss in correlation
We show that, if we choose the parameters correctly, then the highest-correlation quadratic phase in this new list will have correlation at least $\|f\|_{u^3} - \eps$.

An obstacle to implementing the above algorithm in our query model is that we do not have direct access to samples from the distribution~$P_f$.
However, since we need only~$\poly(n)$ samples from~$P_f$ and the analysis of our algorithm only deals with isotropic subsets of~$\F_2^n\times\F_2^n$ --- which are bounded in size by~$2^n$ --- sampling from a distribution that only crudely approximates~$P_f$ turns out to be sufficient for our purposes.
Our approximating distribution is obtained by sampling $a\in\F_2^n$ uniformly and using the Goldreich-Levin algorithm to sample $b\in \F_2^n$ with probability close to $|\widehat{\Delta_af}(b)|^2$.
With high probability, in most of the space~$\F_2^{2n}$, this distribution is close to a fixed multiple of~$P_f$.

\section{Preliminaries}

Given vectors $a, b\in \F_2^n$, define their \emph{inner product} by
$$a\cdot b = a^{\mathsf T}b = a_1b_1 + \dots+ a_nb_n$$
and their \emph{entry-wise product} by
$$a\circ b = (a_1b_1,\, a_2 b_2,\, \dots,\, a_nb_n).$$
For functions $f, g:\F_2^n \to \C$, denote $\langle f, g\rangle = \E_{x\in \F_2^n} f(x) \overline{g(x)}$ and $\|f\|_2 = \langle f, f\rangle^{1/2}$.
The support of a function~$f$ is denoted by $\supp(f)$.

Recall that $\|f\|_{u^3} = \max \big|\Exp_{x\in \F_2^n}f(x)(-1)^{q(x)}\big|$, where the maximum is taken over all polynomials $q: \F_2^n \to \F_2$ of degree at most~2.
We say that~$f$ is \emph{1-bounded} if $|f(x)|\leq 1$ for all~$x\in \F_2^n$.
Denote $\D = \{z\in\C:\: |z|\leq 1\}$ and $S^1 = \{z\in\C:\: |z|=1\}$.

\subsection{Algorithms for linear Fourier analysis}

We use the following version of the Goldreich-Levin algorithm, which is a special case of~\cite[Theorem~4.3]{kim2023cubic}.

\begin{theorem}[Goldreich-Levin algorithm] \label{thm:GL}
    Let $f:\F_2^n\to \C$ be a 1-bounded function, let~$\delta>0$ and $0<\tau\leq 1$. 
    There is a randomized algorithm that makes $n \log n\, \poly(\log(1/\delta)/\tau)$ queries to~$f$ and, with probability at least~$1-\delta$, outputs a list $L\subseteq \F_2^n$ such that:
    \begin{itemize}
        \item If $|\widehat{f}(b)| \geq \tau$, then $b\in L$;
        \item For every $b\in L$, we have $|\widehat{f}(b)| \geq \tau/2$.
    \end{itemize}
    The running time of this algorithm is $n \log n\, \poly(\log(1/\delta)/\tau)$.
\end{theorem}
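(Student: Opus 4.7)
The plan is to implement the classical Goldreich--Levin bucketing algorithm, which traverses a depth-$n$ binary tree whose leaves correspond to elements of~$\F_2^n$ and prunes any subtree carrying negligible Fourier mass. For a prefix $a \in \F_2^k$, define the bucket $B_{k,a} = \{b \in \F_2^n : (b_1,\dots,b_k) = a\}$ and its Fourier weight
\begin{equation*}
    W_{k,a} \;=\; \sum_{b \in B_{k,a}} |\widehat f(b)|^2
    \;=\; \E_{y \in \F_2^{n-k}}\, \E_{z,z' \in \F_2^k} f(z,y)\,\overline{f(z',y)}\,(-1)^{a \cdot (z+z')}.
\end{equation*}
Parseval's identity gives $\sum_{a \in \F_2^k} W_{k,a} = \|f\|_2^2 \le 1$, so at most $4/\tau^2$ buckets at any fixed level can have true weight exceeding $\tau^2/4$. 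Since $f$ is $1$-bounded, Hoeffding's inequality shows that $m = O(\tau^{-4}\log(1/\delta'))$ independent samples of $(y,z,z')$ (costing $2m$ queries to~$f$) suffice to estimate $W_{k,a}$ to additive accuracy $\tau^2/8$ with probability at least $1-\delta'$; crucially, the \emph{same} pool of queries can be reused to estimate all buckets at level~$k$ simultaneously, since only the sign $(-1)^{a\cdot(z+z')}$ depends on~$a$.

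Next I would run the recursive search. Initialize $L_0 = \{\emptyset\}$; at level~$k$, form the two children of each surviving prefix in $L_{k-1}$, estimate their bucket weights from the shared sample pool, and retain any child whose estimate exceeds $\tau^2/2$. Setting $\delta' = \Theta(\delta \tau^2/n)$ and union-bounding over the $n$ levels and at most $4/\tau^2$ children considered per level ensures, with overall failure probability at most $\delta/2$, that every bucket of true weight $\ge \tau^2$ survives and every surviving bucket has true weight $\ge \tau^2/4$; the latter bound plus Parseval keeps $|L_k| \le 4/\tau^2$ throughout. At depth~$n$ each surviving prefix is a singleton, giving a candidate list of size $O(\tau^{-2})$. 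A final verification step estimates $|\widehat f(b)|$ for each candidate to additive accuracy $\tau/4$ using $O(\tau^{-2}\log(1/\delta))$ fresh queries, and keeps only those with estimated magnitude at least $3\tau/4$.

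Correctness is immediate from the thresholds: any $b$ with $|\widehat f(b)| \ge \tau$ has bucket weight $\ge \tau^2$ at every level, so its prefix survives the search and $b$ passes the final test; conversely, any $b$ output has estimated magnitude $\ge 3\tau/4$ and hence true magnitude $\ge \tau/2$. The total query count is $n \cdot 2m + O(\tau^{-4}\log(1/\delta)) = n \log n \cdot \poly(\log(1/\delta)/\tau)$, the $\log n$ factor entering through $\log(1/\delta') = O(\log(n/\tau\delta))$; the runtime is of the same order since each sign evaluation takes $O(n)$ bit operations. The argument is entirely standard and presents no serious obstacle; the only mild subtlety is that reusing a single sample pool across all buckets at a given level makes the estimates dependent, which is handled by applying Hoeffding separately to each bucket before union-bounding. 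The result is essentially a restatement of the classical Goldreich--Levin algorithm, in the quantitative form cited from~\cite{kim2023cubic}.
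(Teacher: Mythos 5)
Your proof is correct. The paper does not actually prove this statement---it is quoted as a special case of \cite[Theorem~4.3]{kim2023cubic}---and your argument is the standard Goldreich--Levin bucketing proof underlying that reference: the level-$k$ weight identity, a shared sample pool per level with a per-bucket Hoeffding bound followed by a union bound, Parseval to cap the list size, and a final verification pass to get the two-sided guarantee, with the $\log n$ factor correctly traced to $\log(1/\delta')=O(\log(n/\tau\delta))$. The one point worth tightening is that the union bound over ``at most $4/\tau^2$ children per level'' presupposes that the earlier estimates already succeeded; this mild circularity is routinely removed by having the algorithm examine at most $8/\tau^2$ children per level by fiat (aborting otherwise), so that the union bound ranges over a deterministic collection of $O(n/\tau^2)$ events.
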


Additionally, we will need the following standard concentration inequality.

\begin{lemma}\label{lem:hoeffding}
Let $X_1,\dots, X_n$ be independent $\C$-valued random variables such that $|X_i| \leq a_i$ for some~$a_i>0$.
Let $\overline X = n^{-1}(X_1 + \cdots + X_n)$.
Then, for any $\eps > 0$,
\begin{equation*}
    \Pr\big[\big|\overline{X} - \Exp\overline{X}\big| > \eps\big] \leq 4\exp\Bigg(-\frac{2\eps^2n^2}{\sum_{i=1}^na_i^2}\Bigg).
\end{equation*}
\end{lemma}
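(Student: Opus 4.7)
The plan is to reduce the complex-valued concentration bound to the standard Hoeffding inequality for independent real-valued bounded random variables via an elementary real/imaginary decomposition. First, I would write $X_j = Y_j + i Z_j$ with $Y_j := \mathrm{Re}(X_j)$ and $Z_j := \mathrm{Im}(X_j)$, noting that the $Y_j$'s are jointly independent and so are the $Z_j$'s (as measurable functions of the independent $X_j$'s). The hypothesis $|X_j| \leq a_j$ forces $|Y_j|, |Z_j| \leq a_j$, so each of these real-valued random variables takes values in an interval of length at most $2a_j$.

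Setting $\overline{Y} = n^{-1}\sum_j Y_j$ and $\overline{Z} = n^{-1}\sum_j Z_j$, linearity of expectation gives $\overline{X} - \E\overline{X} = (\overline{Y} - \E\overline{Y}) + i(\overline{Z} - \E\overline{Z})$. Since $|w|^2 = (\mathrm{Re}\,w)^2 + (\mathrm{Im}\,w)^2$ for any $w \in \C$, the event $\{|\overline{X} - \E\overline{X}| > \eps\}$ is contained in the union of the events $\{|\overline{Y} - \E\overline{Y}| > \eps/\sqrt{2}\}$ and $\{|\overline{Z} - \E\overline{Z}| > \eps/\sqrt{2}\}$. Hence a union bound reduces the task to two instances of a real-valued two-sided tail bound.

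For each of the two real sums I would then invoke the classical Hoeffding inequality, which for independent real random variables bounded in modulus by $a_j$ yields a tail bound of the form $2\exp\!\big(-c\,n^2\eps^2/\sum_j a_j^2\big)$ for an appropriate absolute constant $c$. Summing the two resulting bounds (and tracking the constant to match the exponent in the statement) immediately gives the claimed inequality with the prefactor $4$. I do not foresee any substantive obstacle: this is a textbook concentration inequality whose proof is essentially a routine invocation of the real-valued Hoeffding bound after an elementary decomposition, with the factor $4$ arising from the two-fold union bound over real and imaginary parts combined with the two-sided nature of each individual bound.
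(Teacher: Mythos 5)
Your proposal is correct and follows essentially the same route as the paper: split into real and imaginary parts, apply the real-valued Hoeffding inequality to each, and combine by a union bound. The only (immaterial) difference is that you split the deviation via $|w|^2=(\mathrm{Re}\,w)^2+(\mathrm{Im}\,w)^2$ at threshold $\eps/\sqrt{2}$, whereas the paper uses the triangle inequality at threshold $\eps/2$; both yield the stated bound up to the (unimportant) absolute constant in the exponent.
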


\begin{proof}
    Let $Y_i^0$ and $Y_i^1$ be the real and complex parts of~$X_i$, respectively.
    Then, $|Y_i^0|,|Y_i^1|\leq a_i$.
    For $b\in \bset{}$, let $\overline{Y^b} = n^{-1}(Y_1^b + \cdots + Y_n^b)$.
    By Hoeffding's inequality~\cite{Hoeffding:1963}, for $b\in \bset{}$,
    \begin{equation*}
        \Pr\big[\big|\overline{Y^b}  - \Exp\overline {Y^b}\big|>\eps\big] \leq 2\exp\Bigg(-\frac{2\eps^2n^2}{\sum_{i=1}^na_i^2}\Bigg).
    \end{equation*}
    The result now follows from the triangle inequality and the union bound.
\end{proof}

The next lemma allows us to estimate the Fourier coefficients of a given bounded function:

\begin{lemma}\label{lem:fourest}
    Let $\eps,\delta>0$.
    There exists a randomized algorithm $\FourEst_{\eps,\delta}$ that, given $b\in \F_2^n$ and query access to an $1$-bounded function $f:\F_2^n\to \C$, makes $O(\frac{1}{\eps^2}\log(1/\delta))$ queries to~$f$ and, with probability at least $1- \delta$, returns a value $c\in \C$ such that $|c - \widehat{f}(b)|\leq \eps$.
\end{lemma}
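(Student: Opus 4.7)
The plan is to use the standard empirical-mean estimator for $\widehat{f}(b) = \E_{x\in \F_2^n}f(x)(-1)^{b\cdot x}$ and bound its deviation via Lemma~\ref{lem:hoeffding}. Concretely, I would draw $m = \lceil \tfrac{1}{2\eps^2}\log(4/\delta)\rceil$ independent uniform samples $x_1,\dots,x_m \in \F_2^n$, query $f$ at each of them, and output the average $c = \tfrac{1}{m}\sum_{i=1}^m f(x_i)(-1)^{b\cdot x_i}$. Setting $X_i = f(x_i)(-1)^{b\cdot x_i}$, each $X_i$ is independent, $1$-bounded (so we may take $a_i=1$ in Lemma~\ref{lem:hoeffding}), and $\E[X_i] = \widehat{f}(b)$ since the $x_i$ are uniform.

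Applying Lemma~\ref{lem:hoeffding} with $a_i = 1$ for all $i$ then gives
\begin{equation*}
    \Pr\big[|c - \widehat{f}(b)| > \eps\big] \;\leq\; 4\exp\!\left(-\frac{2\eps^2 m^2}{m}\right) \;=\; 4\exp(-2\eps^2 m),
\end{equation*}
and the choice $m = \lceil \tfrac{1}{2\eps^2}\log(4/\delta)\rceil$ makes this at most $\delta$. The total number of queries is $m = O(\tfrac{1}{\eps^2}\log(1/\delta))$, matching the claimed bound. No computational subtleties arise: the algorithm just samples, queries, multiplies by a sign, and averages.

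There is really no main obstacle here; the lemma is a textbook Hoeffding-style concentration argument, tailored to complex-valued $1$-bounded random variables via the version already proved in Lemma~\ref{lem:hoeffding}. The only mild point is that $X_i$ is complex rather than real, but this is exactly what Lemma~\ref{lem:hoeffding} was set up to handle, so invoking it directly finishes the proof.
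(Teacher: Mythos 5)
Your proof is correct and is essentially identical to the paper's own argument: the paper likewise takes $m = O(\eps^{-2}\log(1/\delta))$ uniform samples, forms $X_i = f(x_i)(-1)^{b\cdot x_i}$, notes $\E[X_i]=\widehat f(b)$, and applies Lemma~\ref{lem:hoeffding}. Your version just spells out the constants explicitly.
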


\begin{proof}
    Let $m = O(\frac{1}{\eps^2}\log(1/\delta))$, let $x_1,\dots,x_m$ be independent uniformly distributed $\F_2^n$-valued random variables and let $X_i = f(x_i)(-1)^{\langle x_i,b\rangle}$ for each $i\in[m]$.
    Then $\Exp[X_i] = \widehat{f}(b)$ for each $i\in [m]$.
    It follows from Lemma~\ref{lem:hoeffding} that~$c = m^{-1}(X_1,\dots,X_m)$ satisfies the requirement with the desired probability.
\end{proof}

\subsection{Gowers uniformity in~$L_2$ and stabilizer states}

\begin{definition}[Gowers $U^3$-norm]
    Given a function $f:\F_2^n\to\C$ and $a\in \F_2^n$, let $\Delta_af(x) = f(x+a)\overline{f(x)}$.
    The $U^3$-norm of~$f$ is then defined by
    \begin{equation*}
        \|f\|_{U^3} = \big(\Exp_{x,a,b,c \in \F_2^n}\Delta_a\Delta_b\Delta_cf(x)\big)^{\frac{1}{8}}.
    \end{equation*}
\end{definition}

It is easy to see that $\|f\|_{U^3} \leq 1$ for every 1-bounded function~$f$.
The functions that attain the equality are called \emph{non-classical quadratic phase functions}.

\begin{definition}[Non-classical quadratic phase functions]
    For a linear vector space~$V$ over~$\F_2$, a function $\psi: V\to \C$ is a non-classical quadratic phase function if, for any $x, a, b, c\in V$, we have that
    \begin{equation*}
        \Delta_a \Delta_b \Delta_c \psi(x) = 1.
    \end{equation*}
\end{definition}

Define $|\cdot|:\F_2\to\{0,1\}$ to be the natural identification map.
This map satisfies the identity $|a +b| = |a| + |b| - 2|ab|$.
With slight abuse of notation, for $x\in \F_2^n$, define $|x| = |x_1| + \cdots |x_n|$.
It turns out that every non-classical quadratic phase function on~$\F_2^n$ can be written as
$$\psi(x) = \alpha (-1)^{x^{\mathsf T} Ax + b\cdot x} i^{|c\circ x|}$$
for some matrix $M\in \F_2^{n\times n}$, vectors $b, c \in \F_2^n$ and scalar~$\alpha\in S^1$ \cite{TaoZiegler2012}.

By Young's inequality, one can show that $\|f\|_{U^3} \leq 1$ holds for all functions~$f$ that have $L_2$-norm~$1$.
The functions that attain the equality are known as \emph{stabilizer states}:

\begin{definition}[Stabilizer states]
A function $\phi:\F_2^n\to \C$ is a \emph{stabilizer state} if it satisfies $\|\phi\|_{2} = \|\phi\|_{U^3} = 1$.
Denote the set of stabilizer states by~$\Stab(\F_2^n)$.
\end{definition}

The next result was obtained in~\cite{eisner2012large}:

\begin{proposition}\label{prop:stab_states_explicit}
    A function $\phi: \F_2^n \to \C$ is a stabilizer state if and only if there exists a subspace~$V\subseteq \F_2^n$, a vector $u\in \F_2^n$ and a non-classical quadratic phase function $\psi: V\to \C$ such that
    $$\phi(x) = 2^{(n-\dim(V))/2} \one_{V}(x+u) \psi(x+u) \quad \text{for all $x\in \F_2^n$.}$$
\end{proposition}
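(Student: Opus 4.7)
The plan is to prove the two directions of Proposition~\ref{prop:stab_states_explicit} separately, following~\cite{eisner2012large}.

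For the ``if'' direction, I verify by direct computation that any $\phi$ of the stated form satisfies $\|\phi\|_2 = \|\phi\|_{U^3} = 1$. Since $|\psi|\equiv 1$ on~$V$, one has $|\phi|^2 = 2^{n-\dim V}\one_V(\cdot+u)$ and hence $\|\phi\|_2^2 = 2^{n-\dim V}\cdot 2^{\dim V-n} = 1$. For the $U^3$-norm, expand $\|\phi\|_{U^3}^8$ as an average over $x,a,b,c\in\F_2^n$ of a product of eight values of $\phi$ (four of them conjugated) at the vertices of the parallelepiped $\{x+\eps_1 a+\eps_2 b+\eps_3 c:\eps\in\bset{3}\}$. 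The integrand vanishes unless all eight vertices lie in the coset $V+u$; since $V$ is a subspace this is the event $\{x+u\in V,\, a,b,c\in V\}$, of density $2^{-4(n-\dim V)}$ in $\F_2^{4n}$. On this event the eight copies of the real amplitude $2^{(n-\dim V)/2}$ contribute exactly $2^{4(n-\dim V)}$, and the residual product of $\psi$-values equals $\Delta_a\Delta_b\Delta_c\psi(x+u) = 1$ by the non-classical quadratic phase property. The two powers of two cancel and so $\|\phi\|_{U^3}^8 = 1$.

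For the ``only if'' direction, starting from $\|\phi\|_{U^3}=\|\phi\|_2=1$ I inspect the standard chain proving $\|\phi\|_{U^3}\leq\|\phi\|_2$,
\begin{equation*}
    \|\phi\|_{U^3}^8 = \Exp_a\sum_\xi \bigl|\widehat{\Delta_a\phi}(\xi)\bigr|^4 \leq \Exp_a\|\widehat{\Delta_a\phi}\|_\infty^2\,\|\Delta_a\phi\|_2^2 \leq \sup_a\|\Delta_a\phi\|_1^2\cdot \Exp_a\|\Delta_a\phi\|_2^2 \leq 1,
\end{equation*}
obtained via Plancherel, the $\ell^4\leq\ell^\infty\cdot\ell^2$ bound, $|\widehat g(\xi)|\leq\|g\|_1$, the Cauchy--Schwarz estimate $\|\Delta_a\phi\|_1\leq\|\phi\|_2^2=1$, and $\Exp_a\|\Delta_a\phi\|_2^2=\|\phi\|_2^4=1$. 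Tightness in the Cauchy--Schwarz step forces $|\phi(x+a)|=|\phi(x)|$ for every $a$ with $\|\Delta_a\phi\|_2>0$; the set of such~$a$ is then closed under addition and so forms a subspace~$V\subseteq\F_2^n$, $\supp\phi$ is a single coset $V+u$, and $|\phi|$ takes the constant value $2^{(n-\dim V)/2}$ on its support. Tightness in the $\ell^4\leq\ell^\infty\cdot\ell^2$ step then forces $\widehat{\Delta_a\phi}$ to be concentrated on a single character $\xi_a\in\F_2^n$ for each $a\in V$, so that $\Delta_a\phi(x)=\eta(a)(-1)^{\xi_a\cdot x}\one_{V+u}(x)$ for some $\eta(a)\in S^1$. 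Writing $\phi(x)=2^{(n-\dim V)/2}\one_{V+u}(x)\psi(x+u)$ with $\psi:V\to S^1$, this identity reads $\Delta_a\psi(y) = \eta(a)(-1)^{\xi_a\cdot(y-u)}$ on~$V$, whence $\Delta_b\Delta_a\psi(y)=(-1)^{\xi_a\cdot b}$ is independent of~$y$ and therefore $\Delta_c\Delta_b\Delta_a\psi\equiv 1$ on~$V$; that is, $\psi$ is a non-classical quadratic phase function on~$V$.

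The main obstacle is the careful bookkeeping in the necessity direction, in particular: verifying that $\{a:\|\Delta_a\phi\|_2>0\}$ is in fact a linear subspace (rather than merely a symmetric set closed under the relevant operations), and checking that the assignment $a\mapsto\xi_a$ extracted from the tightness conditions is globally consistent with the non-classical quadratic phase identity for~$\psi$. Since Proposition~\ref{prop:stab_states_explicit} is quoted directly from~\cite{eisner2012large}, an acceptable alternative is simply to invoke that theorem and refer the reader to the original paper for the detailed verification.
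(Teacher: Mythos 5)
The paper offers no proof of this proposition at all --- it simply attributes the statement to \cite{eisner2012large} --- so your self-contained argument is necessarily a different route, and it is essentially the standard extremal analysis of the chain of inequalities proving $\|\phi\|_{U^3}\leq\|\phi\|_2$. Both directions are sound: the ``if'' computation is correct (note that $|\psi|\equiv 1$ on $V$, needed for $\|\phi\|_2=1$, follows from the non-classical quadratic phase identity with $a=b=c=0$), and in the ``only if'' direction the deductions you actually use --- that $V=\{a:\Delta_a\phi\not\equiv 0\}$ is a subspace, that $\supp\phi$ is a coset $V+u$ on which $|\phi|$ is constant, and that equality in $|\widehat{g}(\xi)|\leq\|g\|_1$ forces $\Delta_a\phi(x)(-1)^{\xi_a\cdot x}$ to have constant phase on $V+u$ --- do follow from tightness and do yield $\Delta_c\Delta_b\Delta_a\psi\equiv 1$ on $V$. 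Two small inaccuracies are worth fixing. First, equality in the $\ell^4\leq\ell^\infty^2\cdot\ell^2$ step does \emph{not} force $\widehat{\Delta_a\phi}$ to be concentrated on a single character: when $\dim V<n$ one has $\sum_\xi|\widehat{\Delta_a\phi}(\xi)|^2=\|\Delta_a\phi\|_2^2=2^{n-\dim V}$ while $\|\widehat{\Delta_a\phi}\|_\infty=1$, so the Fourier transform is in fact unimodular on a full coset of $V^\perp$; what you actually need (and what you use) is the equality $\|\widehat{\Delta_a\phi}\|_\infty=\|\Delta_a\phi\|_1$ coming from the \emph{other} tightness condition. Second, the displayed identity should read $\Delta_a\phi(x)=2^{n-\dim V}\eta(a)(-1)^{\xi_a\cdot x}\one_{V+u}(x)$, since $|\Delta_a\phi|=2^{n-\dim V}$ on its support and $\eta(a)\in S^1$; the missing amplitude is harmless for the conclusion about $\psi$. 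With those corrections the argument is complete, and your fallback of simply citing \cite{eisner2012large} matches what the paper itself does.
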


By the classification of non-classical quadratic phase functions, we can write a stabilizer state explicitly as a function of the form
\begin{equation}\label{eq:stabstate}
\phi(x) = \alpha 2^{(n-\dim(V))/2} \one_{u+V}(x) (-1)^{q(x)} i^{|c\circ x|}, \end{equation} 
where $\alpha\in S^1$, $V \subseteq \F_2^n$ is a subspace, $q: \F_2^n \to \F_2$ is a quadratic function, $u \in \F_2^n$ and either~$c=0$ or $c\notin V^{\perp}$.
We will usually ignore the global phase~$\alpha$, as it makes no contribution to the correlation $|\langle f, \phi\rangle|$.

The following definition from~\cite{chen2025stabilizer} will be crucial for our arguments:

\begin{definition}[Approximate local maximizer]
    Let~$f:\F_2^n \to \C$ be some function and $\gamma>0$ be a positive parameter.
    A stabilizer state $\phi\in \Stab(\F_2^n)$ is a \emph{$\gamma$-approximate local maximizer of correlation for~$f$} if it satisfies
    \begin{equation*}
        |\langle f,\phi\rangle|^2
        \geq
        \gamma \max_{\phi'\in \Stab(\F_2^n),\, |\langle\phi,\phi'\rangle|^2 = \frac{1}{2}}|\langle f,\phi'\rangle|^2.
    \end{equation*}
\end{definition}

\subsection{Symplectic geometry}

\begin{definition}[Symplectic inner product]\label{def:sympspace}
For vectors $(x,y), (x',y')\in \F_2^n \times \F_2^n$, define their \emph{symplectic inner product} by
    \begin{equation*}
        [(x,y),(x',y')] = \langle x,y'\rangle + \langle x',y\rangle.
    \end{equation*}
\end{definition}

\begin{definition}[Isotropic subspace]
    A subspace $V\subseteq \F_2^n \times \F_2^n$ is \emph{isotropic} if it holds that $[u,v] =0$ for all $u,v\in V$.
    A subspace~$L$ is \emph{Lagrangian} if it is isotropic and satisfies $\dim(L) = n$.
\end{definition}

One can easily show that a subspace~$L$ is Lagrangian if and only if it can be written as
$$L = \big\{(h,\, Mh+w):\: h\in V,\, w\in V^{\perp}\big\}$$
for some subspace $V\leq \F_2^n$ and some symmetric matrix $M\in \F_2^{n\times n}$.
The importance of these notions for us is that every stabilizer state is naturally associated with a Lagrangian subspace:

\begin{definition}[Lagrangian subspace] \label{def:Lagrangian_phi}
    Given a stabilizer state~$\phi\in \Stab(\F_2^n)$, define
    $$\Lcal(\phi) = \big\{(a, b) \in \F_2^n:\: |\widehat{\Delta_a \phi}(b)| = 1\big\}.$$
\end{definition}

This set~$\Lcal(\phi)$ is a Lagrangian subspace.
Explicitly, for a stabilizer state as in~\eqref{eq:stabstate}, where $q(x) = x^{\mathsf T}(Ax+b)$ for some matrix $A\in \F_2^{n\times n}$ and vector $b\in \F_2^n$, its associated Lagrangian has the form
\begin{equation}\label{eq:Lphi}
    \mathcal L(\phi)
    =
    \big\{\big(a, \big(A^\mathsf T + A + \Diag(c)\big)a + w\big)\mid a\in V, w\in V^\perp\big\}.
\end{equation}

We use the following useful result from~\cite[Lemma~4.7]{chen2025stabilizer}.

\begin{lemma}[Uncertainty principle]\label{lem:uncertainty1}
    Let $f:\F_2^n\to \C$ be a function and let $(a,b), (c,d)\in \F_2^n\times\F_2^n$ be pairs satisfying $[(a,b), (c,d)] = 1$.
    Then
    \begin{equation*}
       |\widehat{\Delta_af}(b)|^2 + |\widehat{\Delta_{c}f}(d)|^2 \leq \|f\|_2^4.
    \end{equation*}
\end{lemma}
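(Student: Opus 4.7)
The plan is to reinterpret $\widehat{\Delta_a f}(b)$ as a matrix coefficient $\langle W(a,b)f, f\rangle$ of a suitable unitary ``Weyl'' operator, and then to exploit the fact that the hypothesis $[(a,b),(c,d)] = 1$ forces $W(a,b)$ and $W(c,d)$ to anticommute. This reduces the statement to the familiar quantum-mechanical uncertainty principle for two anticommuting Hermitian involutions.

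First I would introduce the unitaries $X_a$ and $Z_b$ on $\C^{\F_2^n}$ given by $X_a f(x) = f(x+a)$ and $Z_b f(x) = (-1)^{b\cdot x} f(x)$, and set $W(a,b) = Z_b X_a$, so that $W(a,b)f(x) = (-1)^{b\cdot x} f(x+a)$. Using the inner product $\langle f, g\rangle = \E_x f(x)\overline{g(x)}$ of the paper, three direct computations drive the whole argument: $\widehat{\Delta_a f}(b) = \langle W(a,b) f, f\rangle$, so that $|\widehat{\Delta_a f}(b)|^2 = |\langle f, W(a,b) f\rangle|^2$; next $W(a,b)^2 = (-1)^{a\cdot b}\,I$; and finally the Heisenberg--Weyl commutation relation $W(a,b)W(c,d) = (-1)^{[(a,b),(c,d)]}\, W(c,d)W(a,b)$, obtained by pushing $X_a$ past $Z_d$ via $X_a Z_d = (-1)^{a\cdot d} Z_d X_a$. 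Under the hypothesis of the lemma, $W(a,b)$ and $W(c,d)$ therefore anticommute.

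The operator $W(a,b)$ is Hermitian only when $a\cdot b = 0$, and skew-Hermitian when $a\cdot b = 1$. To put myself in a setting where the classical uncertainty principle applies, I would pass to the phase-corrected operators $\widetilde W(a,b) := i^{a\cdot b}\, W(a,b)$. A short computation using $W(a,b)^* = (-1)^{a\cdot b}\, W(a,b)$ shows that $\widetilde W(a,b)$ is Hermitian and squares to the identity, while the anticommutation relation is preserved. Since $\widetilde W(a,b)$ is Hermitian, $\langle f, \widetilde W(a,b) f\rangle$ is real, and because $i^{a\cdot b}$ has unit modulus we still have $\langle f, \widetilde W(a,b) f\rangle^2 = |\widehat{\Delta_a f}(b)|^2$, and likewise for $(c,d)$.

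It therefore suffices to prove the following uncertainty lemma: if $P, Q$ are Hermitian operators with $P^2 = Q^2 = I$ and $PQ + QP = 0$, then $\langle f, Pf\rangle^2 + \langle f, Qf\rangle^2 \leq \|f\|_2^4$. I would prove this by decomposing $f = f_+ + f_-$ into the $\pm 1$-eigenspaces of $P$; the anticommutation relation forces $Q$ to swap these two eigenspaces, so that $\langle f, Pf\rangle = \|f_+\|_2^2 - \|f_-\|_2^2$ and $\langle f, Qf\rangle = 2\,\mathrm{Re}\,\langle f_+, Qf_-\rangle$. Cauchy--Schwarz together with the unitarity of $Q$ then gives $|\langle f, Qf\rangle|^2 \leq 4\|f_+\|_2^2\,\|f_-\|_2^2$, and the two contributions add to $(\|f_+\|_2^2 + \|f_-\|_2^2)^2 = \|f\|_2^4$. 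I expect the only real obstacle to be bookkeeping: one has to be careful with the conjugate-symmetric inner product when verifying that $\widetilde W(a,b)$ is Hermitian and that its diagonal expectation squares to $|\widehat{\Delta_a f}(b)|^2$. Once those identities are in place, the remainder is essentially the textbook quantum uncertainty argument for two anticommuting Pauli observables.
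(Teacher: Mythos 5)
Your proof is correct, and the paper itself offers no proof of this lemma --- it simply cites \cite[Lemma~4.7]{chen2025stabilizer} --- so your argument serves as a self-contained verification along exactly the lines the paper's preliminaries suggest (the listed Weyl-operator facts: Hermiticity after the $i^{|a\circ b|}$ phase correction, the commutation relation $W_{a,b}W_{c,d}=(-1)^{[(a,b),(c,d)]}W_{c,d}W_{a,b}$, and $|\widehat{\Delta_af}(b)|=|\langle f,W_{a,b}f\rangle|$). The eigenspace decomposition with the anticommuting $Q$ swapping the $\pm1$-eigenspaces of $P$, followed by Cauchy--Schwarz, is the standard two-observable uncertainty argument and all the phase bookkeeping checks out.
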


As an immediate consequence of the uncertainty principle, we see that the set
\begin{equation*}
    \big\{(a,b)\in \F_2^n\times\F_2^n \mid |\widehat{\Delta_af}(b)|^2 > \tfrac{1}{2} \|f\|_2^4\big\}
\end{equation*}
is isotropic.

\subsection{Probability distributions and Weyl operators}

\begin{definition}[Characteristic and convoluted distributions]
    For a nonzero function $f:\F_2^n\to \C$, define the \emph{characteristic} and \emph{convoluted} distributions of~$f$ respectively as
    \begin{align*}
        P_f(a,b) &= \frac{1}{2^n \|f\|_2^4}|\widehat{\Delta_af}(b)|^2\\
        Q_f(a,b) &= (P_f*P_f)(a,b).
    \end{align*}
\end{definition}

If~$\phi$ is a stabilizer state, one can show that~$P_\phi$ and~$Q_\phi$ are both equal to the uniform probability distribution over the Lagrangian~$\Lcal(\phi)$.

If~$f$ gives the computational-basis description of a quantum state, sampling from the convoluted distribution~$Q_f$ is known as \emph{Bell difference sampling}.
In the case where~$f$ correlates with some stabilizer state~$\phi$, we have that $\Lcal(\phi)$ has large mass according to~$P_f$ and~$Q_f$:

\begin{lemma}\label{lem:Lcorr}
Let~$\phi:\F_2^n\to\C$ be a stabilizer state and let~$L = \Lcal(\phi)$ be its Lagrangian subspace as in Definition~\ref{def:Lagrangian_phi}.
Then, for any function $f:\F_2^n\to \C$, we have that, 
\begin{equation*}
    Q_f(L) \geq P_f(L)^2 \geq |\langle f,\phi\rangle|^8.
\end{equation*}
\end{lemma}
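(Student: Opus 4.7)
The plan is to prove the two inequalities separately. The first is a combinatorial consequence of $L$ being a subspace, while the second uses Parseval and Cauchy-Schwarz applied to a standard Fourier identity.

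For $Q_f(L) \geq P_f(L)^2$, the point is that $L$ is a linear subspace of $\F_2^n \times \F_2^n$, so $u, v \in L$ implies $u+v \in L$. Restricting the convolution $Q_f = P_f * P_f$ to pairs $u,v \in L$ and reindexing via $w = u+v$ gives
\begin{equation*}
Q_f(L) = \sum_{w \in L} \sum_u P_f(u) P_f(w-u) \geq \sum_{u \in L} \sum_{v \in L} P_f(u) P_f(v) = P_f(L)^2,
\end{equation*}
where the inequality just drops the terms with $u \notin L$ from the (non-negative) convolution sum.

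For the second inequality, I would start from the identity
\begin{equation*}
|\langle f, \phi\rangle|^2 = \E_a \langle \Delta_a f, \Delta_a \phi\rangle,
\end{equation*}
obtained by the change of variables $(x,y) \mapsto (y+a,y)$ in the double sum defining $|\langle f,\phi\rangle|^2$. By Parseval this equals $\E_a \sum_b \widehat{\Delta_a f}(b)\, \overline{\widehat{\Delta_a\phi}(b)}$. The key input is that, since $\phi$ is a stabilizer state, $P_\phi$ is the uniform distribution on $L$, and hence $|\widehat{\Delta_a\phi}(b)|^2 = \one_L(a,b)$; in particular each nonzero Fourier coefficient of $\Delta_a\phi$ lies on $L$ and has modulus exactly~$1$. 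Applying Cauchy-Schwarz on this restricted sum yields
\begin{equation*}
|\langle f, \phi\rangle|^4 \leq \Bigl(\E_a \sum_{b:\,(a,b) \in L} |\widehat{\Delta_a f}(b)|^2\Bigr) \Bigl(\E_a \big|\{b : (a,b)\in L\}\big|\Bigr) = \|f\|_2^4\, P_f(L) \cdot 1,
\end{equation*}
since the second factor collapses to $|L|/2^n = 1$ and the first simplifies using the definition of $P_f$.

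This gives $P_f(L) \geq |\langle f, \phi\rangle|^4/\|f\|_2^4$, which implies the stated $P_f(L) \geq |\langle f, \phi\rangle|^8$ in the regime where $f$ is $1$-bounded (so that $\|f\|_2 \leq 1$ and $|\langle f, \phi\rangle| \leq 1$, the setting in which the lemma will be applied). The only slightly delicate point is the identity $|\widehat{\Delta_a\phi}(b)|^2 = \one_L(a,b)$, which is essentially a restatement of the fact that $P_\phi$ is uniform on $\Lcal(\phi)$; everything else is just Parseval, Cauchy-Schwarz, and the subspace structure of $L$.
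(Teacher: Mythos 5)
Your proof is correct and follows essentially the same route as the paper: the subspace reindexing for $Q_f(L)\ge P_f(L)^2$, and the identity $|\langle f,\phi\rangle|^2=\E_a\langle\Delta_af,\Delta_a\phi\rangle$ combined with Parseval, the fact that $|\widehat{\Delta_a\phi}(b)|=\one_L(a,b)$, and Cauchy--Schwarz for the second inequality (the paper simply normalizes to $\|f\|_2=1$ where you carry the factor $\|f\|_2^4$ explicitly, and both arguments implicitly require $\|f\|_2\le 1$). Note only the small misquotation at the end: what you derived, $P_f(L)\ge|\langle f,\phi\rangle|^4/\|f\|_2^4$, gives the stated $P_f(L)^2\ge|\langle f,\phi\rangle|^8$, not ``$P_f(L)\ge|\langle f,\phi\rangle|^8$.''
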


\begin{proof}
The first inequality follows easily because~$L$ is a linear subspace.
For the second inequality, we may assume that~$\|f\|_2 = 1$.
It then follows from the Cauchy-Schwarz inequality and Parseval's identity that 
    \begin{align*}
    |\langle f, \phi\rangle|^2 &= \Exp_{a\in\F_2^n}\langle \Delta_af,\Delta_a\phi\rangle\\
    &=\Exp_{a\in \F_2^n}\sum_{b\in \F_2^n}\widehat{\Delta_af}(b) \overline{\widehat{\Delta_a\phi}(b)}\\
    &=\Exp_{(a,b)\in L}\widehat{\Delta_af}(b) \overline{\widehat{\Delta_a\phi}(b)}\\
    &\leq \Big(\Exp_{(a,b)\in L}\big|\widehat{\Delta_af}(b)\big|^2\Big)^{\frac{1}{2}}\\
    &=P_f(L)^{\frac{1}{2}}.
\end{align*}
This proves the claim.
\end{proof}

\begin{definition}[Weyl operators]\label{def:Weyl}
    For a pair $(a,b)\in \F^n\times\F_2^n$ define the linear operator $W_{a,b}$ on functions $f:\F_2^n\to \C$ by
    \begin{equation*}
        (W_{a,b}f)(x) = i^{|a\circ b|}(-1)^{b\cdot x}f(x+a).
    \end{equation*}
\end{definition}

While we will not use the Weyl operators much here, we record a few basic facts about them to allow us to defer the proofs of some of the facts we use below to quantum-information-theoretic literature (see for instance~\cite{Gross2021}).

\begin{itemize}
    \item The Weyl operators are unitary and Hermitian, and so have eigenvalues in~$\pmset{}$.
    \item $W_{a,b}W_{c,d} = (-1)^{[(a,b),(c,d)]}W_{c,d}W_{a,b}$ for all $a,b,c,d\in \F_2^n$.
    \item $\widehat{\Delta_af}(b) = i^{|a\circ b|}\langle f,W_{a,b}f\rangle$ for all $a,b\in \F_2^n$.
\end{itemize}

\section{The main algorithm}

In this section we provide an algorithm that, when given query access to a bounded function $f: \F_2^n \to \C$ and parameters $\tau>0$, $1/2 <\gamma \leq 1$, returns a list of size $\big((\gamma-1/2) \tau\big)^{O(-\log(1/\tau))}$ containing \emph{all} $\gamma$-approximate local maximizers $\phi\in \Stab(\F_2^n)$ satisfying $|\langle f, \phi\rangle| \geq \tau$
(with high probability).
This can be regarded as a ``dequantization'' of the quantum procedure given by~\cite[Corollary~6.2]{chen2025stabilizer}, as well as a type of list-decoding algorithm (which is only possible due to the notion of approximate local maximality).
In the next section we will show how to use this algorithm to prove the results stated in the Introduction.

\begin{theorem}[List-decoding stabilizer states] \label{thm:list}
    Let $\tau,\, \delta>0$ and $1/2<\gamma\leq 1$.
    There is a randomized algorithm that, when given query access to a bounded function $f: \F_2^n \to [-1, 1]$, returns a list of size $\log(1/\delta) \big((\gamma-1/2) \tau\big)^{-O(\log(1/\tau))}$ which, with probability at least $1-\delta$, contains all stabilizer states that are $\gamma$-approximate local maximizers and have correlation at least~$\tau$ with~$f$.
    This algorithm makes $n^2 \log n\, \log(\delta^{-1}) \big((\gamma-1/2) \tau\big)^{-O(\log(1/\tau))}$ queries to~$f$ and has runtime $n^3 \log(\delta^{-1}) \big((\gamma-1/2) \tau\big)^{-O(\log(1/\tau))}$.
\end{theorem}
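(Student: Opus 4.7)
The plan is to follow the algorithmic outline sketched in the introduction, implementing a classical-query analogue of the quantum stabilizer-learning algorithm of Chen, Gong, Ye and Zhang. The central primitive is a procedure that samples pairs $(a,b)\in\F_2^n\times\F_2^n$ from a distribution close to the characteristic distribution $P_f$. In the absence of Bell difference sampling, I would implement this by drawing $a\in\F_2^n$ uniformly, running the Goldreich--Levin algorithm (Theorem~\ref{thm:GL}) on the $1$-bounded derivative $\Delta_af$ at threshold $\Theta(\tau^2)$, and then using $\FourEst$ (Lemma~\ref{lem:fourest}) to refine the weights and sample $b$ from the output list with probability proportional to the estimated $|\widehat{\Delta_af}(b)|^2$. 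The resulting distribution $\widetilde P_f$ matches a bounded multiple of $P_f$ on the ``heavy'' tail where $P_f(a,b)\gtrsim\tau^{O(1)}/2^n$. Since Lemma~\ref{lem:Lcorr} forces any Lagrangian $\Lcal(\phi)$ with $|\langle f,\phi\rangle|\geq\tau$ to have $P_f$-mass at least $\tau^4$, and isotropic subspaces of $\F_2^n\times\F_2^n$ have at most $2^n$ elements, this crude approximation suffices.

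The core loop, targeting a fixed (unknown) approximate local maximizer $\phi$, proceeds as follows. The spectral set $S_f=\{(a,b):|\widehat{\Delta_af}(b)|^2>\tfrac{1}{2}\|f\|_2^4\}$ is isotropic by the uncertainty principle (Lemma~\ref{lem:uncertainty1}), and membership in $S_f$ can be certified confidently by a single $\FourEst$ call. I draw $\Theta(n+\log(1/\delta))$ samples from $\widetilde P_f$, discard those not confirmed to lie in $S_f$, and take the $\F_2$-span $W$ of the rest; standard arguments show that $W$ covers all of $S_f$ up to a tiny $P_f$-mass. If $\Lcal(\phi)\subseteq S_f$, then $W$ already contains a near-full-measure subspace of $\Lcal(\phi)$ and can be completed to a Lagrangian by picking any basis of its symplectic complement. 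In general, however, $\Lcal(\phi)$ may contain elements $(a,b)\notin S_f$, and for each such element the Weyl operator $W_{a,b}$ acts on $\phi$ as a sign $\sigma\in\{\pm 1\}$. Projecting $f$ onto the $\sigma$-eigenspace via $f'=\tfrac{1}{2}(f+\sigma W_{a,b}f)$ preserves $\langle f,\phi\rangle$ while shrinking $\|f'\|_2^2$ by a definite amount, since $(a,b)\notin S_f$ forces $|\langle W_{a,b}f,f\rangle|\leq\|f\|_2^2/\sqrt 2$; this boosts the normalized squared correlation $|\langle f,\phi\rangle|^2/\|f\|_2^2$ by a factor strictly greater than $1$. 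Since this ratio is capped by $1$, the projection step iterates at most $O(\log(1/\tau))$ times before $\Lcal(\phi)$ is fully contained in the spectral set of the projected function. Branching over all $2^{O(\log(1/\tau))}$ sign-sequences $\sigma$ accounts for not knowing $\sigma$ in advance and yields the stated list size.

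Each candidate Lagrangian $L$ produced by this loop is then converted into a stabilizer state via the parametrization in~\eqref{eq:Lphi}: one reads off the subspace $V$, its dual $V^{\perp}$ and the symmetric ``slope'' matrix $A+A^{\mathsf T}+\Diag(c)$, uses one final Goldreich--Levin call applied to $f$ multiplied by the stripped quadratic phase to identify the affine shift and linear component, and invokes $\FourEst$ to keep only the candidates achieving correlation at least $\tau-o(1)$. The total cost is dominated by $\poly(1/\tau)\cdot n^2\log n$ queries and $n^3$ time per Goldreich--Levin call, multiplied by the $2^{O(\log(1/\tau))}$ projection branches and the $\log(1/\delta)$ amplification factor. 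The hard part will be making the projection argument quantitative: I must verify that under the $\gamma>1/2$ approximate-local-maximizer hypothesis the boost holds with a constant factor $>1$ despite the sampling and $\FourEst$ errors accumulated across $O(\log(1/\tau))$ iterations, and that the wrong sign choice --- which roughly halves the correlation --- does not prematurely drive the correlation below the $\tau$ threshold. This is precisely where the $\gamma-1/2$ gap appears in the list-size bound.
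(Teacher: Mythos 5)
Your proposal follows the paper's overall strategy (spectral set via the uncertainty principle, energy-boosting projections, Goldreich--Levin both to sample the derivative spectrum and to recover the linear part), but it has a genuine gap at the central step: you never explain how the algorithm \emph{finds} a pair $(a,b)\in\Lcal(\phi)\setminus S_f$ on which to project. Branching over sign sequences handles the unknown $\sigma$, but not the unknown direction $(a,b)$. This is exactly where the $\gamma$-approximate-local-maximizer hypothesis and the $(\gamma-1/2)$ factor must enter: the paper proves (Lemmas~\ref{lem:terrible}--\ref{lem:nonrobust}) that when $f$ does not robustly generate $\Lcal(\phi)$, a sample from the \emph{convolved} distribution $Q_f=P_f*P_f$ lands in $\Lcal(\phi)\setminus\Spec(f)$ with probability at least $\Omega((\gamma-\tfrac12)^2\tau^8)$, because $Q_f$ is provably spread over the cosets of every proper subspace of $\Lcal(\phi)$. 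Your proposal samples from (an approximation of) $P_f$ instead, and this spreading property fails for $P_f$: its mass inside $\Lcal(\phi)$ can concentrate entirely on a proper subspace, so sampling may never produce a useful projection direction. The paper explicitly flags that ``using $P_f$ would not work'' here; your sampler would need to be convolved with itself (as the paper does with $\nu_f*\nu_f$) and the smoothness lemma proved for the resulting measure.

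Two smaller but real problems. First, your claim that when $\Lcal(\phi)\subseteq S_f$ the span $W$ of the retained samples ``can be completed to a Lagrangian by picking any basis of its symplectic complement'' does not recover $\Lcal(\phi)$: covering $S_f$ up to small mass does not imply covering $\Lcal(\phi)$ (that implication is precisely the definition of robust generation, which is the condition the boosting loop exists to achieve), and an arbitrary Lagrangian completion of a lower-dimensional isotropic subspace will generically be the wrong one. Second, the list-decoding guarantee --- that the output contains \emph{all} $\gamma$-approximate local maximizers with correlation at least $\tau$ --- does not follow from enumerating sign branches; it follows from showing that for each \emph{fixed} such $\phi$ the procedure outputs $\phi$ with probability $p=((\gamma-1/2)\tau)^{O(\log(1/\tau))}$ (which also bounds the number of such $\phi$ by $O(1/p)$) and then repeating $O(p^{-1}\log(p^{-1})\log(\delta^{-1}))$ times. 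You would also need the observation (the ``moreover'' clause of Lemma~\ref{lem:boost}) that the projection preserves the approximate-local-maximizer property, so that the spreading lemma remains applicable to the projected functions in later iterations.
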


We will start by providing a sampling algorithm that, with nonnegligible probability, outputs the Lagrangian subspace $\Lcal(\phi)$ associated to some fixed (but unknown) $\gamma$-approximate local maximizer of correlation~$\phi$ satisfying $|\langle f, \phi\rangle| \geq \tau$.
This algorithm assumes both query access to the function~$f$ and sampling access to its convoluted distribution~$Q_f$
(as well as~$Q_{f'}$ for any function~$f'$ that is ``easily queried'' from~$f$).
This is given in Section~\ref{sec:Lagrangian}, and essentially amounts to a simplified version of the arguments of Chen et al.\ when translated to our setting.

We then show, in Section~\ref{sec:Stabilizer}, how to learn (with nonnegligible probability) the desired stabilizer state~$\phi$ from its Lagrangian~$\Lcal(\phi)$.
Note that there are~$2^n$ stabilizer states associated to any Lagrangian subspace, and several of them can satisfy the requirements of our unknown stabilizer state~$\phi$.
Our learning algorithm will then output a random such stabilizer state~$\psi$ whose probability of being picked depends only on the correlation $|\langle f, \psi\rangle|$.

In Section~\ref{sec:Convoluted} we provide a randomized sampling procedure which, when given the ability to query $O(n \log n)$ times a bounded function $f: \F_2^n \to \C$, samples from some probability distribution over $\F_2^{2n}$ that is ``close enough'' to the convoluted distribution~$Q_f$ (albeit in a nontrivial way).
This will allow us to ``transform'' samples from~$Q_f$ into queries to~$f$ at a cost of $O(n\log n)$ queries per sample.

Finally, in Section~\ref{sec:List} we knit all the results obtained thus far into the ``list-decoding'' algorithm we want.

\subsection{Sampling a good Lagrangian subspace} \label{sec:Lagrangian}

Recall that, as a consequence of the uncertainty principle given in Lemma~\ref{lem:uncertainty1}, the set
\beqn
\big\{(a,b)\in \F_2^n\times\F_2^n \mid |\widehat{\Delta_af}(b)|^2 > \tfrac{1}{2} \|f\|_2^4\big\}
\eeqn
is isotropic.
In order to account for possible approximation errors, we will consider the following set:

\begin{definition}[Spectral set]
    For a function $f:\F_2^n\to \C$, define
    \begin{equation*}
        \Spec(f) = \big\{(a,b)\in \F_2^n\times\F_2^n \mid |\widehat{\Delta_af}(b)|^2 \geq 0.7 \|f\|_2^4\big\}.
    \end{equation*}
\end{definition}

The spectral set is then clearly isotropic as well.

The distributions~$P_f$ and~$Q_f$ are biased towards elements in $\Spec(f)$.
Moreover, given a pair $(a,b)\in \F_2^{2n}$, we can easily estimate the value of $|\widehat{\Delta_af}(b)|^2$ using $O(1)$ queries to~$f$, and thus we have an approximate membership oracle for the set $\Spec(f)$.
These two facts combined make dealing with $\Spec(f)$ very useful.

\subsubsection{Robust Lagrangian generation}

\begin{definition}
    Let $f:\F_2^n\to \C$ be a nonzero function.
    A set $F\subseteq \F_2^{2n}$ is an $\eps$-approximate spectral set for~$f$ if 
    \begin{equation*}
        Q_f\big(\Spec(f)\setminus F\big) \leq \eps.
    \end{equation*}
\end{definition}

\begin{definition}
    Let~$f:\F_2^n\to \C$ be a nonzero function, let $L \subseteq \F_2^n \times \F_2^n$ be a Lagrangian subspace and let $0<\eta<1$.
    We say that~$f$ \emph{$\eta$-robustly generates~$L$} if $L \subseteq \vspan(F)$ for every $\eta$-approximate spectral set~$F$.
\end{definition}

If~$f$ robustly generates~$\Lcal(\phi)$, then it is easy to learn a basis of~$\Lcal(\phi)$ by sampling $O(n/\eta)$ pairs $(a, b) \sim Q_f$.
This is because the span of such a sample, after pruning, is an approximate spectral set with good probability.

\begin{lemma}\label{lem:rubustgen}
    Let $\eps,\delta>0$ and $f:\F_2^n\to \C$ be a $1$-bounded function.
    Suppose that~$f$ $\eps$-robustly generates a Lagrangian subspace~$L$.
    There is a randomized algorithm that uses $m = O\big(\tfrac{1}{\eps}(n + \log(\tfrac{1}{\delta})\big)$ samples from~$Q_f$, makes $O(m\log(m/\delta))$ queries to~$f$ and returns a basis for a subspace~$L'\subseteq \F_2^n\times\F_2^n$ such that with probability at least~$1-\delta$, we have~$L'=L$.
\end{lemma}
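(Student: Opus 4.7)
Plan: I would draw $m = O((n+\log(1/\delta))/\eps)$ independent samples $(a_1,b_1),\dots,(a_m,b_m)$ from $Q_f$; for each sample, use $\FourEst$ (Lemma~\ref{lem:fourest}) applied to $\Delta_{a_i} f$ at input $b_i$ to decide whether to keep or discard it; and finally return a basis for the span of the kept samples. The sample complexity combined with $O(\log(m/\delta))$ queries per filter call yields the claimed query count.

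The analysis rests on two ingredients. First, the structural fact $\Spec(f) \subseteq L$: by Lemma~\ref{lem:uncertainty1}, $\Spec(f)$ is isotropic, so $\vspan(\Spec(f))$ is an isotropic subspace of dimension at most $n$; since $\Spec(f)$ is trivially a $0$-approximate spectral set, robust generation gives $L \subseteq \vspan(\Spec(f))$, and a dimension comparison forces equality, whence $\Spec(f) \subseteq L$. Second, a dimension-increase argument: letting $V_t$ denote the span of the kept samples after $t$ steps, note that $V_t$ itself is a valid candidate in the definition of robust generation. If $V_t \not\supseteq L$, then $V_t$ cannot be an $\eps$-approximate spectral set (else $L \subseteq \vspan(V_t) = V_t$), so $Q_f(\Spec(f)\setminus V_t) > \eps$; equivalently, a fresh $Q_f$-sample lies in $\Spec(f)\setminus V_t$ with probability greater than $\eps$. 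Any such sample, once accepted by the filter, strictly increases $\dim(V_t)$, and at most $n$ such successes are needed. A Chernoff bound on Bernoulli trials with success probability $\geq \eps$ then guarantees $n$ successes except with probability $\delta/2$ once $m = \Omega((n+\log(1/\delta))/\eps)$.

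The filter I would use is as follows: first estimate $\|f\|_2^2$ by empirical averaging, then for each sample invoke $\FourEst$ on $\Delta_{a_i} f$ at $b_i$ with constant accuracy relative to $\|f\|_2^2$, failing with probability at most $\delta/(2m)$. The acceptance threshold is placed strictly between $\sqrt{0.5}\,\|f\|_2^2$ and $\sqrt{0.7}\,\|f\|_2^2$, so that (i) every element of $\Spec(f)$ is accepted, and (ii) every accepted pair $(a,b)$ satisfies $|\widehat{\Delta_a f}(b)|^2 > 0.5\,\|f\|_2^4$. The set of all pairs satisfying (ii) is isotropic by Lemma~\ref{lem:uncertainty1} and contains $\Spec(f)$; repeating the dimension count of the first step shows it too is contained in $L$, so every kept sample lies in $L$ and $V_m \subseteq L$.

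The main obstacle is this filter calibration: $\Spec(f)$ is defined by a sharp cutoff at $0.7\,\|f\|_2^4$, and $\FourEst$'s errors must be controlled precisely enough to avoid admitting pairs that violate isotropy (which would let samples escape $L$) while still keeping all of $\Spec(f)$ so as to maintain the per-step success probability of $\eps$ in the dimension-increase argument. Placing the threshold strictly inside the interval $(\sqrt{0.5}\,\|f\|_2^2,\sqrt{0.7}\,\|f\|_2^2)$ with constant-precision $\FourEst$ resolves both concerns. Union-bounding the $m$ filter decisions gives filter failure probability $\leq \delta/2$, and combining with the sampling analysis yields total failure probability $\leq \delta$; in the success event $V_m \subseteq L$ (from the filter) and $V_m \supseteq L$ (from the dimension-increase argument), so $V_m = L$ as required.
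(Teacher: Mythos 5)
Your algorithm is the same as the paper's (sample from $Q_f$, filter each pair by estimating $|\widehat{\Delta_a f}(b)|$ with a threshold placed strictly between $\sqrt{0.5}$ and $\sqrt{0.7}$ times $\|f\|_2^2$, return a basis of the span), but your analysis is organized differently and is correct. The paper sets $T=S\cap\Spec(f)$, invokes an external lemma of Grewal--Iyer--Kretschmer--Liang on spans of i.i.d.\ samples to show $T$ (hence the filtered set $F\supseteq T$) is an $\eps$-approximate spectral set, and then concludes $\vspan(F)=L$ from robust generation (giving $L\subseteq\vspan(F)$) together with isotropy of $F$ (giving $\dim\vspan(F)\le n$). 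You instead prove the two inclusions $V_m\subseteq L$ and $V_m\supseteq L$ separately: the first via the clean structural observation that robust generation plus isotropy forces $\Spec(f)\subseteq L$ and likewise forces the $\tfrac12$-level set (which contains every accepted pair) into $L$; the second via a sequential dimension-increment argument in which the current span $V_t$ is itself tested against the definition of robust generation, so that whenever $V_t\not\supseteq L$ a fresh sample escapes $V_t$ while landing in $\Spec(f)$ with probability $>\eps$. This buys you a self-contained proof with no appeal to the cited random-span lemma (your increment argument is essentially an inline proof of it), at the cost of a slightly more delicate probabilistic step (the trials are only conditionally Bernoulli, so one needs the standard stochastic-domination form of Chernoff). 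Both proofs share the same unaddressed calibration issue when $\|f\|_2$ is small --- constant-accuracy Fourier estimation cannot resolve thresholds proportional to $\|f\|_2^4$ --- which the paper only handles later by restricting to $\|f\|_2\ge\tau$, so this is not a defect specific to your write-up.
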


The proof of this lemma is essentially given in~\cite{chen2025stabilizer}.
We sketch it here for completeness.

\begin{proof}[ sketch]
  Let $S\subseteq \F_2\times\F_2^n$ be a random set of~$m$ independent $Q_f$-samples and let $T = S\cap\Spec(f)$.
  We first show that with probability at least $1 - \delta/2$, the set~$T$ is an $\eps$-approximate spectral set.
  
  Let $p = Q_f(\Spec(f))$.
  If~$p\leq \eps$, then there is nothing to prove.
  Suppose $p > \eps$.
  Note that the elements of~$T$ are distributed independently according to~$R_f=\one_{\spec(f)}Q_f/p$.
  By the Chernoff bound, we have that $|T| \geq (pm)/2$ with probability at least $1 - \delta/4$.
  Conditioned on this size of~$T$, it follows from~\cite[Lemma~2.3]{grewal2023efficient} that with probability at least~$1 - \delta/4$, we have $R_f(\vspan(T)) \geq 1-\eps/p$.
  Then,
  \begin{align*}
      Q_f\big(\spec(f)\setminus \vspan(T)\big) &=
      p\,R_f\big(\spec(f)\setminus \vspan(T)\big)
      \leq
      \eps.
  \end{align*}
  This shows that with probability at least $1 - \delta/2$, the set~$T$ is an $\eps$-approximate spectral set.

  For each $(a,b)\in S$, run the algorithm $\FourEst_{\eps_1,\delta_1}$ from Lemma~\ref{lem:fourest} on input $(\Delta_af, b)$ with parameters $\eps_1 = 0.1$ and $\delta_1 = \delta/(4m)$.
  Let~$F\subseteq S$ be the set for which the algorithm returns a complex number~$c$ such that~$|c|>0.6$.
  By the union bound, with probability at least $1 - \delta/4$, we have that $T\subseteq F$.
  Hence~$F$ is an $\eps$-spectral set with probability at least $1 - 3\delta/4$.
  Moreover,~$F$ contains no elements such that~$|\widehat{\Delta_af}(b)| \leq 0.5$ with probability at least~$1 - \delta/4$.
  In this case,~$F$ is isotropic by Lemma~\ref{lem:uncertainty1}.
  Since~$f$ $\eps$-robustly generates~$L$, we get that~$\vspan(F) = L$ with probability at least~$1-\delta$.

  Return a basis for~$F$.
\end{proof}

\subsubsection{Non-robust Lagrangian generation implies energy increment.}

If~$f$ \emph{does not} generate~$\mathcal L(\phi)$ robustly, then there is an easy way to obtain an ``energy increment'' given by an increase of the normalized correlation with~$\phi$.
This is obtained by replacing~$f$ with a projection of~$f$ to a subspace of functions satisfying a certain linear relation satisfied by~$\phi$.
Since~$\phi$ is a stabilizer state, for every choice of~$a$, it satisfies that its discrete derivative~$\Delta_a\phi$ is proportional to a linear phase function on a coset of a subspace.

Given $a, b\in \F_2^n$ and $\sigma\in \pmset{}$, 
define the subspace of functions
\begin{equation*}
    V_{a,b}^{\sigma} = \big\{f: \F_2^n\to \C\mid f(x+a) = \sigma i^{-|a\circ b|} (-1)^{b\cdot x} f(x)\:\:\text{for all $x\in \F_2^n$}\big\}.
\end{equation*}
It follows readily from the explicit forms of a stabilizer state~$\phi$ given in~\eqref{eq:stabstate} and its associated Lagrangian~$\mathcal L(\phi)$ given in~\eqref{eq:Lphi} that for each $(a,b)\in \mathcal L(\phi)$ there is  a $\sigma$ such that $\phi\in V_{a,b}^\sigma$.

The projection $\Pi_{a,b}^{\sigma}f$ of a function~$f$ to~$V_{a,b}^{\sigma}$ is given by the function
\begin{equation*}
    \Pi_{a,b}^{\sigma}f(x) = \frac{f(x) + \sigma i^{|a\circ b|}(-1)^{b\cdot x} f(x+a)}{2}.
\end{equation*}

\begin{lemma}[Energy boosting]\label{lem:boost}
    Let~$f:\F_2^n\to\D$ be a 1-bounded function and suppose~$\phi\in V_{a,b}^{\sigma}$ is a stabilizer state.
    If $(a,b)\not\in \Spec(f)$, then the function
    \begin{equation*}
        f' = \Pi_{a,b}^{\sigma}f
    \end{equation*}
    satisfies 
    \begin{equation*}
        \frac{|\langle f',\phi\rangle|^2 }{\|f'\|_{2}^2} \geq 1.08\frac{|\langle f,\phi\rangle|^2}{\|f\|_{2}^2}.
    \end{equation*}
    Moreover, if~$\phi$ is a $\gamma$-approximate local maximizer for~$f$, then it is also a $\gamma$-approximate local maximizer for~$f'$.
\end{lemma}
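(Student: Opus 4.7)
The plan is to exploit the fact that $\Pi_{a,b}^{\sigma} = (I + \sigma W_{a,b})/2$ is an orthogonal projection onto $V_{a,b}^{\sigma}$ (as one can check directly from the definitions). Since $\phi \in V_{a,b}^{\sigma}$ is equivalent to $W_{a,b}\phi = \sigma\phi$, the projection fixes $\phi$, so the numerator of the claimed ratio is preserved:
\begin{equation*}
\langle f', \phi\rangle \;=\; \langle \Pi_{a,b}^{\sigma} f, \phi\rangle \;=\; \langle f, \Pi_{a,b}^{\sigma} \phi\rangle \;=\; \langle f, \phi\rangle.
\end{equation*}
The entire energy boost therefore has to come from a shrinkage of the denominator.

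For the denominator, I would expand $\|f'\|_2^2 = \|\Pi_{a,b}^{\sigma} f\|_2^2$ directly from the formula for $\Pi_{a,b}^{\sigma}$ and recognize the cross term as an average matching the definition of $\widehat{\Delta_af}(b)$. A short calculation gives
\begin{equation*}
\|f'\|_2^2 \;=\; \tfrac{1}{2}\|f\|_2^2 \;+\; \tfrac{1}{2}\,\Re\bigl(\sigma\, i^{|a\circ b|}\,\widehat{\Delta_af}(b)\bigr).
\end{equation*}
Since $(a,b)\notin\Spec(f)$ gives $|\widehat{\Delta_af}(b)| < \sqrt{0.7}\,\|f\|_2^2$, we obtain $\|f'\|_2^2 < \tfrac{1+\sqrt{0.7}}{2}\|f\|_2^2 < \|f\|_2^2/1.08$, and combining this with the previous identity proves the first inequality of the lemma.

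For the approximate local maximizer claim, I would fix any stabilizer state $\phi'$ with $|\langle \phi, \phi'\rangle|^2 = 1/2$. The projection structure yields
\begin{equation*}
\langle f', \phi'\rangle \;=\; \langle f, \Pi_{a,b}^{\sigma}\phi'\rangle \;=\; \tfrac{1}{2}\bigl(\langle f, \phi'\rangle \;+\; \sigma\,\langle f, W_{a,b}\phi'\rangle\bigr).
\end{equation*}
The crucial observation is that $W_{a,b}\phi'$ is itself a stabilizer state---the Weyl operator acts by translation and multiplication by a linear phase, which preserves both $\|\cdot\|_2$ and $\|\cdot\|_{U^3}$---and that it is moreover another neighbor of $\phi$: using Hermiticity of $W_{a,b}$ together with $W_{a,b}\phi = \sigma\phi$,
\begin{equation*}
\langle \phi, W_{a,b}\phi'\rangle \;=\; \langle W_{a,b}\phi, \phi'\rangle \;=\; \sigma\langle \phi, \phi'\rangle,
\end{equation*}
so $|\langle \phi, W_{a,b}\phi'\rangle|^2 = 1/2$. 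Applying the $\gamma$-approximate local maximality of $\phi$ for $f$ to both $\phi'$ and $W_{a,b}\phi'$, together with the triangle inequality, then yields $|\langle f', \phi'\rangle|^2 \leq \gamma^{-1}|\langle f, \phi\rangle|^2 = \gamma^{-1}|\langle f', \phi\rangle|^2$.

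The calculation for the denominator is routine once one identifies $\Pi_{a,b}^{\sigma}$ as the spectral projector of $W_{a,b}$; the main conceptual point lies in the second claim, where the symmetry that $W_{a,b}$ permutes the neighbors of $\phi$ allows both summands in the triangle-inequality bound to be controlled uniformly by the same approximate local maximality hypothesis.
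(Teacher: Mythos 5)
Your proposal is correct. The first part (numerator preserved by self-adjointness of the projection, denominator shrunk via the cross term $\tfrac12\Re(\sigma i^{|a\circ b|}\widehat{\Delta_af}(b))$ and the bound $|\widehat{\Delta_af}(b)|<\sqrt{0.7}\,\|f\|_2^2$ from $(a,b)\notin\Spec(f)$) is essentially identical to the paper's argument, and in fact states the cross term more precisely than the paper does. For the second part, however, you take a genuinely different and arguably cleaner route. The paper invokes the classification of neighbors, $\phi'=\tfrac{1}{\sqrt 2}(I+i^\ell W_{c,d})\phi$ (citing Garc\'ia--Markov--Cross), and then splits into cases according to whether $[(a,b),(c,d)]$ is $0$ or $1$: in the commuting case $\Pi_{a,b}^\sigma$ fixes $\phi'$, and in the anticommuting case $\Pi_{a,b}^\sigma\phi'=\tfrac{1}{\sqrt2}\phi$. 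You instead expand $\langle f',\phi'\rangle=\tfrac12\bigl(\langle f,\phi'\rangle+\sigma\langle f,W_{a,b}\phi'\rangle\bigr)$ and observe that $W_{a,b}$ maps the neighborhood of $\phi$ to itself (since $W_{a,b}$ is Hermitian, preserves $\Stab(\F_2^n)$, and satisfies $W_{a,b}\phi=\sigma\phi$), so both summands are individually controlled by the local-maximality hypothesis; the triangle inequality then gives $|\langle f',\phi'\rangle|\le\gamma^{-1/2}|\langle f,\phi\rangle|=\gamma^{-1/2}|\langle f',\phi\rangle|$, which is exactly the required bound after squaring. Your approach avoids the external classification theorem and the case analysis entirely, at the small cost of checking that $W_{a,b}\phi'$ is a stabilizer state and a neighbor of $\phi$ (both of which you do). Both arguments are valid; yours is more self-contained.
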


\begin{proof}
Since $\phi\in V_{a,b}^\sigma$, we have that $\Pi_{a,b}^\sigma\phi = \phi$, and so
\begin{equation*}
    \langle f',\phi\rangle = \langle \Pi_{a,b}^\sigma f,\phi\rangle = \langle f,\Pi_{a,b}^\sigma\phi\rangle = \langle f,\phi\rangle. 
\end{equation*}
We also have that
\begin{align*}
    \|f'\|_2^2 &\leq \frac{1}{2}\|f\|_2^2 + \frac{1}{2}|\widehat{\Delta_af}(a)| \leq \frac{1}{2}\big(1 + \sqrt{0.7}\big) \|f\|_2^2 \leq 0.92 \|f\|_2^2.
\end{align*}
This implies the first claim.

Recall the definition of the Weyl operators (Definition~\ref{def:Weyl}).
Suppose~$\phi$ is a $\gamma$-approximate local maximizer for~$f$.
Any $\phi'\in\Stab(\F_2^n)$ satisfying $|\langle \phi,\phi'\rangle| = 1/\sqrt{2}$ has the form $\tfrac{1}{\sqrt{2}}(I + i^\ell W_{c,d})\phi$ for some $\ell\in \Z$ and $c,d\in \F_2^n$~\cite[Theorem~13]{Garcia2014Geometry}.
Since $\phi\in V_{a,b}^\sigma$ it follows that $\langle f',\phi\rangle = \langle f,\phi\rangle$.

Now let $\phi' = \tfrac{1}{\sqrt{2}}(I + i^\ell W_{c,d})\phi$.
Let $M = \tfrac{1}{\sqrt{2}}(I + i^\ell W_{c,d})$.
If $[(a,b),(c,d)] = 0$, then $\Pi_{a,b}^\sigma$ and $M$ commute and we get that
Then,
\begin{align*}
    \langle f',\phi'\rangle = \langle f, \Pi_{a,b}^\sigma M\phi\rangle
    =\langle f,\phi'\rangle.
\end{align*}
This gives
\begin{align*}
    |\langle f',\phi'\rangle|^2 &=|\langle f,\phi'\rangle|^2
    \leq \frac{1}{\gamma}|\langle f,\phi\rangle|^2
    = \frac{1}{\gamma}|\langle f',\phi\rangle|^2.
\end{align*}

If $[(a,b),(c,d)] = 1$, then $\Pi_{a,b}^\sigma M\phi = \frac{1}{\sqrt{2}}\phi$ and so $\langle f',\phi'\rangle = \frac{1}{\sqrt{2}}\langle f',\phi\rangle$.
Since~$\gamma\leq 1$, this implies that
\begin{equation*}
    |\langle f',\phi'\rangle|^2 \leq \frac{1}{\gamma}|\langle f',\phi\rangle|^2.
\end{equation*}
This proves the claim.
\end{proof}

The idea now is to iteratively use Lemma~\ref{lem:boost} until a function has been found that robustly generates~$\mathcal L(\phi)$, at which point the algorithm from Lemma~\ref{lem:rubustgen} can be used to find~$\mathcal L(\phi)$ with good probability.
The main observation to make is that if~$|\langle f,\phi\rangle|\geq \tau$, then the energy can be boosted at most~$t= O(\log(1/\tau))$ times until we have obtained a projection of~$f$ that perfectly correlates with~$\phi$ and thus maximizes the energy. 
Hence, if we choose $t'$ uniformly at random from~$\{0,\dots,t\}$ and boost~$t'$ times, with probability at least~$1/t$ we will have obtained a projection of~$f$ that robustly generates~$\mathcal L(\phi)$.

It turns out that if~$f$ does not robustly generate~$\Lcal(\phi)$, then it is not hard to find a projection as in Lemma~\ref{lem:boost}. 
The following lemma shows that in this case, a sample from~$Q_f$ will with non-negligible probability yield a pair $(a,b)\in \mathcal L(\phi)\setminus \spec(f)$.
Flipping a coin to choose a sign~$\sigma$ then gives a triple $(a,b,\sigma)$ enabling an energy boost with good probability.

\begin{lemma}\label{lem:nonrobust}
    Let~$\gamma\in (\tfrac{1}{2}, 1)$, $\tau>0$ and denote $\eps = (\gamma  -\frac{1}{2})^2\tau^8/8$.
    Let~$f:\F_2^n\to\C$ be a function and let~$\phi$ be a $\gamma$-approximate local maximizer of correlation for~$f$ such that $|\langle f,\phi\rangle| \geq \tau$.
    Suppose that~$f$ does not $\eps$-robustly generate~$\mathcal L(\phi)$.
    Then,
    \begin{equation*}
        Q_f\big(\mathcal L(\phi)\setminus \spec(f)\big) \geq \eps.
    \end{equation*}
\end{lemma}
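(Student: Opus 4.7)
I argue by contrapositive: assuming $Q_f(L \setminus \spec(f)) < \eps$ where $L := \mathcal L(\phi)$, I show that $f$ $\eps$-robustly generates $L$. Fix any $\eps$-approximate spectral set $F \subseteq \F_2^{2n}$; replacing $F$ by $F \cap \spec(f)$ preserves the defining inequality, so I may assume $F \subseteq \spec(f)$. By the uncertainty principle (Lemma~\ref{lem:uncertainty1}) the set $\spec(f)$ is isotropic, hence $\vspan(F)$ is an isotropic subspace. Combining the hypothesis with $Q_f(L) \geq \tau^8$ from Lemma~\ref{lem:Lcorr}, I obtain
\begin{equation*}
    Q_f(L \cap \vspan(F)) \geq Q_f(L) - Q_f(L \setminus \spec(f)) - Q_f(\spec(f) \setminus F) > \tau^8 - 2\eps,
\end{equation*}
so essentially all of the $Q_f$-mass on $L$ is concentrated in the subspace $L \cap \vspan(F)$.

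Suppose for contradiction that $L \not\subseteq \vspan(F)$. I will construct a stabilizer state $\phi' \in \Stab(\F_2^n)$ with $|\langle \phi, \phi'\rangle|^2 = 1/2$ and $|\langle f, \phi'\rangle|^2 > \gamma^{-1}|\langle f, \phi\rangle|^2$, contradicting the $\gamma$-approximate local maximality of $\phi$. Let $L_0 := L \cap \vspan(F) \subsetneq L$, fix a codimension-one subspace $H \subseteq L$ with $L_0 \subseteq H$, and note that the set of vectors symplectically orthogonal to $H$ has dimension $n+1$ and contains $L$; hence one can pick $(c^*,d^*)$ symplectically orthogonal to $H$ but not in $L$, so that $L'' := H + \vspan\{(c^*,d^*)\}$ is a Lagrangian sharing the codimension-one subspace $H$ with $L$. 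Using Proposition~\ref{prop:stab_states_explicit}, I then select a stabilizer state $\phi'$ with $\mathcal L(\phi') = L''$ whose quadratic and linear components match those of $\phi$ on $H$; this choice satisfies $|\langle \phi, \phi'\rangle|^2 = 1/2$, making $\phi'$ a valid neighbor of $\phi$.

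The crucial estimate is a lower bound on $|\langle f, \phi'\rangle|^2$, obtained by reusing the Cauchy--Schwarz expansion from the proof of Lemma~\ref{lem:Lcorr} and splitting the Fourier sum over $L''$ into the two cosets $H$ and $(c^*,d^*)+H$. On $H$, the summand reproduces that of $\langle f, \phi\rangle$ up to a global phase (since $\phi$ and $\phi'$ agree on $L \cap L'' = H$); on the other coset, a suitable choice of the remaining global sign of $\phi'$ makes the contribution nonnegative, and the mass concentration $Q_f(H) \geq Q_f(L_0) > \tau^8 - 2\eps$ bounds it from below in terms of $\tau^8$. Combining with the constraint $|\langle f, \phi'\rangle|^2 \leq \gamma^{-1}|\langle f, \phi\rangle|^2$ coming from local maximality, and using the near-saturation of $|\langle f, \phi\rangle|^2 \leq P_f(L)^{1/2}$, one deduces that this chain of inequalities forces $\eps \geq (\gamma - 1/2)^2 \tau^8/8$, yielding the desired contradiction. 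The main obstacle is this final quantitative balancing: fixing the global sign of $\phi'$ so that the new coset contributes constructively rather than destructively, and tracking how the $(\gamma - 1/2)^2$ improvement factor arises from the overlap constraint $|\langle \phi, \phi'\rangle|^2 = 1/2$ together with the $\gamma$-approximate local maximality inequality.
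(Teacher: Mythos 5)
Your setup is sound and mirrors the paper's argument in reverse: non-robust generation yields an $\eps$-approximate spectral set $F$ for which $T:=\mathcal L(\phi)\cap\vspan(F)$ is a proper subspace of $\mathcal L(\phi)$, and the point is that having almost all of the $Q_f$-mass of $\mathcal L(\phi)$ sit inside a proper subspace must contradict $\gamma$-approximate local maximality. The paper isolates exactly this as Lemma~\ref{lem:Qfsmooth}: for \emph{every} proper subspace $T\subsetneq\mathcal L(\phi)$ one has $Q_f(\mathcal L(\phi)\setminus T)\geq\tfrac14(\gamma-\tfrac12)^2\tau^8$. That is proved by a symplectic change of coordinates reducing to $\phi=2^{n/2}\one_{\{0\}}$ and then an explicit computation (Lemma~\ref{lem:terrible}) in which local maximality against the four neighbors $2^{(n-1)/2}(\one_{\{0\}}+i^a\one_{\{u\}})$ forces the complex ratio $y=f(u)^2/f(0)^2$ away from $\{\pm1,\pm i\}$, and the terms $c\in\{0,u\}$ extracted from the convolution are bounded below in terms of how far $y$ is from that forbidden region. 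Your proposal replaces this entire quantitative core with an assertion.

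The specific gap is in your "crucial estimate." You claim $|\langle f,\phi'\rangle|^2$ can be bounded below by splitting the sum over $L''=H+\vspan\{(c^*,d^*)\}$ into cosets, with "a suitable choice of the remaining global sign of $\phi'$" making the off-$H$ contribution nonnegative and with $Q_f(H)$ large supplying the lower bound. Neither mechanism works as stated. A global phase $\alpha$ of $\phi'$ affects neither $|\langle f,\phi'\rangle|$ nor any $\widehat{\Delta_a\phi'}(b)$ (it cancels in $\Delta_a\phi'=\phi'(\cdot+a)\overline{\phi'}$), so it cannot control the sign of a coset contribution; the actual freedom is over the $2^n$ stabilizer states sharing the Lagrangian $L''$, and exploiting that requires averaging over all of them, after which you lose the constraint $|\langle\phi,\phi'\rangle|^2=1/2$. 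More fundamentally, $Q_f(H)$ (or even $P_f(H)$) being large does not lower-bound $\Exp_{(a,b)\in H}\widehat{\Delta_af}(b)\overline{\widehat{\Delta_a\phi'}(b)}$, because of phase cancellation: the inequality $|\langle f,\psi\rangle|^2\leq P_f(\mathcal L(\psi))^{1/2}$ from Lemma~\ref{lem:Lcorr} only goes one way. So the sentence "one deduces that this chain of inequalities forces $\eps\geq(\gamma-1/2)^2\tau^8/8$" is precisely the content that needs proving, and the route you sketch for it does not go through; you would need to prove an analogue of Lemma~\ref{lem:Qfsmooth}, or find a genuinely different way to convert $Q_f$-concentration on a proper subspace of $\mathcal L(\phi)$ into a correlation-improving neighbor.
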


The proof of Lemma~\ref{lem:nonrobust} uses that in the non-robust setting, there is an approximate spectral set~$F$ such that $\Lcal(\phi) \cap\vspan(F)$ is a strict subspace of~$\Lcal(\phi)$ and the fact that the convoluted distribution~$Q_f$ is smoothly distributed over the cosets of strict subspaces of~$\mathcal L(\phi)$ if~$\phi$ is an approximate local maximizer of correlation for~$f$.
(This is where using~$P_f$ would not work.)
This is proved in the lemmas below.

\begin{lemma} \label{lem:terrible}
    Let $f:\F_2^n\to \C$ be a function and suppose that $\phi = 2^{n/2}\one_{\{0\}}$ is a $\gamma$-approximate local maximizer for~$f$.
    Let $V\subseteq \F_2^n$ be a subspace with codimension~1.
    Then,
    \begin{equation*}
        Q_f(\{0^n\}\times (\F_2^n\setminus V)) \geq \tfrac{1}{4} \big(\gamma - \tfrac{1}{2}\big)^2|\langle f,\phi\rangle|^8.
    \end{equation*}
\end{lemma}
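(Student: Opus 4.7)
The plan is to bound the $Q_f$-mass on the coset $\{0\}\times(\F_2^n\setminus V)$ by expanding $Q_f = P_f \ast P_f$ and extracting contributions from just two values of the inner coordinate. Writing $V = c^\perp$ for some $c\ne 0$ and unfolding the convolution yields the exact identity
\begin{equation*}
    Q_f\big(\{0\}\times(\F_2^n\setminus V)\big) = 2\sum_{a'\in\F_2^n} P_f(a',V)\,P_f(a',\F_2^n\setminus V),
\end{equation*}
so every $a'$ gives a non-negative contribution. I will show that the two contributions $a'=0$ and $a'=c$ together dominate the target bound.

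For the $a'=0$ term, I would apply the Parseval identity for the projected function $g_V(y) = (|f(y)|^2 + |f(y+c)|^2)/2$ to obtain $\sum_{b\in V}|\widehat{|f|^2}(b)|^2 = \E_y g_V(y)^2$ and similarly for the complement. Setting $N = \|f\|_4^4$ and $M = \E_y|f(y)|^2|f(y+c)|^2$ this gives
\begin{equation*}
    2P_f(0,V)\,P_f(0,\F_2^n\setminus V) = \frac{N^2 - M^2}{2\cdot 2^{2n}\|f\|_2^8},\qquad N - M = \tfrac12\E_y\bigl(|f(y)|^2 - |f(y+c)|^2\bigr)^2.
\end{equation*}
For the $a'=c$ term, the universal identity $\Delta_cf(y+c) = \overline{\Delta_cf(y)}$ allows an analogous computation to yield $P_f(c,V) = \E_y(\mathrm{Re}\,\Delta_cf(y))^2/(2^n\|f\|_2^4)$ and $P_f(c,\F_2^n\setminus V) = \E_y(\mathrm{Im}\,\Delta_cf(y))^2/(2^n\|f\|_2^4)$. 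Keeping only the $y\in\{0,c\}$ terms and writing $T = f(0)$ (WLOG real) and $f(c) = u+iv$ gives $2P_f(c,V)P_f(c,\F_2^n\setminus V) \ge 8u^2v^2T^4/(2^{4n}\|f\|_2^8)$.

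To close the argument I would invoke the approximate local-maximality of $\phi = 2^{n/2}\one_{\{0\}}$ applied to the four neighbour stabilizer states $\phi'_\beta = \tfrac{2^{n/2}}{\sqrt 2}(\one_{\{0\}}+\beta\one_{\{c\}})$ with $\beta\in\{1,i,-1,-i\}$; these all satisfy $|\langle\phi,\phi'_\beta\rangle|^2 = 1/2$, so $|T+\bar\beta f(c)|^2 \le 2T^2/\gamma$. Expanding each of the four inequalities gives $r^2 + 2T|u| \le \tfrac{2-\gamma}{\gamma}T^2$ and its analogue for $|v|$, with $r = |f(c)|$. These expose a dichotomy: if $r$ is noticeably smaller than $T$ then $N - M$ is comparable to $(T^2 - r^2)^2/2^n$ via the $y\in\{0,c\}$ terms, and the $a'=0$ contribution suffices; whereas if $r$ is close to $T$ then the identity $u^2+v^2 = r^2$ combined with the upper bounds on $|u|$ and $|v|$ forces both quantities to be simultaneously nonnegligible, yielding $u^2 v^2$ of order $(\gamma-\tfrac12)^2 T^4$, and the $a'=c$ contribution takes over.

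The main obstacle will be the clean interpolation between these two regimes. I expect the cleanest route is to lower-bound the sum of the two contributions directly as a function of $(r,\theta)$ where $f(c) = re^{i\theta}$, over the local-maximality polytope cut out by the four constraints, and verify that the minimum equals (up to absolute constants) $(\gamma-\tfrac12)^2 T^8/2^{4n} = \tfrac14(\gamma-\tfrac12)^2|\langle f,\phi\rangle|^8$. The quadratic dependence on $\gamma - \tfrac12$ arises naturally: the four local-maximality constraints become vacuous at $\gamma = \tfrac12$, and each of the two real constraints on $u$ and $v$ bites at strength $\Theta(\gamma-\tfrac12)$, so the product that controls $u^2 v^2$ contributes a factor $(\gamma-\tfrac12)^2$.
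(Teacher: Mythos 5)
Your proposal is correct and follows essentially the same route as the paper's proof: the same convolution identity restricted to the two inner coordinates $a'\in\{0,c\}$, the same truncation of the Fourier sums to the points $0$ and $c$, and the same use of the four neighbour states $\tfrac{2^{n/2}}{\sqrt2}(\one_{\{0\}}+\beta\one_{\{c\}})$ to constrain $f(c)/f(0)$; your real-coordinate dichotomy in $(u,v)$ is just the paper's annulus/small-circles picture for $y=f(c)^2/f(0)^2$ in different variables. The only caveat is that you leave the final two-regime verification as a plan rather than carrying it out, but the split at $|1-r^2/T^2|\gtrless\gamma-\tfrac12$ does close with the stated constant.
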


\begin{proof}
Let~$u\in \F_2^n\setminus\{0\}$ be such that~$V = \{u\}^\perp$.
  We begin by showing that
  \begin{equation}\label{eq:gammaballs}
      y:= \frac{f(u)^2}{f(0)^2}\not\in \{-1,1\} + \big(\gamma - \tfrac{1}{2}\big)\D.
  \end{equation}
  Indeed, since~$\phi$ is a $\gamma$-approximate local maximizer of correlation for~$f$, it follows that
  \begin{equation*}
      2^{-n}|f(0)|^2 \geq \gamma\max_{a\in \Z_4}\big|\langle f, 2^{(n-1)/2}(\one_{\{0\}} + i^a\one_{\{u\}})\rangle\big|^2.
  \end{equation*}
  In turn, this implies that
  \begin{equation*}
      \frac{f(u)}{f(0)} \not\in \{1,i,-1,-i\} + \big(\gamma - \tfrac{1}{2}\big)\D,
  \end{equation*}
  which gives~\eqref{eq:gammaballs}.

  The quantity we wish to bound may be given by
  \begin{align*}
      \sum_{y\not\in V}Q_f(0,y) 
      &= \sum_{y\not \in V}\sum_{c,d\in \F_2^n}P_f(c,d)P_f(c,y+d)\\
      &=\frac{1}{4^n}\sum_{c,d\in \F_2^n}|\widehat{\Delta_cf}(d)|^2\sum_{y\not \in V}\widehat{\Delta_cf}(y+d)|^2\\
      &=2\sum_{c\in \F_2^n}\Big(\sum_{y\not\in V}\frac{|\widehat{\Delta_cf}(y)|^2}{2^n}\Big)
      \Big(\sum_{z\in V}\frac{|\widehat{\Delta_cf}(z)|^2}{2^n}\Big).
  \end{align*}
  Keeping only the terms $c\in \{0,u\}$, we get that this is bounded from below by
  \begin{equation}\label{eq:foursums}
      2\Big(\sum_{y\not\in V}\frac{|\widehat{\Delta_0f}(y)|^2}{2^n}\Big)
      \Big(\sum_{z\in V}\frac{|\widehat{\Delta_0f}(z)|^2}{2^n}\Big)
      +
      2\Big(\sum_{y\not\in V}\frac{|\widehat{\Delta_uf}(y)|^2}{2^n}\Big)
      \Big(\sum_{z\in V}\frac{|\widehat{\Delta_uf}(z)|^2}{2^n}\Big).
  \end{equation}
  Expanding the definition of the Fourier transforms of the multiplicative derivatives gives that the above four sums  are bounded as follows
  \begin{align*}
     \sum_{y\not\in V}\frac{|\widehat{\Delta_0f}(y)|^2}{2^n}
     &=
     \frac{1}{2^{n+2}}\Exp_{x\in \F_2^n}\big(|f(x)|^2 - |f(x+u)|^2\big)^2\\
     &\geq \frac{1}{2^{2n+1}}\big(|f(0)|^2 - |f(u)|^2\big)^2.\\
     \sum_{z\in V}\frac{|\widehat{\Delta_0f}(z)|^2}{2^n}
     &=
     \frac{1}{2^{n+2}}\Exp_{x\in \F_2^n}\big(|f(x)|^2 + |f(x+u)|^2\big)^2\\
     &\geq \frac{1}{2^{2n+1}}\big(|f(0)|^2 + |f(u)|^2\big)^2.
  \end{align*}

    \begin{align*}
     \sum_{y\not\in V}\frac{|\widehat{\Delta_uf}(y)|^2}{2^n}
     &=
     \frac{1}{2^{n+1}}\Exp_{x\in \F_2^n}\big(|f(x)|^2|f(x+u)|^2 - \overline{f(x)}^2f(x+u)^2\big)\\
     &\geq \frac{1}{2^{2n}}\big(|f(0)|^2|f(u)|^2 - \Re\big(\overline{f(0)}^2f(u)^2\big)\big).\\
     \sum_{z\in V}\frac{|\widehat{\Delta_uf}(z)|^2}{2^n}
     &=
     \frac{1}{2^{n+1}}\Exp_{x\in \F_2^n}\big(|f(x)|^2|f(x+u)|^2 + \overline{f(x)}^2f(x+u)^2\big)\\
     &\geq \frac{1}{2^{2n}}\big(|f(0)|^2|f(u)|^2 + \Re\big(\overline{f(0)}^2f(u)^2\big)\big).
  \end{align*}

Combining these bounds gives that~\eqref{eq:foursums} is bounded from below by
\begin{equation}\label{eq:circles}
    \frac{1}{2^{4n+1}}|f(0)|^8(1-|y|)^2(1+|y|)^2
    +
    \frac{1}{2^{4n-1}}|f(0)|^8\big(|y| - \Re(y)\big)^2\big(|y| + \Re(y)\big)^2.
\end{equation}
Note that~$|f(0)|^8/2^{4n} = |\langle f,\phi\rangle|^8$.
We bound~\eqref{eq:circles} from below by using that the forbidden region of~$y$ in the complex plane given by~\eqref{eq:gammaballs} contains two segments of a narrow annulus around the complex unit circle (see Figure~\ref{fig:circles}).

\begin{center}
    \begin{figure}[ht]
        \includegraphics[width=6cm]{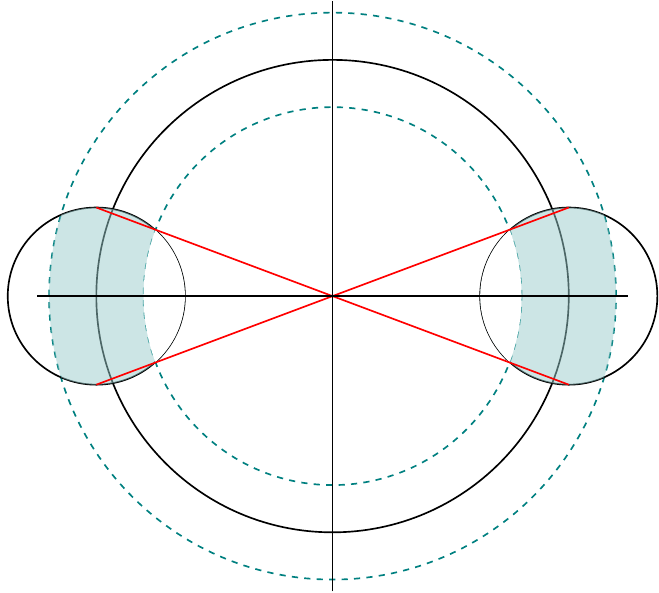}
    \caption{Forbidden regions for~$y$.}\label{fig:circles}
    \end{figure}
\end{center}

Choose the angles between the straight lines and the horizontal axis to be such that the distance from the origin to the small circles equals $r = \sqrt{1 - (\gamma-1/2)^2}$.

If~$y$ lies outside of the annulus, then the first term of~\eqref{eq:circles} is at least $\frac{1}{4}(\gamma - 1/2)^2|\langle f,\phi\rangle|^8$.
If~$y$ lies inside the annulus but outside of the small circles, then elementary trigonometry shows that the second term of~\eqref{eq:circles} is at least $\frac{1}{4}(\gamma - 1/2)^2|\langle f,\phi\rangle|^8$.

\end{proof}

\begin{lemma}\label{lem:Qfsmooth}
    Let $f: \F_2^n\to \C$.
    For $\gamma \in (\tfrac{1}{2}, 1)$, let~$\phi$ be a $\gamma$-approximate local maximizer for~$f$ such that $|\langle f,\phi\rangle| \geq\tau$.
    Then, for every proper subspace $T\subsetneq \mathcal L(\phi)$, we have
    \begin{equation*}
        Q_f\big(\mathcal L(\phi)\setminus T\big) \geq \tfrac{1}{4} \big(\gamma - \tfrac{1}{2}\big)^2 \tau^8.
    \end{equation*}
\end{lemma}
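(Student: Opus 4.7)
The plan is to reduce Lemma~\ref{lem:Qfsmooth} to the special case established in Lemma~\ref{lem:terrible} via two successive simplifications. First, since any proper subspace $T\subsetneq \mathcal L(\phi)$ is contained in some codimension-$1$ subspace $T'$ of $\mathcal L(\phi)$, the inclusion $\mathcal L(\phi)\setminus T'\subseteq \mathcal L(\phi)\setminus T$ gives $Q_f(\mathcal L(\phi)\setminus T)\geq Q_f(\mathcal L(\phi)\setminus T')$, so it suffices to prove the bound under the additional assumption that $T$ has codimension exactly $1$ inside $\mathcal L(\phi)$.

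Second, I would use a Clifford transformation to reduce to the standard situation $\phi_0 = 2^{n/2}\one_{\{0\}}$, whose associated Lagrangian is $\mathcal L(\phi_0) = \{0\}\times \F_2^n$. For any stabilizer state $\phi$ there exists (up to a global phase) a unitary $U$ with $U\phi = \phi_0$, and it is characterized by a symplectic transformation $S:\F_2^{2n}\to \F_2^{2n}$ satisfying $UW_{a,b}U^\dagger = \pm W_{S(a,b)}$ for every $(a,b)\in\F_2^{2n}$. Using the identity $\widehat{\Delta_af}(b) = i^{|a\circ b|}\langle f, W_{a,b}f\rangle$ together with unitarity of $U$, one verifies that $P_{Uf}(a,b) = P_f(S^{-1}(a,b))$, and hence the same equivariance holds for the convolution $Q_f$. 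The Lagrangian transforms as $\mathcal L(U\phi) = S(\mathcal L(\phi))$, so $S$ sends $T$ to a codimension-$1$ subspace of $\{0\}\times \F_2^n$, which has the form $\{0\}\times V$ for some hyperplane $V\subseteq \F_2^n$. Inner products, and therefore the approximate-local-maximizer property, are preserved by $U$, so $\phi_0$ is a $\gamma$-approximate local maximizer for $f':=Uf$ with $|\langle f',\phi_0\rangle|=|\langle f,\phi\rangle|\geq \tau$.

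Applying Lemma~\ref{lem:terrible} to $f'$ and the hyperplane $V$ then gives
\beqn
Q_{f'}\big(\{0\}\times (\F_2^n\setminus V)\big) \geq \tfrac{1}{4}\big(\gamma-\tfrac{1}{2}\big)^2 |\langle f',\phi_0\rangle|^8 \geq \tfrac{1}{4}\big(\gamma-\tfrac{1}{2}\big)^2 \tau^8,
\eeqn
and translating this back via the equivariance $Q_{Uf}(\cdot) = Q_f(S^{-1}(\cdot))$, using $S^{-1}(\{0\}\times \F_2^n) = \mathcal L(\phi)$ and $S^{-1}(\{0\}\times V) = T$, yields the required bound $Q_f(\mathcal L(\phi)\setminus T)\geq \tfrac{1}{4}(\gamma-\tfrac{1}{2})^2 \tau^8$.

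The main obstacle I anticipate is verifying that the action of $U$ on $P_f$, $Q_f$, and $\mathcal L(\phi)$ really is equivariant under the associated symplectic transformation $S$, as well as the existence of the Clifford $U$ with $U\phi=\phi_0$ in the first place. These are standard facts from the theory of Clifford operations that follow from the commutation relation $UW_{a,b}U^\dagger = \pm W_{S(a,b)}$ and unitarity of $U$; the paper may prefer to spell them out directly without invoking quantum formalism, perhaps by decomposing $U$ into elementary Clifford generators (Weyl shifts, Hadamard, phase, and controlled-NOT analogues over $\F_2$) and checking each case against the explicit form~\eqref{eq:stabstate} of a stabilizer state and~\eqref{eq:Lphi} of its Lagrangian. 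Once these equivariances are in hand, the reduction itself is essentially formal.
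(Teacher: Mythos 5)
Your proposal is correct and follows essentially the same route as the paper: reduce to the case of a codimension-one subspace, conjugate by a Clifford unitary $U$ (with associated symplectic map $S$) to move $\phi$ to $2^{n/2}\one_{\{0\}}$ and $T$ to a set of the form $\{0^n\}\times V$, use the equivariance of $Q_f$ under $S$, and invoke Lemma~\ref{lem:terrible}. The paper simply cites the quantum-information literature for the existence of $U$ and $S$ with the stated properties, which is the standard fact you correctly identify as the one point needing verification.
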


\begin{proof}
Using the properties of the Weyl operators and~\cite[Lemma~5.1]{Grewal2023}, it follows that there exists a unitary operator~$U\in\C^{2^n\times 2^n}$ and an invertible linear map~$S:\F_2^{2n}\to\F_2^{2n}$ such that $U\phi = 2^{n/2}\one_{\{0\}}$, $S(\mathcal L(\phi)) = \{0^n\}\times \F_2^n$, $S(T) = \{0^{n+1}\}\times\F_2^{n-1}$ and $2^{n/2}\one_{\{0\}}$ is a $\gamma$-approximate local maximizer of correlation for~$Uf$.
Together, these properties imply that
\begin{equation*}
    Q_f\big(\mathcal L(\phi)\setminus T\big)
    =
    Q_{Uf}\big(\{0^n\}\times (\F_2^n\setminus \{0\} \times \F_2^{n-1}) \big).
\end{equation*}
The result now follows from Lemma~\ref{lem:terrible}.
\end{proof}

\begin{proof}[ of Lemma~\ref{lem:nonrobust}]
If~$f$ does not $\eps$-robustly generate~$\mathcal L(\phi)$, then there is an $\eps$-approximate spectral set~$F$ such that $\mathcal L(\phi)\cap \vspan(F)$ is a proper subspace of~$\mathcal L(\phi)$.
It then follows from Lemma~\ref{lem:Qfsmooth} that
\begin{align*}
    Q_f\big(\mathcal L(\phi)\setminus \spec(f)\big) 
    &\geq Q_f(\mathcal L(\phi)\setminus \vspan(F)) - Q_f\big(\spec(f)\setminus \vspan(F)\big)\\
    &\geq \tfrac{1}{4} \big(\gamma - \tfrac{1}{2}\big)^2\tau^8 - \eps.
\end{align*}  
This proves the lemma.
\end{proof}

\subsubsection{Sampling the desired Lagrangian}

Putting the above ideas together gives the following result.

\begin{algorithm}[Lagrangian sampling]
For a 1-bounded function $f:\F_2^n\to\C$, $\gamma>1/2$ and $\tau>0$, define the algorithm $\textsc{LagrangianSample}(f,\gamma,\tau)$ as follows:
Let $t = \lceil\log_{1.08}(1/\tau)\rceil$.
Let $s$ be a uniformly random element from $\{0,1,\dots,t\}$.
Let~$f_0 = f$.
For each $i\in [t]$, generate 1-bounded functions $f_i:\F_2^n\to \C$ and vectors $(a_i,b_i)\in \F_2^n\times\F_2^n$ as follows:
\begin{itemize}
\item Let $(a_i,b_i)$ be a random sample from~$Q_{f_{i-1}}$.
\item For a uniformly distributed random sign~$\sigma_i$, let $f_i = \Pi_{a_i,b_i}^{\sigma_i}f_{i-1}$.
\end{itemize}
Return the basis obtained by the algorithm from Lemma~\ref{lem:rubustgen} on input~$f_s$ with parameters $\eps = (\gamma  -\frac{1}{2})^2\tau^8/8$ and $\delta = 1/2$.
\end{algorithm}

\begin{theorem}[Lagrangian sampling]\label{thm:lagrangian}
Let~$f:\F_2^n\to\C$ be a 1-bounded function.
Let~$\phi$ be a stabilizer state that is a $\gamma$-approximate local maximizer for~$f$ and satisfies $|\langle f,\phi\rangle|\geq \tau$.
Then, the algorithm $\textsc{LangrangeSample}(f,\gamma,\tau)$ returns a basis for a subspace $L\subseteq \F_2^n\times \F_2^n$ such that, with probability at least $((\gamma - 1/2)\tau)^{O(\log(1/\tau))}$, we have~$L=\mathcal L(\phi)$.
\end{theorem}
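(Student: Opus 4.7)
The plan is to show that with probability at least $((\gamma-1/2)\tau)^{O(\log(1/\tau))}$ the output $f_s$ of the iterative boosting procedure $\eps$-robustly generates $\mathcal L(\phi)$ for $\eps = (\gamma-1/2)^2\tau^8/8$; the conclusion then follows from Lemma~\ref{lem:rubustgen}, which with probability at least $1/2$ returns the correct basis when given such an input.

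I would call the boost at step $i$ \emph{good} if $(a_i,b_i)\in\mathcal L(\phi)\setminus\Spec(f_{i-1})$ and $\sigma_i$ is the unique sign for which $\phi\in V_{a_i,b_i}^{\sigma_i}$; the existence of such a sign for every element of $\mathcal L(\phi)$ was noted just after the definition of $V_{a,b}^\sigma$. Conditional on the first $i$ boosts being good, applying Lemma~\ref{lem:boost} inductively shows that (i) $\phi$ remains a $\gamma$-approximate local maximizer of $f_i$, (ii) $\langle f_i,\phi\rangle = \langle f,\phi\rangle$, so in particular $|\langle f_i,\phi\rangle|\geq \tau$, and (iii) the ratio $|\langle f_i,\phi\rangle|^2/\|f_i\|_2^2$ grows by a factor of at least $1.08$ per good boost. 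Since this ratio is capped at $\|\phi\|_2^2 = 1$, at most $O(\log(1/\tau))$ good boosts can occur without $f_i$ robustly generating $\mathcal L(\phi)$; we take the parameter $t$ of the algorithm to be of this order.

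Now let $B$ be the event that for every $i\in\{1,\dots,t\}$, either $f_{i-1}$ already $\eps$-robustly generates $\mathcal L(\phi)$ or the boost at step $i$ is good. If $f_{i-1}$ already generates, the condition is trivially met; otherwise the invariants above let us apply Lemma~\ref{lem:nonrobust} to $f_{i-1}$, yielding $Q_{f_{i-1}}(\mathcal L(\phi)\setminus\Spec(f_{i-1}))\geq\eps$, and combining with the uniform sign $\sigma_i$ gives conditional probability at least $\eps/2$ of a good boost. Chaining these step-wise bounds produces $\Pr[B]\geq(\eps/2)^t$. On $B$, if no $f_j$ with $j\leq t$ robustly generated $\mathcal L(\phi)$, then all $t$ boosts would have been good, which by (iii) forces the ratio at $f_t$ above $1$, a contradiction. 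Hence on $B$ the random set $I = \{i\leq t : f_i \text{ robustly generates } \mathcal L(\phi)\}$ is nonempty.

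Finally, since $s$ is drawn uniformly from $\{0,\dots,t\}$ independently of the sampling process, $\Pr[s\in I]\geq \mathbb{E}|I|/(t+1)\geq \Pr[I\neq\emptyset]/(t+1)\geq (\eps/2)^t/(t+1)$. On $\{s\in I\}$, Lemma~\ref{lem:rubustgen} returns a basis for $\mathcal L(\phi)$ with probability at least $1/2$; substituting $\eps = \Theta((\gamma-1/2)^2\tau^8)$ and $t = O(\log(1/\tau))$ gives the claimed bound. The main obstacle I expect is the invariant bookkeeping: one must verify that the preconditions of Lemma~\ref{lem:nonrobust}---especially that $\phi$ remains a $\gamma$-approximate local maximizer of $f_{i-1}$---propagate through arbitrarily many projections, and that the disjunctive event $B$ is set up so that the case where $f_{i-1}$ already generates $\mathcal L(\phi)$ does not force us to artificially continue producing good boosts, which would otherwise spoil the per-step probability bound.
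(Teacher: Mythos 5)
Your proposal is correct and follows essentially the same route as the paper: the paper's success event $\mathrm{succ}_i$ plays the role of your event $B$, and the chaining of $\eps/2$ conditional bounds (Lemma~\ref{lem:boost} plus Lemma~\ref{lem:nonrobust} plus the coin flip for $\sigma_i$), the cap on the normalized correlation forcing robust generation within $t=O(\log(1/\tau))$ steps, the uniform choice of $s$ costing a factor $\Omega(1/t)$, and the final appeal to Lemma~\ref{lem:rubustgen} all match. The one wrinkle you flag yourself is real but easily repaired: the per-step disjunct should be ``some $f_j$ with $j<i$ already $\eps$-robustly generates'' rather than ``$f_{i-1}$ robustly generates'' (this is exactly how the paper defines $\mathrm{succ}_i$), so that once robust generation occurs all later steps are satisfied for free and the conditional $\eps/2$ bound is only invoked when all prior boosts were good, i.e.\ when the hypotheses of Lemma~\ref{lem:nonrobust} actually hold for $f_{i-1}$.
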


\begin{proof}
Given functions $g,g':\F_2^n\to \C$, define the following conditions:
\begin{itemize}
    \item Base condition $\mathrm{BC}(g)$: $\|g\|_\infty\leq 1$, $\phi$ is a $\gamma$-approximate local maximizer of correlation for~$g$ and $|\langle g,\phi\rangle|^2\geq \tau$.
    \item Robust generation $\mathrm{RG}(g)$: $\mathrm{BC}(g)$ holds and $g$ $\eps$-robustly generates~$\mathcal L(\phi)$.
    \item Energy increment $\mathrm{EI}(g,g')$: $\frac{|\langle g',\phi\rangle|^2}{\|g'\|_2^2} \geq 1.08\frac{|\langle g,\phi\rangle|^2}{\|g\|_2^2}$ and $\mathrm{BC}(g), \mathrm{BC}(g')$ hold.
\end{itemize}

For each $i\in\{0,1,\dots,t-1\}$ consider the success event 
\begin{equation*}
    \mathrm{succ}_i = \Big(\bigwedge_{j=0}^{i}\mathrm{EI}(f_j,f_{j+1})\Big)
    \vee
    \bigvee_{j=0}^i\mathrm{RG}(f_j).
\end{equation*}
Because the energy is capped by~1, we have that $\mathrm{succ}_t$ is union of events that one of the~$f_i$ $\eps$-robustly generates~$\mathcal L(\phi)$.

By Lemma~\ref{lem:boost} and Lemma~\ref{lem:nonrobust}, we have that 
\begin{equation}\label{eq:succ}
    \Pr\big[\mathrm{succ}_{i+1}\mid\mathrm{succ}_i\big]
    \geq\Pr\big[\mathrm{EI}(f_{i+1},f_{i+2})\vee \mathrm{RG}(f_{i+1})\mid \mathrm{BC}(f_{i+1})\big]\geq \frac{\eps}{2}.
\end{equation}
It follows from~\eqref{eq:succ} that
\begin{align*}
    \Pr\big[\mathrm{succ}_t\big]
    &=
    \Pr[\mathrm{succ}_0]\prod_{i=0}^{t-1}\Pr\big[\mathrm{succ}_{i+1}\mid\mathrm{succ}_i\big] \geq \Big(\frac{\eps}{2}\Big)^{t+1}.
\end{align*}

Conditioned on the event~$\mathrm{succ}_t$, we have that with probability~$\Omega(1/t)$ the function~$f_s$ $\eps$-robustly generates~$\mathcal L(\phi)$.
In that event, the algorithm returns~$\mathcal L(\phi)$ with probability at least~$1/2$.
\end{proof}

\subsection{From a good Lagrangian to a good stabilizer state} \label{sec:Stabilizer}

Suppose we know a basis for~$\Lcal(\phi)$, where~$\phi$ is a fixed (but unknown) $\gamma$-approximate local maximizer of correlation for~$f$ satisfying $|\langle f, \phi\rangle| \geq \tau$.
We now wish to learn~$\phi$ with good probability.

We show the following:

\begin{lemma}[Stabilizer sampling]\label{lem:linear}
    Let $f: \F_2^n \to \C$ be a $1$-bounded function and let~$\phi$ be a stabilizer state with $|\langle f, \phi\rangle| \geq \tau$.
    There is a randomized algorithm which, when given a basis $\{v_1, \dots, v_n\}$ for~$\Lcal(\phi)$, returns a random stabilizer state~$\psi$ with
    $$\Pr[\psi=\phi] \geq \tau^6/8.$$
    This algorithm makes $n \log n\, \poly(1/\tau)$ queries to~$f$ and runs in~$O(n^3)$ time.
\end{lemma}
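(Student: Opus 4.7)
The plan is to use the basis of $\Lcal(\phi)$ to pin down the support coset $u + V$ and the quadratic component of $\phi$, leaving only a linear character to be learned via Goldreich--Levin. Recall from~\eqref{eq:stabstate} and~\eqref{eq:Lphi} that $\phi$ has the form $\alpha \cdot 2^{(n - \dim V)/2}\, \one_{u+V}(x)\, \psi_0(x - u)\, (-1)^{b \cdot (x - u)}$ for some subspace $V \leq \F_2^n$, vector $u \in \F_2^n$, character $b \in V$, global phase $\alpha \in S^1$, and non-classical quadratic phase $\psi_0: V \to \C$. From the given basis of $\Lcal(\phi)$ I immediately read off $V$ (the first-coordinate projection) and, modulo $V^\perp$, the linear map $a \mapsto (A^T + A + \Diag(c))\, a$ on $V$. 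A short linear-algebraic step extracts a specific $\psi_0$ consistent with this data whose Lagrangian (computed via~\eqref{eq:Lphi}) equals $\Lcal(\phi)$; the remaining ambiguity in $\psi_0$ is exactly a linear character on $V$, which the last step below absorbs.

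To obtain a point in the support coset, I sample $u' \in \F_2^n$ uniformly and accept it with probability $|f(u')|^2$, using a single query. Cauchy--Schwarz applied to $\langle f, \phi\rangle$ (using $\|\phi\|_2 = 1$ and $\supp(\phi) \subseteq u + V$) gives
$$\Pr\bigl[u'\text{ accepted and } u' \in u + V\bigr] \;=\; \Exp_{u' \in \F_2^n} |f(u')|^2 \one_{u + V}(u') \;\geq\; |\langle f, \phi\rangle|^2 \;\geq\; \tau^2.$$
Conditioning on this event and using the identity $\psi_0(v_0 + v) = \psi_0(v_0)\psi_0(v)(-1)^{w(v_0)\cdot v}$ for non-classical quadratic phases (with $v_0 := u' - u \in V$ and some $w(v_0) \in V$), the function $v \mapsto \phi(u' + v)$ on $V$ takes the form $\alpha' \cdot 2^{(n - \dim V)/2}\, \psi_0(v)\, (-1)^{b' \cdot v}$ for some $\alpha' \in S^1$ and $b' \in V$ depending on $u'$. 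Defining the $1$-bounded function $h : V \to \C$ by $h(v) = f(u' + v)\, \overline{\psi_0(v)}$, a direct computation yields $|\widehat{h}_V(b')| = 2^{(n - \dim V)/2} |\langle f, \phi\rangle| \geq \tau$.

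Applying Theorem~\ref{thm:GL} to $h$ on $V \cong \F_2^{\dim V}$ with threshold $\tau/2$ and failure probability $1/4$ produces a list $L_{GL} \subseteq V$ of size at most $4/\tau^2$ (by Parseval) that contains $b'$ with probability at least $3/4$. Picking $b'' \in L_{GL}$ uniformly at random and outputting
$$\psi(x) \;=\; 2^{(n - \dim V)/2}\, \one_{u' + V}(x)\, \psi_0(x - u')\, (-1)^{b'' \cdot (x - u')}$$
yields $\psi = \phi$ (up to global phase, which the paper's conventions ignore) with probability at least $\tau^2 \cdot \tfrac{3}{4} \cdot \tfrac{\tau^2}{4} \geq \tau^6/8$, using $\tau \leq 1$. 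The Goldreich--Levin call dominates the cost at $n\log n\, \poly(1/\tau)$ queries to $f$; the auxiliary linear algebra and output construction run in $O(n^3)$ time.

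The main obstacle is the first step: verifying that a suitable $\psi_0$ can be reconstructed from $\Lcal(\phi)$ canonically and efficiently. The key observation is that the diagonal of the extracted bilinear form on $V$ recovers $c|_V$ while the off-diagonal part recovers $(A + A^T)|_V$, from which an upper-triangular $A|_V$ is obtained directly; checking that this always yields a non-classical quadratic phase whose Lagrangian matches $\Lcal(\phi)$ is a short calculation via~\eqref{eq:Lphi}, and the remaining linear-character freedom in $\psi_0$ is precisely what Goldreich--Levin then pins down.
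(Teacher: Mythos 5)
Your proof follows essentially the same route as the paper's: recover $V$ and the (non-classical) quadratic component of $\phi$ from the given Lagrangian basis, locate the support coset with probability at least $\tau^2$ (you do this by rejection sampling against $|f(u')|^2$ plus Cauchy--Schwarz, the paper by guessing uniformly among the at most $\tau^{-2}$ cosets), and then run Goldreich--Levin to learn the remaining linear character. The argument is correct up to one bookkeeping slip: invoking Theorem~\ref{thm:GL} with threshold $\tau/2$ only guarantees $|\widehat{h}_V(b)|\geq\tau/4$ for list members, so Parseval bounds the list size by $16/\tau^2$ rather than $4/\tau^2$ (which would make your final constant fall short of $\tau^6/8$ for $\tau$ close to $1$); since you showed $|\widehat{h}_V(b')|\geq\tau$, simply call the theorem with threshold $\tau$, after which the stated bound holds comfortably.
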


\begin{proof}
Since~$\Lcal(\phi)$ is a Lagrangian subspace, we can write
\begin{equation} \label{eq:Lagrangian}
    \Lcal(\phi) = \big\{(h,\, Mh+w):\: h\in V,\, w\in V^{\perp}\big\}
\end{equation}
for some subspace $V\leq \F_2^n$ and symmetric matrix $M\in \F_2^{n\times n}$.
Moreover, we know that
\begin{equation} \label{eq:phi}
    \phi(x) = 2^{(n-\dim(V))/2} \one_{u+V}(x) (-1)^{x^{\mathsf T} Qx + c\cdot x} i^{|d\circ x|},
\end{equation}
where~$Q$ is the upper-diagonal part of matrix~$M$, $d$ is the diagonal of~$M$, and $c, u\in \F_2^n$ are vectors.

From the given basis $\{v_1, \dots, v_n\}$ of~$\Lcal(\phi)$ we can obtain, in $O(n^3)$ time, a basis for the subspace~$V$ and a matrix~$M$ such that identity~\eqref{eq:Lagrangian} holds.
In order to completely determine~$\phi$ as in equation~\eqref{eq:phi}, it only remains to find the correct coset~$u+V$ on which it is supported and its linear part~$(-1)^{c\cdot x}$.

Since~$f$ is bounded, the codimension of~$V$ is also bounded;
indeed,
$$\tau \leq |\langle f, \phi\rangle| \leq 2^{(n-\dim(V))/2} \E_{x\in \F_2^n} |f(x)| \one_{u+V}(x) \leq 2^{-(n-\dim(V))/2},$$
which implies that $n-\dim(V) \leq 2\log(1/\tau)$.
There are thus at most $2^{n-\dim(V)} \leq 1/\tau^2$ cosets~$w+V$ of~$V$ on which~$\phi$ can be supported.
Choosing a uniformly random vector~$w\in \F_2^n$, with probability at least~$\tau^2$ we obtain the correct coset $w+V = u+V$.

Now suppose we have found the correct coset $w+V$, and consider the function~$g$ given by
$$g(x) = \one_{w+V}(x) f(x)(-1)^{x^{\mathsf T} Qx} i^{-|d\circ x|}.$$
Letting~$c\in \F_2^n$ be the vector given in equation~\eqref{eq:phi} above, we have that
$$|\widehat{g}(c)| = \big|\E_{x\in \F_2^n} f(x) \one_{w+V}(x) (-1)^{x^{\mathsf T} Qx} i^{-|d\circ x|} (-1)^{c\cdot x}\big| = 2^{-(n-\dim(V))/2} |\langle f,\, \phi\rangle| \geq \tau^2.$$
Applying the Goldreich-Levin algorithm (Theorem~\ref{thm:GL}) to the function~$g$ with $\delta = 1/2$ and~$\tau$ substituted by~$\tau^2$, we obtain a list~$B\subseteq \F_2^n$ of size at most~$4/\tau^4$ which, with probability at least~$1/2$, satisfies
$$\big\{b\in \F_2^n:\: |\widehat{g}(b)| \geq \tau^2\big\} \subseteq B \subseteq \big\{b\in \F_2^n:\: |\widehat{g}(b)| \geq \tau^2/2\big\}.$$
Taking an element~$b\in B$ uniformly at random, we then get~$b=c$ with probability at least~$\tau^4/8$.

In conclusion, the (random) stabilizer state
$$\psi(x) := 2^{(n-\dim(V))/2} \one_{w+V}(x) (-1)^{x^{\mathsf T} Qx + b\cdot x} i^{|d\circ x|}$$
thus obtained will be equal to~$\phi$ with probability at least~$\tau^6/8$.
\end{proof}

\subsection{Approximate sampling from the convoluted distribution} \label{sec:Convoluted}

We now need to obtain an algorithmic procedure for sampling from the convoluted distribution~$Q_f$.
Given that $Q_f = P_f * P_f$, this would be easily done if we could sample from the simpler distribution~$P_f$.
However, doing so presents some difficulties:
by Parseval's identity we have
$$\sum_{b\in \F_2^n} P_f(a, b) = \sum_{b\in \F_2^n} \frac{|\widehat{\Delta_a f}(b)|^2}{2^n \|f\|_2^4} = \frac{\|\Delta_a f\|_2^2}{2^n \|f\|_2^4},$$
which can significantly vary with $a\in \F_2^n$.
As such, even if we can (approximately) sample from the marginal distribution $P_f(a, \cdot)/(\sum_{b} P_f(a, b))$ for a given $a\in \F_2^n$, there seems to be no easy way to sample~$a$ from a distribution proportional to $\|\Delta_a f\|_2^2$ using few queries to~$f$.

Our solution is to ignore this difficulty and instead sample $a\in \F_2^n$ uniformly at random, followed by sampling~$b$ with probability close to $|\widehat{\Delta_a f}(b)|^2$.
We thereby obtain a sample $(a, b)$ from some probability distribution~$\nu_f$ that approximates the \emph{non-probability measure} $\|f\|_2^4 P_f$ in a \emph{fairly weak sense}.
Upon convolving~$\nu_f$ with itself, this distribution gets smoothened out and we manage to obtain the following result:

\begin{theorem}[Convoluted sampling] \label{thm:sampling}
    Let $f: \F_2^n \to \C$ be a $1$-bounded function.
    There is a randomized sampling procedure that makes $n\log n\, \poly(1/\xi)$ queries to~$f$ and, with probability at least $1-1/n^2$, samples from a probability distribution~$\mu_f$ that satisfies
    \beqn
    \big|\mu_f(F) - \|f\|_2^8 Q_f(F)\big| \leq \frac{\xi |F|}{2^n} \quad \text{for all $F\subseteq \F_2^{2n}$.}
    \eeqn
\end{theorem}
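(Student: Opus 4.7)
The plan is to form the sample as the coordinate-wise sum of two independent draws from a sub-sampler $\mathrm{Sub}(f)$ whose output distribution approximates the sub-probability measure $\|f\|_2^4 P_f$ pointwise. Since $\|f\|_2^8 Q_f = (\|f\|_2^4 P_f) * (\|f\|_2^4 P_f)$, such a pointwise approximation will convolve into exactly the $|\mu_f(F) - \|f\|_2^8 Q_f(F)| \leq \xi|F|/2^n$ bound demanded by the theorem. This ``convolution trick'' is the reason one targets $Q_f$ rather than the simpler $P_f$: the weak pointwise approximation of $P_f$ one can build from Goldreich-Levin is too lossy on general sets, but its self-convolution is automatically smoothed.

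\textbf{The subroutine.} $\mathrm{Sub}(f)$ samples $a\in\F_2^n$ uniformly and then runs the Goldreich-Levin algorithm of Theorem~\ref{thm:GL} on the derivative $\Delta_a f$ (noting that each query to $\Delta_a f$ costs two queries to $f$) with threshold $\tau = \Theta(\sqrt\xi)$ and failure probability $1/(8n^2)$, obtaining a list $L_a$ of size $O(1/\xi)$. For each $b \in L_a$, it invokes $\mathrm{FourEst}$ (Lemma~\ref{lem:fourest}) with tolerance $\eta = \Theta(\xi)$ and failure probability $1/(8n^2|L_a|)$ to obtain an estimate $c_b$ of $\widehat{\Delta_a f}(b)$. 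Finally, after mild rescaling to guarantee $\sum_{b\in L_a}|c_b|^2 \leq 1$, $\mathrm{Sub}(f)$ outputs $b$ with probability $|c_b|^2$ and a failure symbol $\bot$ with the remaining probability. Call a run \emph{correct} if the Goldreich-Levin list satisfies the conclusions of Theorem~\ref{thm:GL} and every $\mathrm{FourEst}$ estimate is within $\eta$ of the truth; a union bound over the two calls of the full algorithm gives correctness with probability at least $1 - 1/n^2$. Conditional on correctness, a straightforward case analysis---using $\bigl||c_b|^2 - |\widehat{\Delta_a f}(b)|^2\bigr| \leq 2\eta + \eta^2$ for $b \in L_a$, and $|\widehat{\Delta_a f}(b)|^2 < \tau^2$ for $b \notin L_a$---gives that the sub-probability measure $\tilde\nu_f$ on $\F_2^{2n}$ describing the non-$\bot$ output of $\mathrm{Sub}(f)$ satisfies
\begin{equation*}
\bigl\|\tilde\nu_f - \|f\|_2^4\, P_f\bigr\|_\infty \leq \frac{C\xi}{2^n}
\end{equation*}
for some absolute constant~$C$, where $\|\cdot\|_\infty$ denotes the counting-measure supremum norm on $\F_2^{2n}$.

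\textbf{Convolution and the main subtlety.} The full procedure runs $\mathrm{Sub}(f)$ twice independently; if both calls succeed it outputs the coordinate-wise sum, and on any failure it outputs a uniformly random element of $\F_2^{2n}$. Writing $\epsilon := \tilde\nu_f - \|f\|_2^4 P_f$, the algebraic identity
\begin{equation*}
\tilde\nu_f * \tilde\nu_f - \|f\|_2^8 Q_f \;=\; 2\,(\|f\|_2^4 P_f) * \epsilon \;+\; \epsilon * \epsilon
\end{equation*}
reduces matters to pointwise convolution bounds. Young's inequality $\|\mu * \nu\|_\infty \leq \|\mu\|_1 \|\nu\|_\infty$ bounds the first term by $\|f\|_2^4 \|\epsilon\|_\infty \leq \|\epsilon\|_\infty$. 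The main subtlety is the $\epsilon * \epsilon$ term: the naive bound $\|\epsilon\|_1 \leq 4^n \|\epsilon\|_\infty$ would destroy the argument. The crucial observation is that both $\tilde\nu_f$ and $\|f\|_2^4 P_f$ are non-negative sub-probability measures, so $\|\epsilon\|_1 \leq \|\tilde\nu_f\|_1 + \|f\|_2^4 \leq 2$ independently of $n$; hence $\|\epsilon * \epsilon\|_\infty \leq 2\|\epsilon\|_\infty$ and the overall pointwise discrepancy is at most $4\|\epsilon\|_\infty \leq 4C\xi/2^n$. The uniform-random failure output contributes at most $1/4^n$ per point, which is $\leq \xi/2^n$ under the mild assumption $2^n \geq 1/\xi$ (in the opposite regime one can exhaustively read $f$ with $\poly(1/\xi)$ queries and compute $Q_f$ exactly). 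Summing the pointwise errors over $F$ and rescaling $\xi$ by an absolute constant yields the advertised bound $|\mu_f(F) - \|f\|_2^8 Q_f(F)| \leq \xi|F|/2^n$; the total query cost is dominated by the two Goldreich-Levin runs and amounts to $n\log n\,\poly(1/\xi)$.
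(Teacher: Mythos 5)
Your proposal follows essentially the same route as the paper: sample $a$ uniformly, use Goldreich--Levin plus Fourier estimation to build a pointwise approximation $\nu_a$ of the marginal $|\widehat{\Delta_a f}(\cdot)|^2$, and let the self-convolution smooth the (unavoidably weak) approximation of $\|f\|_2^4 P_f$ into the stated bound on sets $F$. Your $\ell_1$--$\ell_\infty$ Young's-inequality bookkeeping for $2(\|f\|_2^4P_f)*\epsilon+\epsilon*\epsilon$ is a clean repackaging of the paper's term-by-term estimate, and your $\bot$-symbol-plus-global-uniform-padding plays the role of the paper's per-$a$ padding of $\nu_a$ to a probability measure (at the cost of your mild assumption $2^n\geq 1/\xi$, which the paper avoids).

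One parameter in your subroutine is set too coarsely, and as written the claimed conclusion of the ``straightforward case analysis'' does not follow. You take the $\mathrm{FourEst}$ tolerance to be $\eta=\Theta(\xi)$, which gives $\bigl||c_b|^2-|\widehat{\Delta_a f}(b)|^2\bigr|\leq 2\eta+\eta^2$ per coefficient; but the list $L_a$ has size $\Theta(1/\tau^2)=\Theta(1/\xi)$, so the total excess $\sum_{b\in L_a}|c_b|^2$ may exceed $1$ by a \emph{constant}, not by $O(\xi)$. The ``mild rescaling'' needed to restore $\sum_{b\in L_a}|c_b|^2\leq 1$ then perturbs individual values $|c_b|^2$ by up to a constant, destroying the pointwise bound $\|\tilde\nu_f-\|f\|_2^4P_f\|_\infty\leq C\xi/2^n$ on which the whole convolution argument rests. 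The fix is immediate: take the estimation tolerance to be $O(\xi^2)$ (the paper uses $\xi^4$ on the squared coefficients), so that the aggregate excess over the list is $O(\xi)$ and the renormalization really is mild; this changes the query count only within $\poly(1/\xi)$. A second, much smaller point: conditioning on ``correctness'' of the two runs reweights the law of $a$ by a factor $1+O(1/n^2)$, so you should take the per-call failure probability to be $\min\{1/(8n^2),\,O(\xi)\}$ for this perturbation to be absorbed into the $\xi/2^n$ error.
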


Note that, unless $\|f\|_2 = 1$, the expression $\|f\|_2^8 Q_f$ is \emph{not} a probability measure.
It would then be impossible for our samplable distribution~$\mu_f$ to approximate this measure in a more obvious way such as total variation distance.
However, since all the events that are important for our algorithm correspond to isotropic sets (and thus have size at most~$2^n$), the approximation given in Theorem~\ref{thm:sampling} is essentially just as good as total variation distance for our purposes.

\begin{proof}[ of Theorem~\ref{thm:sampling}]
Without loss of generality, we may assume that $\xi \leq 1/2$ and that~$1/\xi$ is an integer, so we do not need to deal with floor functions.
Given $a\in \F_2^n$, we can use the Goldreich-Levin algorithm (Theorem~\ref{thm:GL}) on $\Delta_a f$ to find a set $B_a \subseteq \F_2^n$ of size at most $64/\xi^2$ which, with probability at least $1-\eta$, satisfies
\begin{equation*}
    \big\{b\in \F_2^n:\: |\widehat{\Delta_a f}(b)| \geq \xi/4\big\} \subseteq B_a \subseteq \big\{b\in \F_2^n:\: |\widehat{\Delta_a f}(b)| \geq \xi/8\big\}.
\end{equation*}
This takes $n \log n\, \poly(\xi^{-1} \log(\eta^{-1}))$ queries to~$f$.

Next, we query~$f$ a further $\poly(\xi^{-1} \log(\eta^{-1}))$-many times to obtain nonnegative numbers $\{\lambda_a(b): b\in B_a\}$ such that, with probability at least $1-\eta$, we have
$$\big||\widehat{\Delta_a f}(b)|^2 - \lambda_a(b)\big| \leq \xi^4 \quad \text{for all $b\in B_a$}$$
(see Lemma~\ref{lem:fourest}).
Then, with probability at least $1-\eta$, we have
\begin{equation*}
    \sum_{b\in B_a} \lambda_a(b) \leq \sum_{b\in B_a} \big(|\widehat{\Delta_a f}(b)|^2 + \xi^4\big) \leq \|\Delta_a f\|_2^2 + \xi^4 |B_a| \leq 1 + 4\xi^2.
\end{equation*}
If $\sum_{b\in B_a} \lambda_a(b) > 1+4\xi^2$ (which happens with probability at most~$\eta$), replace the $\lambda_a(b)$ by zero.

Now we increase~$B_a$ arbitrarily to a superset $B_a' \subseteq \F_2^n$ of size $|B_a| + 4/\xi$, and define the function $\nu_a: \F_2^n \to [0, 1]$ by
$$\nu_a(b) = \frac{\lambda_a(b)}{1+4\xi^2}\, \text{ if $b\in B_a$}, \quad \nu_a(b) = \frac{\xi}{4}\Big(1- \frac{1}{1+4\xi^2} \sum_{b\in B_a} \lambda_a(b)\Big)\, \text{ if $b\in B_a' \setminus B_a$}$$
and $\nu_a(b) = 0$ if $b\notin B_a'$.
It is clear that~$\nu_a$ is a probability measure with $|\supp(\nu_a)| \leq |B_a'| \leq 68/\xi^2$ and, with probability at least $1-2\eta$, it satisfies
$$\big|\nu_a(b) - |\widehat{\Delta_a f}(b)|^2\big| \leq \frac{\xi}{4} \quad \text{for all $b\in \F_2^n$.}$$

Define the probability distribution~$\nu_f$ on~$\F_2^{2n}$ by $\nu_f(a, b) = \nu_a(b)/2^n$.
This distribution is easy to sample from:
sample $a\in \F_2^n$ uniformly at random, then compute~$\nu_a$ on $\supp(\nu_a)$ using $n \log n\, \poly(\xi^{-1} \log(\eta^{-1}))$ queries to~$f$, then sample $b\in \supp(\nu_a)$ according to~$\nu_a$.

Denote
\begin{equation*}
    A = \big\{a\in \F_2^n:\: \big|\nu_a(b) - |\widehat{\Delta_a f}(b)|^2\big| > \xi/4\, \text{ for some $b\in \F_2^n$}\big\}.  
\end{equation*}

Since $\Pr[a\in A] \leq 2\eta$ independently for all $a\in \F_2^n$, we conclude from Chernoff's bound that $\Pr\big[|A| \geq 4\eta \cdot 2^n\big] \leq 1 - 1/n^2$.
Moreover, by boundedness of~$f$ and~$\nu_a$, we have
$$\big|\nu_a(b) - |\widehat{\Delta_a f}(b)|^2\big| \leq \frac{\xi}{4} + \one_A(a) \quad \text{for all $a, b\in \F_2^n$.}$$

Now let~$F\subseteq \F_2^{2n}$ be any set.
Writing $\tilde{P}_f(a, b) := \|f\|_2^4 P_f(a, b) = 2^{-n} |\widehat{\Delta_a f}(b)|^2$, we have that
\begin{align*}
    \big|\tilde{P}_f * (\tilde{P}_f - \nu_f) (F)\big|
    &= \bigg| \sum_{c, d\in \F_2^n} \tilde{P}_f(c,d) \sum_{(a,b) \in F} \big(\tilde{P}_f(a+c, b+d) - \nu_f(a+c, b+d)\big) \bigg| \\
    &\leq\sum_{c, d\in \F_2^n} \tilde{P}_f(c,d) \sum_{(a,b) \in F} \frac{\big||\widehat{\Delta_{a+c}f}(b+d)|^2 - \nu_{a+c}(b+d)\big|}{2^n} \\
    &\leq\sum_{c, d\in \F_2^n} \tilde{P}_f(c,d) \sum_{(a,b) \in F} \frac{\xi/4 + \one_A(a+c)}{2^n} \\
    &\leq \frac{\xi}{4} \frac{|F|}{2^n} + \frac{1}{2^n} \sum_{(a,b) \in F} \sum_{c, d\in \F_2^n} \tilde{P}_f(c,d) \one_A(a+c) \\
    &= \frac{\xi}{4} \frac{|F|}{2^n} + \frac{1}{2^n} \sum_{(a,b) \in F} \sum_{c\in a+A} \sum_{d\in \F_2^n} \tilde{P}_f(c,d).
\end{align*}
Noting that
$$\sum_{d\in \F_2^n} \tilde{P}_f(c,d) = \frac{1}{2^n} \sum_{d\in \F_2^n} |\widehat{\Delta_c f}(d)|^2 = \frac{1}{2^n} \|\Delta_c f\|_2^2 \leq \frac{1}{2^n},$$
we conclude that
$$\big|\tilde{P}_f * (\tilde{P}_f - \nu_f) (F)\big| \leq \frac{\xi}{4} \frac{|F|}{2^n} + \frac{|F|}{2^n} \frac{|A|}{2^n}.$$
Similarly we obtain
$$\big|\nu_f * (\tilde{P}_f - \nu_f)(F)\big| \leq \frac{\xi}{4} \frac{|F|}{2^n} + \frac{|F|}{2^n} \frac{|A|}{2^n},$$
and thus
$$\big|\nu_f * \nu_f(F) - \tilde{P}_f * \tilde{P}_f(F)\big| \leq \frac{\xi}{2} \frac{|F|}{2^n} + 2\frac{|F|}{2^n} \frac{|A|}{2^n}.$$

Taking $\eta = \xi/16$ and denoting $\mu_f = \nu_f * \nu_f$, we conclude that, with probability at least $1-1/n^2$, we have
\beqn
    \big|\mu_f(F) - \|f\|_2^8 Q_f(F)\big| \leq \frac{\xi |F|}{2^n} \quad \text{for all $F\subseteq \F_2^{2n}$.}
\eeqn
Note that we can sample from~$\mu_f$ by sampling independent pairs $(a,b)$, $(c,d)$ according to~$\nu_f$ and returning $(a+c,\, b+d)$.
The result follows.
\end{proof}

\subsection{Finding all good stabilizer states} \label{sec:List}

Now we combine everything we have done into a single algorithm that, with high probability, outputs a bounded-size list containing all $\gamma$-approximate local maximizers of correlation~$\phi$ with~$f$ satisfying $|\langle f, \phi\rangle| \geq \tau$.

Let~$\mu_f$ be the random probability distribution from Theorem~\ref{thm:sampling} and suppose that it satisfies the conclusion of the theorem.

\subsubsection{Robust generation}
We approximately implement the algorithm from Lemma~\ref{lem:rubustgen} by substituting samples from~$Q_f$ by samples from~$\mu_f$. 
The number of samples we use now depends on the value $p = \mu_f\big(\Spec(f)\big)$.
By the relationship between~$\mu_f$ and~$Q_f$ and the fact that $\|f\|_2 \geq \tau$, an analysis similar to the proof of Lemma~\ref{lem:rubustgen} shows that with a factor of~$O(1/\tau^8)$ more samples from~$\mu_f$ we obtain a basis for a subspace of~$L\subseteq\F_2^n\times\F_2^n$ such that with high probability~$L = \mathcal L(\phi)$, provided $\xi \leq \eps\tau^8/2$.

Each sample from~$\mu_f$ then costs $n\log n\poly(1/\xi)$ queries to~$f$.
Hence, the total query complexity of this algorithm is $n^2\log n\,\poly(1/\xi)$.

\subsubsection{Non-robust generation}

Then, if~$f$ not does $\eps$-robustly generates~$\mathcal L(\phi)$, we have from Lemma~\ref{lem:nonrobust} that
\begin{align*}
    \mu_f\big(\mathcal L(\phi)\setminus\Spec(f)\big) &\geq \tau^8Q_f\big(\mathcal L(\phi)\setminus\Spec(f)\big) - \xi\\
    &\geq \frac{1}{8}\big(\gamma - \tfrac{1}{2}\big)^2\tau^{16},
\end{align*}
provided $\xi \leq \frac{1}{8}\big(\gamma - \tfrac{1}{2}\big)^2\tau^{16}$.

We approximately implement the algorithm $\textsc{LagrangeSample}(f,\tau)$ by substituting its samples from~$Q_{f_i}$ with samples from $\mu_{f_i}$.
Using an analysis similar to the proof of Theorem~\ref{thm:lagrangian}, now using $\eps = \frac{1}{8}\big(\gamma - \tfrac{1}{2}\big)^2\tau^{16}$, we get a basis for a subspace~$L$ that satisfies~$L = \mathcal L(\phi)$ with probability at least $((\gamma - 1/2)\tau)^{O(\log(1/\tau))}$.

Note that we can query each projected function~$f_i$ using~$2^i$ queries to~$f$.
A sample from $\mu_{f_i}$ therefore costs at most~$n\log n\,\poly(1/\tau, 1/(\gamma - 1/2))$ queries to~$f$.
So the generation of~$f_1,\dots,f_t$ a query complexity of the same order.
The run of the approximate robust generation algorithm described in the previous section then has query complexity $n^2\log n\,\poly(1/\tau, 1/(\gamma - 1/2))$.

\subsubsection{List-decoding stabilizer states}

Combining the algorithm from the previous section with Lemma~\ref{lem:linear} then gives an algorithm that does the following:
For any fixed $\gamma$-approximate local maximizer of correlation~$\phi$ for~$f$ with $|\langle f,\phi\rangle| \geq \tau$, makes $n^2\log n\,\poly(1/\tau, 1/(\gamma-1/2))$ queries to~$f$ and returns~$\phi$ with probability at least~$p=((\gamma - 1/2)\tau)^{O(\log(1/\tau))}$.

Since this is for a fixed $\gamma$-approximate local maximizer of correlation~$\phi$, repeating this algorithm~$O((1/p)\log(1/p))$ times gives the completely list of all such stabilizer states with good probability.
This concludes the proof of Theorem~\ref{thm:list}.

\section{Proving the main results}

We now use the list-decoding algorithm given in Theorem~\ref{thm:list} to construct our quadratic Goldreich-Levin algorithm.

\begin{proof}[ of Theorem~\ref{thm:main}]
The main idea here is to apply the algorithm from Theorem~\ref{thm:list} with suitably chosen parameters to obtain a bounded-size list containing all ``good'' stabilizer states, and then replace each of these good stabilizer states by a bounded number of quadratic phase functions.
Each such quadratic phase~$(-1)^q$ is obtained from its associated stabilizer state~$\phi$ by extending its support from (a coset of) a subspace~$V$ the whole domain~$\F_2^n$.
We end the proof by showing that, with high probability, one of the quadratic phases thus obtained has almost-maximal correlation with~$f$;
by querying~$f$ a bounded number of times, we can estimate all of these correlations and pick up the highest one.

The full algorithm is given as follows:
\begin{enumerate}
    \item Apply the algorithm from Theorem~\ref{thm:list} with parameters $\tau = \eps$ and $\gamma = 1/2 + \eps^2$.
    We obtain a list~$L$ of size $\log(1/\delta) (1/\eps)^{O(\log(1/\eps))}$ which, with probability at least $1-\delta$, contains all stabilizer states that are $(1/2+\eps^2)$-approximate local maximizers of correlation for~$f$ and have correlation at least~$\eps$ with~$f$.
    
    \item Remove from~$L$ every stabilizer state whose support has codimension larger than $2\log(1/\eps)$.
    If~$L$ becomes empty after this step, end the algorithm and return the constant function $p \equiv 0$.
    Otherwise, initialize a list~$L'$ to be empty and continue.
    
    \item For each stabilizer state $\phi\in L$, do the following:

    \noindent Write $\phi(x) = 2^{(n-d)/2} \one_{u+V}(x) (-1)^{q(x)} i^{|c\circ x|}$, where~$V$ is a subspace of dimension~$d$, $q: \F_2^n \to \F_2$ is a quadratic function and $u, c \in \F_2^n$ are vectors.
    Let~$U = \{c\}^\perp$ and let~$v\in \F_2^n$ satisfy~$c\cdot v = 1$, so that any $x\in\F_2^n$ has a representation of the form $x = y+bv$ for some $y\in U$ and $b\in \F_2$.
    Using polynomial interpolation, find the polynomial $r\in \F_2[x_1,\dots,x_n]$ of degree at most~2 such that $(-1)^{r(y+bv)} = i^{|c\circ y|  - 2|bc\circ y\circ v|}$.
    Add to~$L'$ the quadratic functions $x\mapsto r(x)+q(x)+y\cdot x$ and $x\mapsto r(x)+q(x)+ (y+c)\cdot x$, for every $y\in V^{\perp}$.
    
    \item Query~$f$ at $m = \poly(1/\eps,\, \log 1/\delta)$ randomly chosen points $x_1, \dots, x_m \in \F_2^n$ and compute
    $$\Est_q := \frac{1}{m} \sum_{j=1}^m f(x_j) (-1)^{q(x_j)}$$
    for all quadratic functions~$q$ in~$L'$.
    Output the one that attains the maximum value of~$|\Est_q|$.
\end{enumerate}

Note that, for each $\phi\in L$, the number of quadratic functions we add to~$L'$ at step~$(3)$ is at most~$2^{n-d+1}$.
Since $n-d \leq 2\log(1/\eps)$ because of step~$(2)$, it follows that the final list~$L'$ has size at most
$$2^{n-d+1} |L| \leq 2|L|/\eps^2 = \log(1/\delta) (1/\eps)^{O(\log(1/\eps))}.$$
The query and time complexities of the algorithm above thus match those stated in Theorem~\ref{thm:main}.

Denote the (random) quadratic function output by this algorithm by~$p$.
We will show that, with probability at least~$1-2\delta$, this function satisfies
\begin{equation} \label{eq:high_corr}
    |\langle f,\, (-1)^{p(\cdot)}\rangle| > \|f\|_{u^3} - \eps;
\end{equation}
this will complete the proof of the theorem.

We may focus on the case where $\|f\|_{u^3} \geq \eps$, as otherwise any quadratic function will satisfy~$\eqref{eq:high_corr}$.
We can also assume that~$\eps \leq 1/100$, which will allow us to bound certain expressions more easily.
The heart of the argument is given in the following result:

\begin{lemma} \label{lem:find_quad}
    Assume that~$\eps \leq 1/100$ and~$\|f\|_{u^3} \geq \eps$.
    Then, with probability at least~$1-\delta$, there exists a quadratic function~$q$ in~$L'$ satisfying
    $$|\langle f,\, (-1)^{q(\cdot)}\rangle| \geq \|f\|_{u^3} - \eps/2.$$
\end{lemma}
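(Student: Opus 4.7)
The plan is to prove the lemma in three stages: (1) identify a $(1/2+\eps^2)$-approximate local maximizer $\phi^\star\in\Stab(\F_2^n)$ with $|\langle f,\phi^\star\rangle|\geq\|f\|_{u^3}$; (2) verify that $\phi^\star$ has small support codimension and, via Theorem~\ref{thm:list}, appears in $L$ with probability $\geq 1-\delta$; (3) analyze the classicalization procedure of step~$(3)$ to produce a quadratic polynomial in $L'$ with correlation at least $\|f\|_{u^3}-\eps/2$.

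For stage~(1), let $q^\star$ attain the maximum defining $\|f\|_{u^3}$, so that $\phi_0:=(-1)^{q^\star(\cdot)}$ is a stabilizer state with $|\langle f,\phi_0\rangle|=\|f\|_{u^3}$. If $\phi_0$ is already a $(1/2+\eps^2)$-approximate local maximizer, set $\phi^\star := \phi_0$. Otherwise, by the failure of the approximate local maximizer property, there exists a stabilizer neighbor $\phi_{i+1}$ of the current $\phi_i$ at symplectic distance $|\langle\phi_i,\phi_{i+1}\rangle|^2=1/2$ satisfying $|\langle f,\phi_{i+1}\rangle|^2 > (1/2+\eps^2)^{-1}|\langle f,\phi_i\rangle|^2 \geq (1+\eps^2)|\langle f,\phi_i\rangle|^2$; iterate. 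Since the squared correlation is bounded above by $1$ and grows multiplicatively by a factor $\geq 1+\eps^2$ at each step, the process terminates within $O(\log(1/\eps)/\eps^2)$ iterations at a $(1/2+\eps^2)$-approximate local maximizer $\phi^\star$ with $|\langle f,\phi^\star\rangle|\geq\|f\|_{u^3}\geq\eps$.

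For stage~(2), decompose $\phi^\star(x)=2^{(n-d)/2}\one_{u+V}(x)(-1)^{q(x)}i^{|c\circ x|}$ as in~\eqref{eq:stabstate}. Then $1$-boundedness of $f$ gives
\begin{equation*}
\eps \leq |\langle f,\phi^\star\rangle| \leq 2^{(n-d)/2}\,\Exp_x\one_{u+V}(x)|f(x)| \leq 2^{-(n-d)/2},
\end{equation*}
so $n-d\leq 2\log(1/\eps)$; hence $\phi^\star$ survives the codimension filter of step~$(2)$. Theorem~\ref{thm:list} with $\tau=\eps$, $\gamma=1/2+\eps^2$, and failure probability $\delta$ then guarantees $\phi^\star\in L$ with probability at least $1-\delta$.

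For stage~(3), using the bit-count identity $|a+b|=|a|+|b|-2|ab|$ applied to $a=c\circ y$ and $b=b(c\circ v)$, a direct check shows that the interpolated polynomial $r$ has degree at most $2$, and moreover $\phi^\star(x)=2^{(n-d)/2}\one_{u+V}(x)(-1)^{q(x)+r(x)}\varepsilon(x)$, where $\varepsilon(x)\in\{1,\pm i\}$ depends only on whether $x\in U$ or $x\in U+v$. As $y$ ranges over $V^\perp$, the linear phases $(-1)^{y\cdot x}$ exhaust all characters of $\F_2^n/V$, so the $2^{n-d+1}$ polynomials $r+q+y\cdot x$ and $r+q+(y+c)\cdot x$ added to $L'$ together span, up to sign, all classical quadratic extensions of the classicalized $\phi^\star$ from its support $u+V$ to the whole space. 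A Parseval-style computation on $\F_2^n/V$, combined with the $(1/2+\eps^2)$-approximate local maximizer property of $\phi^\star$ (which caps the correlation gain achievable by any neighboring stabilizer state at the factor $(1/2+\eps^2)^{-1}$), then shows that the maximum correlation $\max_{p\in L'}|\langle f,(-1)^p\rangle|$ is at least $|\langle f,\phi^\star\rangle|-\eps/2\geq\|f\|_{u^3}-\eps/2$.

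The main obstacle is the correlation bookkeeping in stage~(3): one must carefully track the loss incurred both by discarding the non-classical $i$-phase of $\phi^\star$ on $U+v$ and by extending its support from $u+V$ to all of $\F_2^n$. The approximate local-maximizer hypothesis is essential here, since it bounds the correlation of every nearby stabilizer state (including classical extensions of $\phi^\star$), confining the aggregate loss to within the $\eps/2$ margin afforded by the lemma.
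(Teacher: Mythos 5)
Your stages (1) and (2) match the paper's proof, but stage (3) has a genuine gap: the mechanism you invoke to control the correlation loss does not work, and you have discarded the quantity that actually makes the argument close. Extending the support of $\phi^\star$ from $u+V$ to all of $\F_2^n$ via the averaging $\E_{y\in V^\perp}$ only guarantees the existence of some $y^*$ with
\begin{equation*}
\big|\E_{x} f(x)(-1)^{q(x)+y^*\cdot x} i^{-|c\circ x|}\big|^2 \;\geq\; 2^{-(n-d)}\,|\langle f,\phi^\star\rangle|^2,
\end{equation*}
and removing the non-classical phase costs a further factor $1/2$. Since $n-d$ can be as large as $2\log(1/\eps)$, this is a multiplicative loss of up to $\eps^2/2$ in squared correlation — far more than the additive $\eps/2$ margin. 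The approximate local-maximizer property of $\phi^\star$ cannot repair this: it is an \emph{upper} bound on the correlations of stabilizer states at overlap $1/2$ with $\phi^\star$, whereas the classical phases in $L'$ sit at overlap $2^{-(n-d)/2}$ (they are $n-d$ or more ``steps'' away), and in any case an upper bound on neighbors gives no lower bound on the phases you output.

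What the paper does instead, and what your write-up omits, is to retain the accumulated energy gain from the walk, $|\langle f,\phi_t\rangle|^2 \geq (1/2+\eps^2)^{-t}\,\|f\|_{u^3}^2$ (you weaken this to $|\langle f,\phi^\star\rangle|\geq\|f\|_{u^3}$ and to a per-step factor $1+\eps^2$, which destroys the bookkeeping), and to prove the structural inequality $t \geq n-d+\one_{c\notin V^\perp}$. The latter follows from the observation that consecutive states with $|\langle\phi_i,\phi_{i+1}\rangle|^2=1/2$ satisfy $\dim(\phi_{i+1})\geq\dim(\phi_i)-1$, with proportionality on the smaller support when equality of overlap forces a dimension drop, and that acquiring a non-classical component costs one additional step since $\phi_0$ is classical with full support. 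With these two facts the gain $(1/2+\eps^2)^{-t}\approx 2^{t}$ exactly cancels the $2^{-(n-d)}$ support loss and the extra $1/2$ from the $i$-phase, leaving only the benign factor $(1+2\eps^2)^{-2\log(1/\eps)-1}\geq(1-\eps/2)^2$. Without the lower bound on $t$ in terms of $n-d$ and without carrying the $\gamma^{-t}$ gain, your stage (3) cannot be completed.
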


\begin{proof}
Let $p^*: \F_2^n \to \F_2$ be a quadratic function attaining maximum correlation with~$f$:
$$\big|\E_{x\in \F_2^n} f(x) (-1)^{p^*(x)}\big| = \|f\|_{u^3}.$$
Consider the stabilizer state $\phi_0 := (-1)^{p^*(\cdot)}$, and denote $\gamma = 1/2+\eps^2$.
If~$\phi_0$ is a $\gamma$-approximate local maximizer of fidelity with~$f$, then with probability at least~$1-\delta$ it will appear in list~$L$
(and thus~$p^*$ will appear in~$L'$).

Now suppose~$\phi_0$ is not a $\gamma$-approximate local maximizer of correlation with~$f$.
There must then exist a ``neighbor'' stabilizer state~$\phi_1$ satisfying
$$|\langle\phi_0,\, \phi_1\rangle|^2 = 1/2 \quad \text{and} \quad |\langle f,\, \phi_1 \rangle|^2 > \gamma^{-1} |\langle f,\, \phi_0 \rangle|^2.$$
If~$\phi_1$ is a $\gamma$-approximate local maximizer, then it will appear in list~$L$ with probability at least~$1-\delta$.
Otherwise, we can keep choosing stabilizer states~$\phi_{i+1}$ satisfying
$$|\langle\phi_i,\, \phi_{i+1}\rangle|^2 = 1/2 \quad \text{and} \quad |\langle f,\, \phi_{i+1} \rangle|^2 > \gamma^{-1} |\langle f,\, \phi_i \rangle|^2$$
until at last we arrive at some~$\phi_t$ which is a $\gamma$-approximate local maximizer of correlation with~$f$.
This must stop at some point because $|\langle f,\, \phi_0 \rangle| = \|f\|_{u^3} \geq \eps$ and we always have $|\langle f,\, \phi_i \rangle| \leq \|f\|_2$ by Cauchy-Schwarz.
The final stabilizer state~$\phi_t$ will then appear in list~$L$ with probability at least~$1-\delta$, and it satisfies
\begin{equation} \label{eq:corr_phi_t}
    |\langle f,\, \phi_t \rangle|^2 > \gamma^{-t} |\langle f,\, \phi_0 \rangle|^2 = (1/2+\eps^2)^{-t} \|f\|_{u^3}^2.
\end{equation}

Let us write
$$\phi_t(x) = 2^{(n-d)/2} \one_{u+V}(x) (-1)^{q^*(x)} i^{|c\circ x|},$$
where~$V$ is a subspace of dimension $d$, $q^*: \F_2^n \to \F_2$ is a quadratic function and $u, c \in \F_2^n$ are vectors.
Up to replacing~$\phi_t$ by $i^{-|c\circ u|} \phi_t$, we may assume that either~$c=0$ or $c\notin V^{\perp}$.
Since
$$\eps \leq |\langle f,\, \phi_0 \rangle| \leq |\langle f,\, \phi_t \rangle| \leq 2^{(n-d)/2} \E_{x\in \F_2^n} \one_{u+V}(x) |f(x)| = 2^{-(n-d)/2},$$
we conclude that $n-d \leq 2\log(1/\eps)$, and so~$\phi_t$ will not get removed in step~$(2)$ of the algorithm.

Next, we relate the dimension~$d$ of~$V$ with the number~$t$ of steps we took until we arrived at~$\phi_t$.
For each $0\leq i\leq t$, denote by $\dim(\phi_i)$ the dimension of the subspace on which the $i$-th stabilizer state~$\phi_i$ is supported.
Since $|\langle\phi_i,\, \phi_{i+1}\rangle|^2 = 1/2$ while $|\phi_j(\cdot)| = 2^{(n-\dim(\phi_j))/2} \one_{\supp(\phi_j)}(\cdot)$, we conclude that
$\dim(\phi_{i+1}) \geq \dim(\phi_i)-1$.
Moreover, in the case where $\dim(\phi_{i+1}) = \dim(\phi_i)-1$, the two stabilizer states~$\phi_i$ and~$\phi_{i+1}$ must be proportional to one another inside the support of~$\phi_{i+1}$.
As~$\phi_0 = (-1)^{p^*(\cdot)}$ while~$\phi_t$ has a nontrivial non-classical component~$i^{|c\circ x|}$ if $c\notin V^{\perp}$, it follows that $\dim(\phi_t) \geq \dim(\phi_0) - t + \one_{c\notin V^{\perp}}$.
We conclude that $t \geq n-d + \one_{c\notin V^{\perp}}$.

Using the fact that $\E_{y\in V^{\perp}} (-1)^{y\cdot x} = \one_V(x)$, we see that
\begin{align*}
    |\langle f,\, \phi_t \rangle|^2
    &= 2^{n-d} \big|\E_{x\in \F_2^n} f(x) \one_{V}(x+u) (-1)^{q^*(x)} i^{-|c\circ x|}\big|^2 \\
    &= 2^{n-d} \big|\E_{y\in V^{\perp}} \E_{x\in \F_2^n} f(x) (-1)^{y\cdot (x+u)} (-1)^{q^*(x)} i^{-|c\circ x|}\big|^2 \\
    &\leq 2^{n-d} \E_{y\in V^{\perp}} \big|\E_{x\in \F_2^n} f(x) (-1)^{q^*(x) + y\cdot x} i^{-|c\circ x|}\big|^2,
\end{align*}
and thus there exists some $y^* \in V^{\perp}$ such that
\begin{equation} \label{eq:ystar}
    \big|\E_{x\in \F_2^n} f(x) (-1)^{q^*(x) + y^* \cdot x} i^{-|c\circ x|}\big|^2 \geq 2^{-(n-d)} |\langle f,\, \phi_t \rangle|^2.
\end{equation}
Recall that, if $\phi_t \in L$ (which happens with probability at least~$1-\delta$), then the quadratic functions $x\mapsto q^*(x) + y^*\cdot x$ and $x\mapsto q^*(x) + (y^*+c)\cdot x$ will both be in~$L'$.
It then suffices to show that one of these functions has correlation at least $\|f\|_{u^3}-\eps/2$ with~$f$.

We separate the proof into two cases:
$c=0$ or $c\notin V^{\perp}$.
If~$c=0$, then by~\eqref{eq:corr_phi_t} and~\eqref{eq:ystar} we have
$$\big|\E_{x\in \F_2^n} f(x) (-1)^{q^*(x) + y^* \cdot x}\big|^2 \geq 2^{-(n-d)} (1/2+\eps^2)^{-t} \|f\|_{u^3}^2.$$
Using that $n-d \leq 2\log(1/\eps)$ and $t\geq n-d$, we conclude that
$$2^{-(n-d)} (1/2+\eps^2)^{-t} \geq 2^{-(n-d)} (1/2+\eps^2)^{-(n-d)} \geq (1+2\eps^2)^{-2\log(1/\eps)}.$$
This last expression is larger than $(1-\eps/2)^2$ when $\eps \leq 1/100$, which implies that $\big|\E_{x\in \F_2^n} f(x) (-1)^{q^*(x) + y^* \cdot x}\big| \geq \|f\|_{u^3} - \eps/2$ as wished.

In the case where $c\notin V^{\perp}$, let~$U \subseteq \F_2^n$ be the subspace orthogonal to~$c$, and let~$v \in \F_2^n\setminus U$.
Then, for any function $g: \F_2^n \to \C$, we have
\begin{align}
    \big|\E_{x\in \F_2^n} g(x) i^{-|c\circ x|}\big|^2
    &= \big|\E_{b\in \F_2} \E_{y\in U} g(y+bv) i^{-|c\circ (y+bv)|} \big|^2.\label{eq:xgisfr}
\end{align}

We have that $|c\circ (y+bv)| = |c\circ y| + |bc\circ v| - 2|bc\circ y\circ v|$.
Define $h:\F_2^n\to \C$ by $h(y+bv) = i^{|c\circ y|  - 2|bc\circ y\circ v|}$.
Then~$h$ is a classical polynomial phase function of degree at most~2.
In other words, there exists a polynomial $r\in \F_2[x_1,\dots,x_n]$ of degree at most~2 such that~$h(x) = (-1)^{r(x)}$.
By the triangle inequality and the Cauchy-Schwarz inequality, we get that~\eqref{eq:xgisfr} is bounded from above by
\begin{equation*}
    \Exp_{b\in \F_2}\big|\Exp_{y\in U}g(y+bv)(-1)^{r(y+bv)}\big|^2
\end{equation*}

By Parseval's identity on~$\F_2$ we get
\begin{align*}
    \E_{b\in \F_2} |\E_{y\in U} g(y+bv)(-1)^{r(y+bv)}|^2
    &= \sum_{a\in \F_2} |\E_{b\in \F_2} \E_{y\in U} g(y+bv) (-1)^{r(y+bv) + ab}|^2 \\
    &= \sum_{a\in \F_2} |\E_{b\in \F_2} \E_{y\in U} g(y+bv) (-1)^{r(y+bv) + a c\cdot (y+bv)}|^2 \\
    &= |\E_{x\in \F_2^n} g(x)(-1)^{r(x)}|^2 + |\E_{x\in \F_2^n} g(x) (-1)^{r(x) + c\cdot x}|^2.
\end{align*}
We conclude that
$$|\E_{x\in \F_2^n} g(x)(-1)^{r(x)}|^2 + |\E_{x\in \F_2^n} g(x) (-1)^{r(x) + c\cdot x}|^2 \geq \big|\E_{x\in \F_2^n} g(x) i^{-|c\circ x|}\big|^2.$$
Using this inequality for the function $g(x) = f(x) (-1)^{q^*(x) + y^*\cdot x}$, we obtain
\begin{align*}
    \max\big\{|\E_{x\in \F_2^n} f(x) (-1)^{r(x) + q^*(x) + y^* \cdot x}|^2,\: &|\E_{x\in \F_2^n} f(x) (-1)^{r(x)+q^*(x) + (y^*+c) \cdot x}|^2 \big\} \\
    &\geq \frac{1}{2} \big|\E_{x\in \F_2^n} f(x) (-1)^{q^*(x) + y^* \cdot x} i^{-|c\circ x|}\big|^2 \\
    &\geq \frac{1}{2^{n-d+1}} |\langle f,\, \phi_t \rangle|^2 \\
    &\geq \frac{1}{2^{n-d+1}} \Big(\frac{1}{1/2+\eps^2}\Big)^t \|f\|_{u^3}^2,
\end{align*}
where we used inequalities~\eqref{eq:ystar} and~\eqref{eq:corr_phi_t} respectively.
Since in this case we have $t\geq n-d+1$ and $n-d\leq 2\log(1/\eps)$, the last expression is at least
$$\Big(\frac{1}{1 + 2\eps^2}\Big)^{2\log(1/\eps)+1} \|f\|_{u^3}^2 \geq (1-\eps/2)^2 \|f\|_{u^3}^2$$
(where we use that~$\eps \leq 1/100$).
This concludes the proof.
\end{proof}

Using our bound on the size of the list~$L'$ and the Chernoff bound we conclude that, with probability at least~$1-\delta$, we have
$$\big|\Est_q - \langle f,\, (-1)^{q(\cdot)}\rangle\big| < \eps/4 \quad \text{for all $q\in L'$.}$$
If this is the case, then
$$|\langle f,\, (-1)^{p(\cdot)}\rangle| > |\Est_p| - \frac{\eps}{4} = \max_{q \in L'} |\Est_q| - \frac{\eps}{4} > \max_{q \in L'} |\langle f,\, (-1)^{q(\cdot)}\rangle| - \frac{\eps}{2}.$$
By Lemma~\ref{lem:find_quad} we have that $\max_{q \in L'} |\langle f,\, (-1)^{q(\cdot)}\rangle| \geq \|f\|_{u^3} - \eps/2$ with probability at least~$1-\delta$.
This implies that inequality~\eqref{eq:high_corr} holds with probability at least~$1-2\delta$, as wished.
\end{proof}

The optimal self-correction of quadratic Reed-Muller codes easily follows from the quadratic Goldreich-Levin algorithm:

\begin{proof}[ of Corollary~\ref{cor:ReedMuller}]
Query access to a Boolean function $f: \F_2^n \to \F_2$ gives query access to the bounded function $g(x) := (-1)^{f(x)}$.
Note that, for any Boolean function $q: \F_2^n \to \F_2$, we have
\begin{equation} \label{eq:corr_to_dist}
    \E_{x\in \F_2^n} g(x) (-1)^{q(x)} = 1 - 2\Pr_{x\in \F_2^n}[f(x) \neq q(x)] = 1-2\dist(f,\, q).
\end{equation}
Applying Theorem~\ref{thm:main} to~$g$ (with~$\eps$ substituted by~$\eps/4$ and $\delta=1/6$), we obtain a quadratic polynomial $p: \F_2^n \to \F_2$ which, with probability at least~$5/6$, satisfies
$$\big|\Exp_{x\in \F_2^n} g(x) (-1)^{p(x)}\big| > \max_{q \text{ quadratic}} \big|\Exp_{x\in \F_2^n} g(x) (-1)^{q(x)}\big| - \eps/4.$$
Using~$O(1/\eps^2)$ further queries to~$g$, we can differentiate (with probability at least~$5/6$) between the two cases
$$\Exp_{x\in \F_2^n} g(x) (-1)^{p(x)} \geq \eps/8 \quad \text{and} \quad \Exp_{x\in \F_2^n} g(x) (-1)^{p(x)} < -\eps/8;$$
in the later case, we replace~$p$ by its negation~$\one+p$.
The guarantees stated in the corollary immediately follow from those of Theorem~\ref{thm:main} together with equation~\eqref{eq:corr_to_dist}.
\end{proof}

Next we use the recent resolution of Marton's conjecture~\cite{GowersGMT2025} to obtain a polynomial algorithmic inverse theorem for the Gowers $U^3$-norm.

\begin{proof}[ of Corollary~\ref{cor:PGI}]
By \cite[Corollary~1.6]{GowersGMT2025} there exists a constant $c>1$ such that, whenever a bounded function $f: \F_2^n \to \C$ satisfies $\|f\|_{U^3} \geq \gamma$, there is a quadratic polynomial $q: \F_2^n \to \F_2$ with
$\big|\E_{x\in \F_2^n} f(x) (-1)^{q(x)}\big| \geq (\gamma/2)^c$.
Apply Theorem~\ref{thm:main} to~$f$ with $\eps = \gamma^c/2^{c+1}$ and $\delta = 1/3$;
we obtain a quadratic polynomial~$p$ which, with probability at least~$2/3$, satisfies
$$\big|\E_{x\in \F_2^n} f(x) (-1)^{p(x)}\big| > \big|\E_{x\in \F_2^n} f(x) (-1)^{q(x)}\big| - \frac{\gamma^c}{2^{c+1}} \geq (\gamma/2)^{c+1}.$$
The result follows, with $C = c+1$.
\end{proof}

Finally, we can obtain our algorithmic decomposition theorem by combining the last corollary with the framework developed by Tulsiani and Wolf~\cite{TulsianiW2014}.

\begin{proof}[ of Corollary~\ref{cor:decomposition}]
Denote $B := 1/(2\eps)$.
Corollary~\ref{cor:PGI} provides an algorithm which, when given query access to a function $f: \F_2^n \to \{z\in \C:\: |z| \leq B\}$ satisfying $\|f\|_{U^3} \geq \eps$, outputs with probability~$1-\delta$ a quadratic function $p: \F_2^n \to \F_2$ such that $|\langle f,\, (-1)^{q}\rangle| \geq \eps^{2C}$.
This algorithm makes $n^2 \log n\, \poly(\eps^{-1} \log(\delta^{-1}))$ queries to~$f$ and takes~$O(n^3)$ time.

The result now follows by applying \cite[Theorem~3.1]{TulsianiW2014} to this algorithm and the norm~$\|\cdot\|_{U^3}$.
\end{proof}

We note that, by replacing our use of \cite[Theorem~3.1]{TulsianiW2014} by \cite[Theorem~3.3]{kim2023cubic}, it is possible to do away with the $L_1$-error function~$h$ in this decomposition at the price of increasing the number of quadratic phase functions to $\exp(\poly(1/\eps))$.
It is at present unclear whether there exists a decomposition that attains the best of both worlds, even if one is to ignore the algorithmic aspects.

\subsection*{Acknowledgments}

JB was supported by the Dutch Research Council (NWO) as part of the NETWORKS programme (Grant No.~024.002.003).
DCS was supported by the Engineering and Physical Sciences Research Council grant on Robust and Reliable Quantum Computing (RoaRQ), Investigation 005 [grant reference EP/W032635/1].

\end{document}